\author{
  Michael Blondin\qquad Javier Esparza\qquad Stefan Jaax${}^\star$\qquad Philipp J. Meyer${}^\star$ \\[15pt]
  Technische Universität München \\
  \normalsize
  \texttt{\{blondin, esparza, jaax, meyerphi\}@in.tum.de}
}
\title{\vspace{-10pt}\textbf{Towards Efficient Verification of Population Protocols}\vspace{10pt}}
\date{}
\begin{document}

\maketitle

\begin{abstract}
  Population protocols are a well established model of computation by
  anonymous, identical finite state agents. A protocol is
  well-specified if from every initial configuration, all fair
  executions reach a common consensus. The central verification
  question for population protocols is the \emph{well-specification
    problem}: deciding if a given protocol is well-specified. Esparza
  et al. have recently shown that this problem is decidable, but with
  very high complexity: it is at least as hard as the Petri net
  reachability problem, which is \EXPSPACE-hard, and for which only
  algorithms of non-primitive recursive complexity are currently
  known.

  In this paper we introduce the class \WSSS of well-specified
  strongly-silent protocols and we prove that it is suitable for
  automatic verification. More precisely, we show that \WSSS has the
  same computational power as general well-specified protocols, and
  captures standard protocols from the literature. Moreover, we show
  that the membership problem for \WSSS reduces to solving boolean
  combinations of linear constraints over $\N$. This allowed us to
  develop the first software able to automatically prove
  well-specification for \emph{all} of the infinitely many possible
  inputs.
\end{abstract}

\section{Introduction}\label{sec:intro}

Population protocols~\cite{AADFP04,AADFP06} are a model of distributed
computation by many anonymous finite-state agents. 
They were initially introduced to model networks of passively mobile
sensors~\cite{AADFP04,AADFP06}, but are now also used to describe chemical 
reaction networks (see e.g. \cite{NB15,DBLP:conf/wdag/ChenCDS14}).

In each computation step of a population protocol, a fixed number of
agents are chosen nondeterministically, and their states are updated
according to a joint transition function. Since agents are anonymous
and identical, the global state of a protocol is completely determined
by the number of agents at each local state, called a configuration. A
protocol computes a boolean value $b$ for a given initial
configuration $C_0$ if in all fair executions starting at $C_0$, all
agents eventually agree to $b$ --- so, intuitively, population
protocols compute by reaching consensus under a certain fairness
condition. A protocol is \emph{well-specified} if it computes a value for each of
its infinitely many initial configurations (also
called \emph{inputs}).  The predicate computed by a protocol is the
function that assigns to each input the corresponding consensus
value. In a famous series of papers,
Angluin \etal~\cite{AADFP04,AADFP06} have shown that well-specified
protocols compute exactly the predicates definable in Presburger
arithmetic~\cite{AADFP04,AADFP06,AAE06,AAER07}.

In this paper we search for efficient algorithms for the
\emph{well-specification problem}: Given a population protocol, 
is it well-spe\-cified? This is a question about an infinite family
of finite-state systems. Indeed, for every input the semantics of a
protocol is a finite graph with the reachable configurations as
nodes. Deciding if the protocol reaches consensus for a fixed input
only requires to inspect one of these graphs, and can be done
automatically using a model checker. This approach has been followed
in a number of papers~\cite{PLD08,SLDP09,CMS10,guidelines}, but it only 
shows well-specification for some inputs.  There has also been work in formalizing
well-specification proofs in interactive theorem provers~\cite{DM09},
but this approach is not automatic: a human prover must first come up
with a proof for each particular protocol.

Recently, the second author, together with other co-authors, has
shown that the well-speci\-fi\-ca\-tion problem is
decidable~\cite{EGLM15}. That is, there is an algorithm that decides
if for all inputs the protocol stabilizes to a boolean value. The
proof uses deep results of the theory of Petri nets, a model very
close to population protocols. However, the same paper shows that the
well-specification problem is at least as hard as the reachability
problem for Petri nets, a famously difficult problem.  More precisely,
the problem is known to be
\EXPSPACE-hard, and all known algorithms for it have non-primitive
recursive complexity~\cite{DBLP:journals/siglog/Schmitz16}. 
In particular, there are no stable implementations of any of these algorithms, 
and they are considered impractical for nearly all applications.

For this reason, in this paper we search for a class of
well-spe\-cified protocols satisfying three properties:
\begin{itemize}
\item[(a)] \emph{No loss of expressive power}: the class should
  compute all Presburger-definable predicates.

\item[(b)] \emph{Natural}: the class should contain most protocols
  discussed in the literature.

\item[(c)] \emph{Feasible membership problem}: membership for
  the class \linebreak should have reasonable complexity.
\end{itemize}

The class \WS of all well-specified protocols obviously satisfies \linebreak
~(a) and ~(b), but not (c). So we introduce a new class \WSSS, standing for
\emph{Well-Specified Strongly Silent} protocols. We show that \WSSS
still satisfies~(a) and~(b), and then prove that the membership
problem for \WSSS is in the complexity class \DP{}; the class of
languages $L$ such that $L = L_1 \cap L_2$ for some languages
$L_1 \in \NP$ and $L_2 \in \coNP$. This is a dramatic improvement with
respect to the \EXPSPACE-hardness of the membership problem for
\WS. 

Our proof that the problem is in \DP{} reduces membership for \WSSS to
checking (un)satisfiability of two systems of boolean combinations of
linear constraints over the natural numbers. This allowed us to
implement our decision procedure on top of the constraint solver
Z3~\cite{DBLP:conf/tacas/MouraB08}, yielding the first software able
to automatically prove well-specification for \emph{all} inputs. We
tested our implementation on the families of protocols studied
in~\cite{PLD08,SLDP09,CMS10,guidelines}. These
papers prove well-specification for some inputs of protocols with up to 9
states and 28 transitions. Our approach proves well-specification for all
inputs of protocols with up to 20 states in less than one second, and
protocols with 70 states and 2500 transitions in less than one
hour. In particular, we can automatically prove well-specification for
\emph{all} inputs in less time than previous tools needed to check
\emph{one single large input}.

The verification problem for population protocols naturally 
divides into two parts: checking 
that a given protocol is well specified, and checking that a given well-specified protocol 
computes the desired predicate. While in this paper we are concerned with
well-specification, our implementation is already able to
solve the second problem for all the families of protocols described above. 
This is achieved by adding to the second system of constraints 
used to check well-specification further
linear constraints describing the sets of input configurations
for which the protocol should return {\it true} or {\it false}. An 
extension of the software that, given a protocol 
and an arbitrary Presburger predicate, checks whether the protocol computes 
the predicate, requires to solve implementation problems related to Presburger
arithmetic, and is left for future research.

The paper is organized as follows. Section \ref{sec:prelims} contains
basic definitions. Section \ref{sec:silent} introduces an intermediate
class \WSS of silent well-specified protocols, and shows that its
membership problem is still as hard as for \WS. In
Section~\ref{sec:strongly:silent}, we characterize \WSS in terms of
two properties which are then strengthened to define our new class
\WSSS. We then show that the properties defining \WSSS can be
tested in \NP{} and \coNP, and so that membership for \WSSS is in
\DP. Section~\ref{sec:strong:protocols} proves that \WSSS-protocols
compute all Presburger predicates.  Section~\ref{sec:experimental}
reports on our experimental results, and Section~\ref{sec:conclusion}
presents conclusions.

\section{Preliminaries}\label{sec:prelims}

\paragraph{Multisets.} A \emph{multiset} over a finite set $E$ is a mapping $M : E \to \N$. The set of all multisets over $E$ is denoted $\N^E$. For every $e
\in E$, $M(e)$ denotes the number of occurrences of $e$ in $M$. We
sometimes denote multisets using a set-like notation, \eg
$\multiset{f, g, g}$ is the multiset $M$ such that $M(f) = 1$, $M(g) =
2$ and $M(e) = 0$ for every $e \in E \setminus \{f, g\}$. The
\emph{support} of $M \in \N^E$ is $\supp{M} \defeq \{e \in E : M(e) >
0\}$. The \emph{size} of $M \in \N^E$ is $|M| \defeq \sum_{e \in E}
M(e)$. Addition and comparison are extended to multisets
componentwise, \ie $(M \mplus M')(e) \defeq M(e) + M'(e)$ for every $e
\in E$, and $M \leq M' \defiff M(e) \leq M(e)$ for every $e \in E$. We
define multiset difference as $(M \mminus M')(e) \defeq \max(M(e) -
M'(e), 0)$ for every $e \in E$. The empty multiset is denoted
$\vec{0}$, and for every $e \in E$ we write $\vec{e} \defeq
\multiset{e}$. 
\paragraph{Population protocols.}

A \emph{population} $P$ over a finite set $E$ is a multiset $P \in
\N^E$ such that $|P| \geq 2$. The set of all populations over $E$ is
denoted by $\pop{E}$. A \emph{population protocol} is a tuple $\PP =
(Q, T, \Sigma, I, O)$ where
\begin{itemize}
\item $Q$ is a non-empty finite set of \emph{states},
\item $T \subseteq Q^2 \times Q^2$ is a set of \emph{transitions} such
  that for every $(p, q) \in Q^2$ there exists at least a pair $(p',
  q') \in Q^2$ such that $(p, q, p', q') \in T$,
\item $\Sigma$ is a non-empty finite \emph{input alphabet},
\item $I : \Sigma \to Q$ is the \emph{input function} mapping input
  symbols to states,
\item $O : Q \to \{0, 1\}$ is the \emph{output function} mapping
  states to boolean values.
\end{itemize}

Following the convention of previous papers, we call the populations
of $\pop{Q}$ \emph{configurations}. Intuitively, a configuration $C$
describes a collection of identical finite-state \emph{agents} with
$Q$ as set of states, containing $C(q)$ agents in state $q$ for every
$q \in Q$, and at least two agents in total.

Pairs of agents\footnote{While protocols only model interactions
  between two agents, $k$-way interactions for a fixed $k > 2$ can be
  simulated by adding additional states.} interact using transitions.
For every $t = (p, q, \linebreak p', q') \in T$, we write 
$(p, q) \mapsto (p', q')$ 
to denote $t$, and we define $\pre{t} \defeq \multiset{p, q}$ and
$\post{t} \defeq \multiset{p', q'}$. For every configuration $C$ and
transition $t \in T$, we say that $t$ is \emph{enabled} at $C$ if $C
\geq \pre{t}$. Note that by definition of $T$, every configuration
enables at least one transition. A transition $t \in T$ enabled at $C$
can \emph{occur}, leading to the configuration $C \mminus \pre{t} +
\post{t}$. Intuitively, a pair of agents in states $\pre{t}$ move to
states $\post{t}$. We write $C \trans{t} C'$ to denote that $t$ is
enabled at $C$ and that its occurrence leads to $C'$. A transition $t
\in T$ is \emph{silent} if $\pre{t} = \post{t}$, 
\ie, if it cannot change the current configuration. 

For every sequence of transitions $w = t_1 t_2 \cdots t_k$, we write
$C \trans{w} C'$ if there exists a sequence of configurations
$C_0, C_1, \ldots, C_k$ such that $C = C_0 \trans{t_1} C_1 \cdots
\trans{t_k} C_k = C'$. We also write $C \trans{} C'$ if $C \trans{t}
C'$ for some transition $t \in T$, and call $C \trans{} C'$ a
\emph{step}. We write $C \trans{*} C'$ if $C \trans{w} C'$ for some $w
\in T^*$. We say that $C'$ is \emph{reachable from} $C$ if $C
\trans{*} C'$. An \emph{execution} is an infinite sequence of
configurations $C_0 C_1 \cdots$ such that $C_i \trans{} C_{i+1}$ for
every $i \in \N$. An execution $C_0 C_1 \cdots$ is \emph{fair} if for
every step $C \trans{} C'$, if $C_i = C$ for infinitely many indices
$i \in \N$, then $C_j = C$ and $C_{j+1} = C'$ for infinitely many
indices $j \in \N$. We say that a configuration $C$ is
\begin{itemize}
\item \emph{terminal} if $C \trans{*} C'$ implies $C = C'$, \ie, if
  every transition enabled at $C$ is silent;
\item a \emph{consensus configuration} if $O(p) = O(q)$ for every $p,
  q \in \supp{C}$.
\end{itemize}
For every consensus configuration $C$, let $O(C)$ denote the unique
output of the states in $\supp{C}$. An execution $C_0
C_1 \cdots$ \emph{stabilizes} to $b \in \{0, 1\}$ if there exists
$n \in \N$ such that $C_i$ is a consensus configuration and 
$O(C_i) = b$ for every $i \geq n$.

\paragraph{Predicates computable by population protocols.}

Every \emph{input} $X \in \pop{\Sigma}$ is mapped to the configuration
$I(X) \in \pop{Q}$ defined by
\begin{align*}
  I(X)(q) &\defeq \sum_{\substack{\sigma \in \Sigma \\ I(\sigma) = q}}
  X(\sigma) \text{ for every } q \in Q.
\end{align*}
A configuration $C$ is said to be \emph{initial} if $C = I(X)$ for
some input $X$. A population protocol is \emph{well-specified} if for
every input $X$, there exists $b \in \{0, 1\}$ such that every fair
execution of $\PP$ starting at $I(X)$ stabilizes to $b$. We say that
$\PP$
\emph{computes}\label{def:computes} a predicate $\varphi$ if for every
input $X$, every fair execution of $\PP$ starting at $I(X)$ stabilizes
to $\varphi(X)$. It is readily seen that $\PP$ computes a predicate if
and only if it is well-specified.

\begin{example}\label{ex:majority}
  We consider the majority protocol of \cite{AAE06} as a running
  example. Initially, agents of the protocol can be in either state
  $A$ or $B$. The protocol computes whether there are at least as many
  agents in state $B$ as there are in state $A$. The states and the
  input alphabet are $Q = \set{A, B, a, b}$ and $\Sigma = \{A, B\}$
  respectively. The input function is the identity function, and the
  output function is given by $O(B) = O(b) = 1$ and $O(A) = O(a) =
  0$. The set of transitions $T$ consists of:
  \begin{align*}
    t_{AB} & =  (A, B) \mapsto (a, b) \\ 
    t_{Ab} & =  (A, b) \mapsto (A, a) \\ 
    t_{Ba} & =  (B, a) \mapsto (B, b) \\ 
    t_{ba} & =  (b, a) \mapsto (b, b)
  \end{align*}
  and of silent transitions for the remaining pairs of states.
  Transition $t_{AB}$ ensures that every fair execution eventually reaches a
  configuration $C$ such that $C(A) = 0$ or $C(B) = 0$. If $C(A) = 0
  =C(B)$, then there were initially equally many agents in $A$ and
  $B$. Transition $t_{ba}$ then acts as tie breaker, resulting in a
  terminal configuration populated only by $b$. If, say, $C(A) > 0$
  and $C(B) = 0$, then there were initially more $A$s than $B$s, and
  $t_{Ab}$ ensures that every fair execution eventually reaches a
  terminal configuration populated only by $A$ and $a$.
\end{example}

\section{Well-specified silent protocols}\label{sec:silent}
Silent protocols\footnote{Silent protocols are also referred to as 
\emph{protocols with stabilizing states} and silent transitions are 
called \emph{ineffective} in
\cite{spirakis_stabilizing1, spirakis_stabilizing2}.} 
were introduced in \cite{dolev1999memory}. 
Loosely speaking, a protocol is silent if communication
between agents eventually ceases, \ie if every fair execution
eventually stays in the same configuration forever. Observe that a
well-specified protocol need not be silent: fair executions may keep
alternating from a configuration to another as long as they are
consensus configurations with the same output.

More formally, we say that an execution $C_0 C_1 \cdots$ is
\emph{silent} if there exists $n \in \N$ and a configuration $C$ 
such that $C_i = C$ for every
$i \geq n$. A population protocol $\PP$ is \emph{silent} if every fair
execution of $\PP$ is silent, regardless of the starting
configuration. We call a protocol that is well-specified and silent a
\emph{\WSS-protocol}, and denote by \WSS the set of all
\WSS-protocols.

\begin{example}
  As explained in Example~\ref{ex:majority}, every fair execution of
  the majority protocol is silent. This implies that the protocol is
  silent. If, for example, we add a new state $b'$ where $O(b') = 1$,
  and transitions $(b, b) \mapsto (b', b'), (b', b') \mapsto (b, b)$,
  then the protocol is no longer silent since the execution where two
  agents alternate between states $b$ and $b'$ is fair but not silent.
\end{example}

Being silent is a desirable property. While in arbitrary protocols it
is difficult to determine if an execution has already stabilized, in
silent protocols it is simple: one just checks if the current
configuration only enables silent transitions. Even though it is was
not observed explicitely, the protocols introduced in ~\cite{AADFP04} 
to characterize the expressive power of population protocols belong 
to \WSS. Therefore, \WSS-protocols can compute the same predicates 
as general ones.

Unfortunately, a slight adaptation
of~\cite[Theorem~10]{DBLP:conf/fsttcs/EsparzaGLM16} shows that the
complexity of the membership problem for \WSS-protocols is still as high
as for the general case:

\begin{restatable}{proposition}{propSilentHardness}\label{prop:silenthardness}
  The reachability problem for Petri nets is reducible in polynomial
  time to the membership problem for \WSS. In particular, membership
  for \WSS is \EXPSPACE-complete.
\end{restatable}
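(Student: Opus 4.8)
The plan is to reuse the reduction underlying Theorem~10 of~\cite{DBLP:conf/fsttcs/EsparzaGLM16}, which from a Petri net $N$ with initial and target markings $M_0, M_f$ produces in polynomial time a protocol $\PP$ that is well-specified if and only if $M_f$ is \emph{not} reachable from $M_0$ in $N$. Since $\WSS \subseteq \WS$, the implication ``$M_f$ reachable $\Rightarrow \PP \notin \WS$'' already gives $\PP \notin \WSS$ for free, so the only new obligation is to guarantee that in the unreachable case the protocol is additionally \emph{silent}, so that $\PP \in \WS$ upgrades to $\PP \in \WSS$. First I would recall the simulation gadget: a leader agent initialises a fixed copy of $M_0$ from an arbitrary input, the surplus agents form a reservoir of ``blank'' tokens compensating for the non-conservation of tokens in $N$, and each Petri transition is realised by a short sequence of binary, count-preserving interactions. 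Reaching a configuration that encodes $M_f$ latches a global flag whose output bit differs from the default.

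The obstacle specific to \WSS is silence: a Petri net may admit infinite firing sequences that never reach $M_f$, and a faithful simulation of such a run would be a non-silent fair execution even when $M_f$ is unreachable, destroying silence. The key adaptation is therefore to make the protocol silent by construction through a \emph{budget} mechanism. I would let the input size control a supply of budget tokens and charge one budget token per simulated Petri step, so that every fair execution can fire only finitely many non-silent transitions and hence must reach a terminal configuration. Crucially, $M_f$ is reachable in $N$ if and only if it is reachable by some finite firing sequence, whose length is covered by the budget provided by a sufficiently large input; since well-specification quantifies over \emph{all} inputs, this preserves the reachability equivalence while forcing every single execution to halt.

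With this budget the protocol is silent on every input. If $M_f$ is unreachable then the flag is never set, every fair execution halts in a consensus configuration carrying the default output, and $\PP \in \WSS$. If $M_f$ is reachable then for a large enough input there is a terminal configuration with the flag set (reached via the witnessing firing sequence, after burning the remaining budget through a dummy decrement) and also a terminal configuration with the flag unset (reached by wastefully exhausting the budget without simulating to $M_f$); the two carry different outputs, so two fair executions stabilise to different values and $\PP \notin \WS$, hence $\PP \notin \WSS$. This establishes $\PP \in \WSS$ iff $M_f$ is unreachable, i.e.\ a polynomial-time reduction from non-reachability --- equivalently, since \EXPSPACE{} is closed under complement, from reachability --- to membership in \WSS, whence \EXPSPACE-hardness via Lipton's bound; the matching upper bound is inherited from the decidability argument for \WS{} in~\cite{EGLM15}, giving \EXPSPACE-completeness. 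The delicate point to get right is the interaction between the budget and token conservation: the reservoir must always hold enough blank agents to run the entire budgeted simulation, which I would ensure by sizing both the reservoir and the budget as fixed linear functions of the input size so that neither is depleted prematurely.
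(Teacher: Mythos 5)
Your construction hinges on a step that cannot be implemented in the population protocol model: ``reaching a configuration that encodes $M_f$ latches a global flag.'' Latching a flag exactly when the simulated net sits at $M_f$ is a global test that requires, in particular, checking that every place outside the support of $M_f$ is \emph{empty}. Pairwise interactions can test for the \emph{presence} of agents in given states, never for their absence, so no transition set can distinguish ``the configuration encodes exactly $M_f$'' from ``the configuration covers $M_f$ but has extra tokens''; if such a zero-test gadget existed, the reduction would be easy and the whole difficulty of the statement would evaporate. This detection problem is precisely what the paper's proof is engineered to circumvent, and none of that machinery appears in your proposal: the paper first applies Hack's reduction~\cite{Hack76} to single-place-zero-reachability, normalizes the net with widgets, and then \emph{reverses} it, so the question becomes whether the singleton marking $\vec{p_0}$ is reachable from some marking $M$ whose zero-conditions (on $\hat{p}$, $p_0$ and the auxiliary places) are enforced for free, simply by excluding those states from the input alphabet. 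Reaching the target is then never ``detected'' by any transition; instead, the \emph{Collect} transitions are crafted so that the configuration encoding $\vec{p_0}$ (one agent in $p_0$, padding in \emph{Used}) is terminal but not a consensus, while every other reachable configuration keeps a \emph{Collect} transition enabled, so that in the unreachable case fairness collapses every fair execution into \emph{Collect} and the protocol stabilizes to $0$. Your budget mechanism does nothing to address this detection issue, and your recollection of the reduction behind Theorem~10 of~\cite{DBLP:conf/fsttcs/EsparzaGLM16} (leader, reservoir, flag latched at $M_f$) is not what that reduction does --- it has the same Hack/reversal/Collect structure as the paper's proof.

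Two secondary points. First, the budget solves a non-problem: silence quantifies over \emph{fair} executions only, and in the paper's construction an infinite simulation that never reaches the target is not fair, because \emph{Collect} transitions remain enabled at configurations occurring infinitely often and fairness eventually forces one to fire; so no budget is needed, and introducing one costs you three-way interactions and delicate sizing arguments without buying anything. Second, your closing claim that an \EXPSPACE{} upper bound ``is inherited from the decidability argument for \WS{} in~\cite{EGLM15}'' is unsupported: that argument gives decidability via Petri net reachability, for which only non-primitive-recursive algorithms are known, not an \EXPSPACE{} procedure.
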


To circumvent this high complexity, we will show in the next section
how \WSS can be refined into a smaller class of well-speci\-fied
protocols with the same expressive power, and a membership problem of
much lower complexity.

\section{A finer class of silent well-specified protocols: \WSSS}\label{sec:strongly:silent}
It can be shown that \WSS-protocols are exactly the protocols
satisfying the two following properties:

\begin{itemize}
\item \termination: for every configuration $C$, there exists a
  terminal configuration $C'$ such that $C \trans{*} C'$.

\item \nosplitterminal: for every initial configuration $C$, there
  exists $b \in \{0, 1\}$ such that every terminal configuration $C'$
  reachable from $C$ is a consensus configuration with output $b$,
  \ie $C \trans{*} C'$ implies $O(C') = b$.
\end{itemize}

We will introduce the new class \WSSS as a refinement of \WSS obtained
by strengthening \termination and
\nosplitterminal into two new properties called \layeredtermination and
\snosplitterminal. We introduce these properties in
Section~\ref{subsec-sterm} and Section~\ref{subsec-othertwo}, and show
that their decision problems belong to \NP{} and \coNP{} respectively.

Before doing so, let us introduce some useful notions. Let $\PP = (Q,
T, \Sigma, I, O)$ be a population protocol. For every $S \subseteq T$,
$\PP[S]$ denotes the \emph{protocol induced by $S$}, \ie
$\PP[S] \defeq (Q, S \cup T', \Sigma, I, O)$ where $T' \defeq \set{(p,
q, p, q) : p, q \in Q}$ is added to ensure that any two states can
interact. Let $\trans{}_{S}$ denote the transition relation of
$\PP[S]$. An \emph{ordered partition} of $T$ is a tuple $(T_1, T_2,
\ldots, T_n)$ of nonempty subsets of $T$ such that $T =
\bigcup_{i=1}^n T_i$ and $T_i \cap T_j = \emptyset$ for every $1 \leq
i < j \leq n$.

\subsection{Layered termination}\label{subsec-sterm}

We replace \termination by a stronger property called
\layeredtermination, and show that deciding \layeredtermination
belongs to \NP. The definition of
\layeredtermination is inspired by the typical structure of protocols
found in the literature. Such protocols are organized in layers such
that transitions of higher layers cannot be enabled by executing
transitions of lower layers. In particular, if the protocol reaches a
configuration of the highest layer that does not enable any
transition, then this configuration is terminal. For such protocols,
\termination can be proven by showing that every (fair or unfair)
execution of a layer is silent.

\begin{definition}\label{def:layers}
  A population protocol $\PP = (Q, T, \Sigma, I, O)$ satisfies
  \layeredtermination if there is an ordered partition 
  $$(T_1, T_2, \ldots, T_n)$$ of $T$ such that the following properties hold for
  every $i \in [n]$:
  \begin{itemize}
  \item[(a)] For every configuration $C$, every (fair or unfair)
    execution of $\PP[T_i]$ starting at $C$ is silent.

  \item[(b)] For every configurations $C$ and $C'$, if
    $C \trans{*}_{T_i} C'$ and $C$ is terminal in $\PP[T_1 \cup
    T_2 \cup \cdots \cup T_{i-1}]$, then $C'$ is also terminal in
    $\PP[T_1 \cup T_2 \cup \cdots \cup T_{i-1}]$.

\end{itemize}
\end{definition}

\begin{example}
  The majority protocol satisfies \layeredtermination. Indeed,
  consider the ordered partition $(T_1, T_2)$, where 
  \begin{align*}
    T_1 & = \{(A, B) \mapsto (a,b), (A, b) \mapsto (A, a)\} \\
    T_2 & = \{(B, a) \mapsto (B, b), (b, a) \mapsto (b, b)\}.
  \end{align*}
  All executions of $\PP[T_1]$ and
  $\PP[T_2]$ are silent. For every terminal configuration $C$ of
  $\PP[T_1]$, we have $\supp{C} \subseteq \{A, a\}$ or $\supp{C}
  \subseteq \{B, a, b\}$. In the former case, no transition of $T_2$
  is enabled; in the latter case, taking transitons of $T_2$ cannot
  enable $T_1$.
\end{example}

As briefly sketched above, \layeredtermination implies \termination. 
In the rest of this section, we prove that checking \layeredtermination is in \NP. We do
this by showing that conditions~(a) and~(b) of
Definition~\ref{def:layers} can be tested in polynomial time.

We recall a basic notion of Petri net theory recast in the terminology
of population protocols. For every step $C \trans{t} C'$ and every
state $q$ of a population protocol, we have $C'(q) = C(q) +
\post{t}(q) - \pre{t}(q)$. This observation can be extended to
sequences of transitions. Let $|w|_t$ denote the number of occurrences
of transition $t$ in a sequence $w$. We have $C'(q) = C(q) + \sum_{t
  \in T} |w|_t \cdot (\post{t}(q) - \pre{t}(q))$. Thus, a necessary
condition for $C \trans{w} C'$ is the existence of some $\vec{x} : T
\to \N$ such that
\begin{equation}
  C'(q) = C(q) + \sum_{t \in T} \vec{x}(t) \cdot (\post{t}(q) -
  \pre{t}(q)).\label{eq:flow}
\end{equation}
We call~(\ref{eq:flow}) the \emph{flow equation} for state $q$.

\begin{proposition}\label{prop:property1}
  Let $\PP = (Q, T, \Sigma, I, O)$ be a population protocol. Deciding
  whether an ordered partition $(T_1, T_2, \ldots, T_n)$ of $T$
  satisfies condition~(a) of Definition~\ref{def:layers} can be done
  in polynomial time.
\end{proposition}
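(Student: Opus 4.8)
The plan is to reduce condition~(a) to a question about the existence of a non-silent cycle in the induced protocol $\PP[T_i]$, and then to show that this question can be phrased as a feasibility problem solvable in polynomial time. First I would observe that by König's lemma (or simple pigeonhole reasoning), condition~(a) --- that every execution of $\PP[T_i]$ is silent --- fails if and only if there is some configuration $C$ and a sequence of \emph{non-silent} transitions $w \in T_i^*$ with $C \trans{w}_{T_i} C$, i.e.\ a cycle that returns to its starting configuration but uses at least one transition that actually changes the configuration. The intuition is that a non-silent execution must visit infinitely many configurations or revisit a configuration while making progress; since the reachable configurations from a fixed $C$ are bounded, a non-silent infinite execution forces such a ``productive loop.''

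The key technical step is to replace this reachability-of-a-cycle question, which looks expensive, by the flow equation~(\ref{eq:flow}). A cycle $C \trans{w}_{T_i} C$ must satisfy the flow equation with $C' = C$, which eliminates $C$ entirely and leaves the homogeneous system
\begin{equation*}
  \sum_{t \in T_i} \vec{x}(t)\cdot(\post{t}(q) - \pre{t}(q)) = 0
  \quad\text{for every } q \in Q,
\end{equation*}
together with the requirement that $\vec{x}$ be supported on at least one non-silent transition of $T_i$. Thus I would argue that $\PP[T_i]$ violates condition~(a) if and only if this system has a solution $\vec{x} : T_i \to \N$ with $\vec{x}(t) > 0$ for some non-silent $t \in T_i$. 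The existence of a nonnegative integer (equivalently, nonnegative rational) solution to a homogeneous linear system with the extra positivity constraint is a linear-programming feasibility question, decidable in polynomial time.

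The delicate part, and the one I would treat most carefully, is the ``if'' direction: given a nonnegative integer solution $\vec{x}$ of the homogeneous system realizing some non-silent transition, I must construct an \emph{actual} firing sequence $w$ realizing a non-silent loop from some configuration $C$. The flow equation is only a necessary condition for reachability, so a bare solution of the equation need not be firable. The standard fix is to exhibit a concrete witness: take $C$ large enough (e.g.\ $C(q) = M$ for a sufficiently large $M$ in every state) so that the multiset of transitions counted by $\vec{x}$ can be ordered into a firable sequence --- each transition stays enabled because the token counts never drop below what a single firing consumes. Since $\vec{x}$ has no net effect on any state, the sequence returns to $C$, and because some non-silent transition is used with positive multiplicity, the loop is non-silent; this yields a fair (indeed, any) non-silent execution by repeating the loop. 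Conversely the ``only if'' direction follows immediately from the flow equation applied to the non-silent cycle guaranteed above.

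Putting these together, checking condition~(a) reduces to testing feasibility of $|T_i|$ small linear-programming instances (one per candidate non-silent transition forced positive, or a single instance maximizing the total weight on non-silent transitions), all of polynomial size, which establishes the polynomial-time bound. The main obstacle I anticipate is rigorously justifying the firability of the witness sequence in the ``if'' direction and the reduction from non-silence to the existence of a productive loop in the ``only if'' direction; the linear-algebraic and complexity parts are then routine.
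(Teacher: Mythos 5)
Your proposal is correct and follows essentially the same route as the paper: the paper's proof (Proposition~\ref{prop:property1}, with the key claim proved as Lemma~\ref{lem:tinv} in the appendix) likewise characterizes non-silence of $\PP[T_i]$ by a nonnegative solution of the homogeneous flow equations supported on a non-silent transition, proves the ``only if'' direction by the same pigeonhole argument (constant population size forces a repeated configuration, yielding a productive cycle), proves the ``if'' direction by the same large-configuration witness (the paper takes $C(q) > 2\cdot|w|$ to make any ordering of the Parikh vector firable, then repeats the loop), and concludes by polynomial-time linear-programming feasibility.
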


\begin{proof}
  Let $i \in [n]$ and let $U_i$ be the set of non silent transitions
  of $T_i$. It can be shown that $\PP[T_i]$ is non silent if and only
  if there exists $\vec{x} : U_i \to \Q$ such that $\sum_{t \in U_i}
  \vec{x}(t) \cdot (\post{t}(q) - \pre{t}(q)) = 0$ and $\vec{x}(q)
  \geq 0$ for every $q \in Q$, and $\vec{x}(q) > 0$ for some $q \in
  Q$. Therefore, since linear programming is in \P, we can check for
  the (non) existence of an appropriate rational solution $\vec{x}_i$
  for every $i \in [n]$.
\end{proof}

We show how to check condition~(b) of Definition~\ref{def:layers} in
polynomial time. Let $U \subseteq T$ be a set of transitions. A
configuration $C \in \pop{Q}$ is \emph{$U$-dead} if for every $t \in
U$, $C \trans{t} C'$ implies $C' = C$. We say that $\PP$ is
\emph{$U$-dead from $C_0 \in \pop{Q}$} if every configuration
reachable from $C_0$ is $U$-dead, \ie $C_0 \trans{*} C$ implies that
$C$ is $U$-dead. Finally, we say that $\PP$ is \emph{$U$-dead} if it
is $U$-dead from every $U$-dead configuration $C_0 \in \pop{Q}$.

\begin{proposition}\label{prop:property2}
  Let $\PP = (Q, T, \Sigma, I, O)$ be a population protocol. Deciding
  whether an ordered partition $(T_1, \ldots, T_n)$ of $T$ satisfies
  condition~(b) of Definition~\ref{def:layers} can be done in
  polynomial time.
\end{proposition}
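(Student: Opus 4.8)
The plan is to route condition~(b) through the notion of $U$-deadness just introduced, and then to show that $U$-deadness is polynomial-time decidable. Fix $i \in [n]$ and set $U \defeq T_1 \cup \cdots \cup T_{i-1}$. First I would observe that a configuration is terminal in $\PP[U]$ exactly when it is $U$-dead: the only non-silent transitions of $\PP[U]$ are those of $U$, so $C$ enables only silent transitions in $\PP[U]$ iff $C \not\geq \pre{v}$ for every non-silent $v \in U$. Hence condition~(b) for layer $i$ is precisely the assertion that $\PP[T_i]$ is $U$-dead (its reachability relation being $\trans{*}_{T_i}$). So the whole proposition reduces to: deciding whether a given protocol is $U$-dead can be done in polynomial time.

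The key reduction is from reachability to single steps. I claim $\PP[T_i]$ is $U$-dead iff there is no single step $C \trans{t}_{T_i} C'$ with $C$ being $U$-dead and $C'$ not $U$-dead. One direction is immediate, since such a step witnesses non-$U$-deadness from the $U$-dead configuration $C$. For the converse, along any witnessing path $C_0 \trans{*}_{T_i} C_k$ with $C_0$ $U$-dead and $C_k$ not $U$-dead, I take the first index $j$ with $C_j$ not $U$-dead; then $C_{j-1} \trans{}_{T_i} C_j$ is a bad single step issuing from the $U$-dead configuration $C_{j-1}$, which is a legitimate starting point since the definition of $U$-deadness quantifies over \emph{all} $U$-dead configurations. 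As silent transitions never change the configuration, it suffices to let $t$ range over $T_i$.

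Next I would make the single-step test polynomial by exploiting monotonicity. For a fixed $t \in T_i$ and a fixed non-silent $u \in U$, the set of configurations $C$ with $C \geq \pre{t}$ and $C \mminus \pre{t} \mplus \post{t} \geq \pre{u}$ is upward-closed and, being cut out by one lower bound per coordinate, has a \emph{unique} minimal element $C_{\min}^{t,u}$ given by $C_{\min}^{t,u}(q) \defeq \pre{t}(q) + \max(0,\ \pre{u}(q) - \post{t}(q))$. On the other hand, the set of $U$-dead configurations is downward-closed, since $C' \leq C$ and $C' \geq \pre{v}$ force $C \geq \pre{v}$. Therefore a bad step that fires $t$ and enables $u$ exists iff $C_{\min}^{t,u}$ itself is $U$-dead: if $C_{\min}^{t,u}$ is $U$-dead it is such a witness, and conversely any $U$-dead $C$ producing this bad step dominates $C_{\min}^{t,u}$, which is then $U$-dead by downward-closure. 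Since $|\pre{t}| = 2$, every $C_{\min}^{t,u}$ is a genuine configuration.

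The algorithm is then to iterate over all pairs $(t,u) \in T_i \times U$ with $u$ non-silent, compute $C_{\min}^{t,u}$, and test whether it is $U$-dead by checking $C_{\min}^{t,u} \not\geq \pre{v}$ for every non-silent $v \in U$; condition~(b) holds for layer $i$ iff no pair yields a $U$-dead $C_{\min}^{t,u}$. There are polynomially many pairs and each test is immediate, so the whole check across all $i$ runs in polynomial time. The main obstacle to anticipate is the naive blow-up: the $U$-dead constraint on a candidate $C$ is a conjunction of negated coverings, hence a disjunction of linear constraints, so directly searching for a witness $C$ appears to require solving exponentially many feasibility problems. The monotonicity argument---an upward-closed enabling set with a single minimal element against the downward-closed set of $U$-dead configurations---is exactly what collapses this to one candidate configuration per pair and keeps the test polynomial.
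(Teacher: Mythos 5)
Your proof is correct and follows essentially the same route as the paper: reducing condition~(b) to $U$-deadness of $\PP[T_i]$, then to single bad steps, and your candidate configuration $C_{\min}^{t,u} = \pre{t} + (\pre{u} \mminus \post{t})$ together with the test that it is $U$-dead is exactly the paper's characterization (there phrased as: $\PP[T_i]$ is not $U$-dead iff there exist $t \in T_i$ and non-silent $u \in U$ with $\pre{u'} \nleq \pre{t} + (\pre{u} \mminus \post{t})$ for every non-silent $u' \in U$). Your explicit use of upward/downward closedness just makes the monotonicity argument, which the paper leaves implicit in its appendix lemma, more transparent.
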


\begin{proof}
  Let $i \in [n]$ and let $U \defeq T_1 \cup T_2 \cup \cdots \cup
  T_{i-1}$. $\PP[T_i]$ satisfies condition~(b) if and only if $\PP[T_i]$
  is $U$-dead. The latter can be tested in polynomial time through the
  following characterization: $\PP[T_i]$ is \emph{not} $U$-dead if and
  only if there exist $t \in T_i$ and non silent $u \in U$
  such that for every non silent $u' \in U$:
  \begin{align*}
    \pre{u'} \nleq \pre{t} + (\pre{u} \mminus \post{t}). \tag*{\qedhere}
  \end{align*}
\end{proof}

Propositions~\ref{prop:property1} and~\ref{prop:property2} yield an
\NP{} procedure to decide \layeredtermination. Indeed, it suffices to
guess an ordered partition and to check whether it satisfies
conditions~(a) and~(b) of Definition~\ref{def:layers} in polynomial
time.

\begin{corollary}\label{prop:NP}
  Deciding if a protocol satisfies \layeredtermination is in \NP.
\end{corollary}

\subsection{Strong consensus}\label{subsec-othertwo}

\newcommand{\trap}{\mathit{R}}
\newcommand{\siphon}{\mathit{S}}
\newcommand{\trapset}{\mathbf{T}}
\newcommand{\siphonset}{\mathbf{S}}

To overcome the high complexity of reachability in population
protocols, we strengthen \nosplitterminal by replacing the
reachability relation in its definition by
an \emph{over-approximation}, \ie, a relation $
\ptrans{}$ over configurations such that $C \trans{*} C'$ implies $C
\ptrans{} C'$. Observe that the flow equations provide an
over-approxi\-ma\-tion of the reachability relation. Indeed, as
mentioned earlier, if $C \trans{*} C'$, then there exists $\vec{x} : T
\to \N$ such that $(C, C', \vec{x})$ satisfies all of the flow
equations. However, this over-approximation alone is too crude for the
verification of protocols.

\begin{example}\label{ex:maj:flow}
  For example, let us consider the configurations $C = \multiset{A,
    B}$ and $C' = \multiset{a, a}$ of the majority protocol. The flow
  equations are satisfied by the mapping $\vec{x}$ such that
  $\vec{x}(t_{AB}) = \vec{x}(t_{Ab}) = 1$ and $\vec{x}(t_{Ba}) =
  \vec{x}(t_{ba}) = 0$. Yet, $C \trans{*} C'$ does not hold.
\end{example}

To obtain a finer reachability over-approximation, we introduce
so-called traps and siphons constraints borrowed from the theory of
Petri
nets~\cite{Desel:1995:FCP:207572,DBLP:journals/fmsd/EsparzaM00,DBLP:conf/cav/EsparzaLMMN14}
and successfully applied to a number of analysis problems (see
\eg~\cite{DBLP:journals/fmsd/EsparzaM00,DBLP:conf/cav/EsparzaLMMN14,DBLP:conf/cade/AthanasiouLW16}). Intuitively,
for some subset of transitions $U \subseteq T$, a $U$-trap is a set of
states $P \subseteq Q$ such that every transition of $U$ that removes
an agent from $P$ also moves an agent into $P$. Conversely, a
$U$-siphon is a set $P \subseteq Q$ such that every transition of $U$
that moves an agent into $P$ also removes an agent from $P$. More
formally, let $\preset{R} \defeq \{t \in T : \supp{\post{t}} \cap R
\not= \emptyset\}$ and $\postset{R} \defeq \{t \in T : \supp{\pre{t}}
\cap R \not= \emptyset\}$. $U$-siphons and $U$-traps are defined as
follows:

\begin{definition}\label{def:trap:siphon}
  A subset of states $P \subseteq Q$ is a \emph{$U$-trap} if
  $\postset{P} \cap U \subseteq \preset{P}$, and a \emph{$U$-siphon}
  if $\preset{P} \cap U \subseteq \postset{P}$.
\end{definition}

For every configuration $C \in \pop{Q}$ and $P \subseteq Q$, let $C(P)
\defeq \sum_{q \in P} C(q)$. Consider a sequence of steps $C_0
\trans{t_1} C_1 \trans{t_2} \cdots \trans{t_n} C_n$ where $t_1,
\ldots, t_n \in U$. It follows from Definition~\ref{def:trap:siphon} that if some transition $t_i$ moves an agent to a
$U$-trap $P$, then $C_j(P) > 0$ for every $j \geq i$. Similarly, if
some transition $t_i$ removes an agent from a $U$-siphon, then $C_j(P)
> 0$ for every $j < i$. In particular:

\begin{observation}\label{lem:basicst}
  Let $U \subseteq T$, and let $C$ and $C'$ be configurations such
  that $C \trans{*}_U C'$. For every $U$-trap $P$, if $C'(P) = 0$,
  then $\preset{P} \cap U = \emptyset$. For every $U$-siphon $P$, if
  $C(\siphon) = 0$, then $\postset{P} \cap U = \emptyset$.
\end{observation}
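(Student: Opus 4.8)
The plan is to prove Observation~\ref{lem:basicst} directly from Definition~\ref{def:trap:siphon} by tracking how the quantity $C_j(P)$ evolves along a run, using the key monotonicity facts stated in the paragraph preceding the observation. Concretely, suppose $C \trans{*}_U C'$, witnessed by a run $C = C_0 \trans{t_1} C_1 \trans{t_2} \cdots \trans{t_n} C_n = C'$ with $t_1, \ldots, t_n \in U$. The two halves of the statement are dual, so I would prove the trap case and then obtain the siphon case by a symmetric argument (reversing the roles of $\preset{\cdot}$ and $\postset{\cdot}$).

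For the $U$-trap case, I argue the contrapositive: I assume $\preset{P} \cap U \neq \emptyset$ and show $C'(P) > 0$. First I establish the local invariant that for a single step $C_{j-1} \trans{t_j} C_j$ with $t_j \in U$, if $C_{j-1}(P) > 0$ then $C_j(P) > 0$. This is because $t_j$ can decrease $C(P)$ only if $t_j \in \postset{P}$ (it removes an agent from $P$), but then the trap condition $\postset{P} \cap U \subseteq \preset{P}$ forces $t_j \in \preset{P}$ as well, so $t_j$ also adds an agent to $P$; since a transition moves at most two agents out of $P$ and at least one back in, one checks that $C_j(P) \geq 1$ whenever $C_{j-1}(P) \geq 1$. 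The second ingredient is the seeding fact already noted in the text: if some $t_i \in \preset{P}$ is taken, then $C_i(P) > 0$. Combining these, I take the smallest index $i$ with $t_i \in \preset{P} \cap U$ (which exists by assumption), conclude $C_i(P) > 0$, and then propagate this forward through the invariant step by step to obtain $C_n(P) = C'(P) > 0$, contradicting $C'(P) = 0$.

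The $U$-siphon case is the time-reversed mirror: if $C(P) = 0$ yet $\postset{P} \cap U \neq \emptyset$, I take the smallest index $i$ with $t_i \in \postset{P} \cap U$, use the dual local invariant (the siphon condition $\preset{P} \cap U \subseteq \postset{P}$ guarantees that no $U$-transition can increase $C(P)$ from $0$, since adding to $P$ requires also removing from $P$), and propagate backward to conclude $C_0(P) = C(P) > 0$, a contradiction. Here the seeding observation is that removing an agent from the siphon at step $i$ requires $C_{i-1}(P) > 0$, which I then push back to $C_0$.

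I expect the only real subtlety, rather than a genuine obstacle, to be the careful bookkeeping in the single-step invariant: because a transition involves two agents on each side, I must verify that the trap/siphon conditions indeed prevent the count from dropping to zero in one step, accounting for the possibility that both agents of $\pre{t_j}$ lie in $P$ while both agents of $\post{t_j}$ also lie in $P$, or mixed cases. Once this case analysis on the at-most-two tokens crossing the boundary of $P$ is dispatched, the inductive propagation is routine, and the duality lets me avoid repeating the argument verbatim for siphons.
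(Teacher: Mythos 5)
Your argument is exactly the paper's: the observation is stated there as an immediate consequence of the two monotonicity facts in the preceding paragraph (a fired transition moving an agent into a $U$-trap keeps it populated ever after; a fired transition removing an agent from a $U$-siphon forces it to have been populated before), and your seeding-plus-single-step-invariant propagation simply makes that explicit, with the siphon case by time reversal as in the paper. The only step to watch is your ``which exists by assumption'': $\preset{P} \cap U \neq \emptyset$ yields a suitable $t_i$ \emph{in the run} only under the implicit convention---shared by the paper's own phrasing and satisfied in the one place the observation is applied, namely Definition~\ref{def:potreach} where $U = \supp{\vec{x}}$---that every transition of $U$ actually occurs in the witnessing sequence.
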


We obtain a necessary condition for $C \trans{*}_U C'$ to hold, which
we call potential reachability:

\begin{definition}\label{def:potreach}
  Let $C, C'$ be two configurations, let $\vec{x} : T \to \N$, and let
  $U = \supp{\vec{x}}$. We say that $C'$ is \emph{potentially
    reachable from $C$ through $\vec{x}$}, denoted $C \ptrans{\vec{x}}
  C'$, if
  \begin{itemize}
  \item[(a)] the flow equation~(\ref{eq:flow}) holds for every $q \in Q$,
  \item[(b)] $C'(P) = 0$ implies $\preset{P} \cap U = \emptyset$ for
    every $U$-trap $P$, and
  \item[(c)] $C(P) = 0$ implies $\postset{P} \cap U = \emptyset$ for
    every $U$-siphon $P$.
  \end{itemize}
\end{definition}

\begin{example}
  Let us reconsider Example~\ref{ex:maj:flow}. Let $U = \supp{\vec{x}}
  = \{t_{AB}, t_{Ab}\}$ and $P = \{A, b\}$. Recall that $t_{AB} = (A,
  B) \mapsto (a, b)$ and $t_{Ab} = (A, b) \mapsto (A, a)$. We have
  $\postset{P} \cap U = U$ which implies that $P$ is a $U$-trap. This
  means that Definition~\ref{def:potreach}(b) is violated as $C'(P) =
  0$ and $\preset{P} \cap U = U \not= \emptyset$. Therefore,
  $\multiset{A, B} \ptrans{\vec{x}} \multiset{a, a}$ does not hold.
\end{example}

We write $C \ptrans{} C'$ if $C \ptrans{\vec{x}} C'$ for some $\vec{x}
: T \to \N$. As an immediate consequence of
Observation~\ref{lem:basicst}, for every configurations $C$ and $C'$,
if $C \trans{*} C'$, then $C \ptrans{} C'$. This allows us to
strengthen \nosplitterminal by redefining it in terms of potential
reachability instead of reachability:

\begin{definition}
  A protocol satisfies \snosplitterminal if for every initial
  configuration $C$, there exists $b \in \{0, 1\}$ such that every
  terminal configuration $C'$ potentially reachable from $C$ is a
  consensus configuration with output $b$, \ie $C \ptrans{} C'$
  implies $O(C') = b$.
\end{definition} 

Since the number of $U$-traps and $U$-siphons of a protocol can be
exponential in the number of states, checking trap and siphon
constraints by enumerating them may take exponential
time. Fortunately, this can be avoided. By definition, it follows that
the union of two $U$-traps is again a $U$-trap, and similarly for
siphons. Therefore, given a configuration $C$, there exists a unique
maximal $U$-siphon $P_\text{max}$ such that $C(P_\text{max}) = 0$, and
a unique maximal $U$-trap $P'_\text{max}$ such that $C(P'_\text{max})
= 0$. Moreover, $P_\text{max}$ and $P'_\text{max}$ can be computed in
linear time by means of a simple greedy algorithm (see
\eg~\cite[Ex.~4.5]{Desel:1995:FCP:207572}). This simplifies the task
of checking traps and siphons constraints, and yields a \coNP{}
procedure for testing \snosplitterminal:

\begin{restatable}{proposition}{strconscoNP}\label{strconscoNP}
  Deciding if a protocol satisfies \snosplitterminal is in \coNP.
\end{restatable}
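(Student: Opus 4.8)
The plan is to establish membership in \coNP by showing that the \emph{complement} of the problem lies in \NP. A protocol $\PP$ fails \snosplitterminal precisely when some initial configuration can potentially reach two ``contradictory'' terminal configurations. Concretely, my first step would be to verify the equivalence: $\PP$ does \emph{not} satisfy \snosplitterminal if and only if there exist an initial configuration $C$, two terminal configurations $D, D'$ with $C \ptrans{} D$ and $C \ptrans{} D'$, and states $p \in \supp{D}$, $p' \in \supp{D'}$ with $O(p) = 0$ and $O(p') = 1$; the case of a single terminal non-consensus witness is covered by allowing $D = D'$. The forward direction is immediate, and for the converse one checks that no single output value $b$ can serve the definition once both an output-$0$ and an output-$1$ terminal configuration are potentially reachable (and that the vacuous case, where no terminal configuration is potentially reachable, makes the property hold).

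The core of the argument is to encode this existential statement as a satisfiability question for a Boolean combination of linear constraints over $\N$, which is decidable in \NP. Introducing integer variables for $C, D, D'$, for the flow vectors $\vec{x}, \vec{x}'$ witnessing $C \ptrans{\vec{x}} D$ and $C \ptrans{\vec{x}'} D'$, and for an input $X$ with $C = I(X)$, the following are all expressible as (Boolean combinations of) linear constraints: that $C$ is initial; the flow equations of Definition~\ref{def:potreach}(a) for both vectors; that $D$ and $D'$ are terminal, i.e.\ $\neg(D \geq \pre{t})$ and $\neg(D' \geq \pre{t})$ for every non-silent $t \in T$ (a conjunction of disjunctions of linear inequalities); and the existence of the required states $p, p'$ (a finite disjunction). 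The only parts of potential reachability that are \emph{not} directly linear are the trap and siphon conditions (b) and (c) of Definition~\ref{def:potreach}: they quantify over the exponentially many $U$-traps and $U$-siphons and, worse, depend on the support $U = \supp{\vec{x}}$, which is itself a variable. I expect this to be the main obstacle.

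To overcome it, I would additionally guess, nondeterministically, the supports $U = \supp{\vec{x}}$ and $U' = \supp{\vec{x}'}$ as subsets of $T$, and the zero-sets $\{q : C(q) = 0\}$, $\{q : D(q) = 0\}$, $\{q : D'(q) = 0\}$ as subsets of $Q$; each guess costs only polynomially many bits, and consistency with the numeric variables is enforced by linear constraints of the form $\vec{x}(t) \geq 1$ or $\vec{x}(t) = 0$, and $C(q) = 0$ or $C(q) \geq 1$. Once these sets are fixed, conditions (b) and (c) become purely combinatorial. Indeed, since $U$-traps are closed under union, among all $U$-traps contained in the zero-set of $D$ there is a unique maximal one $R$, computable in linear time by the greedy algorithm mentioned after Definition~\ref{def:trap:siphon}; as $\preset{\cdot}$ is monotone, condition (b) holds for \emph{all} relevant $U$-traps iff $\preset{R} \cap U = \emptyset$. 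Symmetrically, condition (c) reduces to $\postset{S} \cap U = \emptyset$ for the maximal $U$-siphon $S$ contained in the zero-set of $C$, and analogously for $D'$ with $U'$. These checks are performed in polynomial time for each guess and impose no further constraints on the numeric variables. What remains is the satisfiability of a Boolean combination of linear constraints over $\N$, which can be decided in \NP (it reduces to integer linear feasibility, whose satisfiable instances admit solutions of polynomially bounded size). Hence the whole guess-and-check runs in \NP, so the complement of \snosplitterminal is in \NP, and deciding \snosplitterminal is in \coNP.
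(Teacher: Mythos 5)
Your proposal is correct and follows essentially the same route as the paper's proof: negate the property, guess the supports of the flow vectors and the zero-sets of the configurations, reduce the trap/siphon conditions to a polynomial-time check on the unique maximal $U$-trap/$U$-siphon inside the relevant zero-set, and discharge the remaining linear constraints via the polynomial-size-solution bound for integer linear programming. The only cosmetic difference is that the paper guesses a three-valued abstraction ($0$, $1$, $\geq 2$) of each coordinate rather than just the zero-sets, but your linear constraints recover the same information.
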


\begin{proof}
  Testing whether a protocol \emph{does not} satisfy \snosplitterminal
  can be done by guessing $C_0, C, C' \in \pop{Q}$, $b \in \{0, 1\}$,
  $q, q' \in Q$ and $\vec{x}, \vec{x}' : T \to \N$, and testing
  whether
  \begin{itemize}
  \item[(a)] $C_0$ is initial, $C$ is terminal, $C'$ is terminal, $q
    \in \supp{C}$, $q' \in \supp{C'}$, $O(q) \not= O(q')$, and
  \item[(b)] $C_0 \ptrans{\vec{x}} C$ and $C_0 \ptrans{\vec{x}'} C'$.
  \end{itemize}
  
  Since there is no \emph{a priori} bound on the size of $C_0, C, C'$
  and $\vec{x}, \vec{x}'$, we guess them carefully. First, we guess
  whether $D(p) = 0$, $D(p) = 1$ or $D(p) \geq 2$ for every $D \in
  \{C_0, C, C'\}$ and $p \in Q$. This gives enough information to
  test~(a). Then, we guess $\supp{\vec{x}}$ and
  $\supp{\vec{x}'}$. This allows to test traps/siphons constraints as
  follows. Let $U \defeq \supp{\vec{x}}$, let $P_\text{max}$ be the
  maximal $U$-trap such that $C(P_\text{max}) = 0$, and let
  $P'_\text{max}$ be the maximal $U$-siphon such that
  $C_0(P'_\text{max}) = 0$. Conditions~(b) and (c) of
  Definition~\ref{def:potreach} hold if and only if
  $\preset{(P_\text{max})} \cap U = \emptyset$ and
  $\postset{(P'_\text{max})} \cap U = \emptyset$, which can be tested
  in polynomial time. The same is done for $\vec{x}'$. If~(a) and
  siphons/traps constraints hold, we build the system $\mathcal{S}$ of
  linear equations/inequalities obtained from the conjunction of the
  flow equations together with the constraints already guessed. By
  standard results on integer linear programming (see
  \eg~\cite[Sect.~17]{Sch86}), if $\mathcal{S}$ has a solution, then
  it has one of polynomial size, and hence we may guess it.
\end{proof}

\subsection{\WSSS-protocols}\label{subsec-strongconvergence}

We say that a protocol belongs to \WSSS if it satisfies
\layeredtermination and \snosplitterminal. Since
\WSSS$\subseteq$\WSS$\subseteq$\WS holds, every \WSSS-protocol is
well-specified. Recall that a language $L$ belongs to the class
\DP{}~\cite{DBLP:books/daglib/0018514} if there exist languages $L_1
\in \NP$ and $L_2 \in \coNP$ such that $L = L_1 \cap L_2$. By taking
$L_1$ and $L_2$ respectively as the languages of population protocols
satisfying \layeredtermination and \snosplitterminal,
Corollary~\ref{prop:NP} and Proposition~\ref{strconscoNP} yield:

\begin{theorem}\leavevmode 
  The membership problem for \WSSS-protocols is in \DP.
\end{theorem}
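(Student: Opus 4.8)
The plan is to assemble the result directly from the two complexity bounds already established for the defining properties of \WSSS. By definition, a protocol belongs to \WSSS if and only if it satisfies both \layeredtermination and \snosplitterminal. Thus the language $L$ of \WSSS-protocols is exactly $L = L_1 \cap L_2$, where $L_1$ is the set of protocols satisfying \layeredtermination and $L_2$ is the set of protocols satisfying \snosplitterminal. The whole proof reduces to exhibiting this decomposition and invoking the definition of \DP.

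First I would recall Corollary~\ref{prop:NP}, which states that deciding \layeredtermination is in \NP, so that $L_1 \in \NP$. Next I would recall Proposition~\ref{strconscoNP}, which states that deciding \snosplitterminal is in \coNP, so that $L_2 \in \coNP$. With these two memberships in hand, the intersection $L = L_1 \cap L_2$ matches verbatim the definition of the class \DP{} recalled just above the theorem statement: a language is in \DP{} precisely when it is the intersection of an \NP{} language and a \coNP{} language. Concluding $L \in \DP$ is then immediate.

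The argument is entirely a bookkeeping step, so there is no genuine mathematical obstacle; the substantive content lives in the earlier Corollary~\ref{prop:NP} and Proposition~\ref{strconscoNP}. The only point requiring a moment of care is making sure that the input encoding is shared consistently: both $L_1$ and $L_2$ must be languages over the same encoding of population protocols $\PP$, so that their intersection is well-defined as a single language. Since both decision procedures take a protocol $\PP$ as input under the same natural encoding, this is automatic, and the \DP{} membership follows at once.

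\begin{proof}
  By definition, a protocol belongs to \WSSS if and only if it
  satisfies both \layeredtermination and \snosplitterminal. Let $L$
  denote the language of (encodings of) \WSSS-protocols, and let $L_1$
  and $L_2$ denote the languages of protocols satisfying
  \layeredtermination and \snosplitterminal respectively, both under
  the same encoding of population protocols. Then $L = L_1 \cap L_2$.
  By Corollary~\ref{prop:NP}, $L_1 \in \NP$, and by
  Proposition~\ref{strconscoNP}, $L_2 \in \coNP$. Hence $L$ is the
  intersection of an \NP{} language and a \coNP{} language, which is
  precisely the definition of the class \DP. Therefore $L \in \DP$.
\end{proof}
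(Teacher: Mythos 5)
Your proof is correct and matches the paper's own argument exactly: the paper likewise defines $L_1$ and $L_2$ as the languages of protocols satisfying \layeredtermination and \snosplitterminal, invokes Corollary~\ref{prop:NP} and Proposition~\ref{strconscoNP}, and concludes membership in \DP{} directly from the definition of that class. No issues.
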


\section{\WSSS is as expressive as \WS}\label{sec:strong:protocols}
In a famous result, Angluin~\etal~\cite{AAE06} have shown that a
predicate is computable by a population protocol if and only if it is
definable in Presburger arithmetic, the first-order theory of
addition \cite{AADFP04,AAE06}. In
particular, \cite{AADFP04}~constructs protocols for
Presburger-definable predicates by means of a well-known result:
\emph{Presburger-definable} predicates are the smallest set of
predicates containing all threshold and remainder predicates, and
closed under boolean operations. A
\emph{threshold predicate} is a predicate of the form 
$$P(x_1, \ldots,x_k) = \left(\sum_{i=1}^k a_i x_i < c\right),$$ 
where $k \geq 1$ and $a_1, \ldots, a_k, c \in \Z$. 
A \emph{remainder predicate} is a predicate of the form 
$$P(x_1, \ldots, x_k) = \left(\sum_{i=1}^k a_i x_i \equiv c\ (\mathrm{mod}\ m)\right),$$ 
where $k \geq 1$, $m \geq 2$ and $a_1, \ldots, a_k, c, m \in \Z$. 
Here, we show that these predicates can be
computed by \WSSS-protocols, and that \WSSS is closed under negation
and conjunction. As a consequence, we obtain that \WSSS-protocols are
as expressive as \WS, the class of all well-specified protocols.

\newcommand{\lmax}{v_\text{max}}
\newcommand{\val}[1]{\mathrm{val}(#1)}

\paragraph{Threshold.} We describe the protocol given
in~\cite{AADFP04} to compute the threshold predicate $\sum_{i=1}^k a_i
x_i < c$.  Let
$$\lmax \defeq \max(|a_1|, |a_2|, \ldots, |a_k|, |c|+1)$$ and define
\begin{align*}
  f(m, n) & \defeq \max(-\lmax, \min(\lmax, m + n)) \\
  g(m, n) & \defeq (m + n) - f(m, n) \\
  b(m, n) & \defeq (f(m, n) < c)
\end{align*}
The protocol is $\PP_{\text{thr}} \defeq (Q, T, \Sigma, I, O)$, where
\begin{align*}
  Q & \defeq \{0, 1\} \times [-\lmax, \lmax] \times \{0, 1\} \\
  \Sigma & \defeq \{x_1, x_2, \ldots, x_k\} \\
  I(x_i) & \defeq (1, a_i, a_i < c) \text{ for every } i \in [k] \\
  O(\ell, n, o) & \defeq o \text{ for every state } (\ell, n, o),
\end{align*}
and $T$ contains
\begin{align*}
  (1, n, o), (l, n', o') & \mapsto (1, f(n, n'), b(n, n')), (0,
  g(n, n'), b(n, n'))
\end{align*}
for every $n, n' \in [-\lmax, \lmax]$, $\ell, o, o' \in \{0,1\}$.
Intuitively, a state $(\ell, n, o)$ indicates that the agent has value
$n$, opinion $o$, and that it is a leader if and only if $\ell =
1$. When a leader $q$ and a state $r$ interact, $r$ becomes a non
leader, and $q$ increases its value as much as possible by
substracting from the value of $r$. Moreover, a leader can change the
opinion of any non leader.

\begin{restatable}{proposition}{propThresholdConsensus}\label{prop:threshold:strong:consensus}
  $\PP_{\text{thr}}$ satisfies \snosplitterminal.
\end{restatable}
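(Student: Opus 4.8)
The plan is to characterize the terminal configurations of $\PP_{\text{thr}}$ and to show that any one that is potentially reachable from an initial configuration is a consensus whose output equals the threshold predicate on the input. The starting point is that the total \emph{value} $\val{C} \defeq \sum_{(\ell,n,o)\in Q} n\cdot C(\ell,n,o)$ is a linear invariant of the flow equation~(\ref{eq:flow}): every transition replaces the values $n,n'$ by $f(n,n')$ and $g(n,n')$ with $f(n,n')+g(n,n') = n+n'$, so condition~(a) of Definition~\ref{def:potreach} gives $\val{C'}=\val{C}$ whenever $C \ptrans{} C'$. On an initial $C = I(X)$ this common value is exactly $V \defeq \sum_{i} a_i X(x_i)$, the left-hand side of the predicate $\sum_i a_i x_i < c$.

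The key structural fact concerns the set of leader states $L \defeq \{1\}\times[-\lmax,\lmax]\times\{0,1\}$. Because every transition whose pre-set meets $L$ also produces a leader in its post-set (the first component of a non-silent post-set is a $1$, and silent transitions have $\pre{t}=\post{t}$), we have $\postset{L}\subseteq\preset{L}$, so $L$ is a $U$-trap for every $U\subseteq T$. I would then apply Definition~\ref{def:potreach}(b) with $P=L$: if $C \ptrans{\vec{x}} C'$ with $C'(L)=0$, then $\preset{L}\cap\supp{\vec{x}}=\emptyset$, which forces $\vec{x}$ to be supported only on silent transitions and hence $C'=C$; but an initial $C$ has $C(L)=|X|>0$, a contradiction. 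So every configuration potentially reachable from an initial one has at least one leader. Dually, terminality is a genuine property of $C'$: if two leader agents were present, the leader--leader transition would be enabled and non-silent (its post-set contains a non-leader, hence differs from its two-leader pre-set), so a terminal configuration has at most one leader. Combining the two, a potentially reachable terminal $C'$ has \emph{exactly} one leader.

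With the unique leader in state $(1,n,o)$, I would read off the structure from the requirement that every leader--non-leader interaction be silent. Silence forces $f(n,n')=n$ and $o=o'=b(n,n')$ for each non-leader value $n'$ present; analysing the clamp $f$ yields three cases: either $-\lmax<n<\lmax$ and all non-leaders have value $0$; or $n=\lmax$ and all non-leaders have value $\geq 0$; or $n=-\lmax$ and all non-leaders have value $\leq 0$. In every case all opinions equal $b(n,n')=(n<c)$, so $C'$ is a consensus with output $(n<c)$. It then remains to identify $(n<c)$ with $(V<c)$: by value conservation $V = n + \sum(\text{non-leader values})$, so in the first case $n=V$ and the outputs agree, in the second $V\geq n=\lmax>c$ (using $\lmax\geq|c|+1$) makes both false, and symmetrically the third makes both true. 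Hence every terminal configuration potentially reachable from $C=I(X)$ is a consensus with the single output $b=(V<c)$, which is exactly \snosplitterminal.

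The main obstacle is that we argue with potential rather than genuine reachability: the flow equation alone permits the leader count to drop to zero, and recognising that the leader set $L$ is a trap---so that Definition~\ref{def:potreach}(b) restores the ``at least one leader'' invariant---is the step that makes the argument go through. The siphon constraints are not needed here. The remaining effort is the routine but careful case analysis of the clamp function $f$ and of the opinion assignment $b$.
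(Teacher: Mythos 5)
Your proof is correct and follows essentially the same route as the paper's: both rest on the value invariant $\val{C}=\val{C'}$ from the flow equations, on the observation that the leader states form a $U$-trap for every $U$ so that Definition~\ref{def:potreach}(b) forces every potentially reachable configuration to retain a leader, and on an analysis of the clamp $f$ at terminal configurations. The only difference is cosmetic: you pin down the output of a potentially reachable terminal configuration as $(\val{C_0}<c)$ (which also yields correctness of the protocol), whereas the paper only argues that two such terminal configurations with equal value and a leader must agree on their output.
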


\begin{proof}
  Let $\val{q} \defeq n$ for every state $q = (\ell, n, o) \in Q$, and
  let $\val{C} \defeq \sum_{q \in Q} C(q) \cdot \val{q}$ for every
  configuration $C \in \pop{Q}$. The following holds for
  every $C, C' \in \pop{Q}$:
  \begin{itemize}
    \item[(a)] If $(C, C', \vec{x})$ is a solution to the flow
      equations for some $\vec{x} : T \to \N$, then $\val{C} =
      \val{C'}$.

    \item[(b)] If $C, C'$ are terminal, $C$ and $C'$ contain a leader,
      and $\val{C} = \val{C'}$, then $O(C) = O(C')$.
  \end{itemize}
  Suppose for the sake of contradiction that $\PP$ does not satisfy \linebreak
  \snosplitterminal. There exist $C_0, C, C' \in \pop{Q}$, $q, q' \in
  Q$ and $\vec{x}, \vec{x}' : T \to \N$ such that $C_0
  \ptrans{\vec{x}} C$, $C_0 \ptrans{\vec{x}'} C'$, $C_0$ is initial,
  $C$ and $C'$ are terminal consensus configurations, $q \in
  \supp{C}$, $q' \in \supp{C'}$ and $O(q) \neq O(q')$. Note that
  $(C_0, C, \vec{x})$ and $(C_0, C', \vec{x}')$ both satisfy the flow
  equations. Thus, by~(a), $\val{C} = \val{C_0} = \val{C'}$. Since
  $C_0$ is initial, it contains a leader. Since the set of leaders
  forms a $U$-trap for every $U \subseteq T$, and $(C_0, C, \vec{x})$
  and $(C_0, C', \vec{x})$ satisfy trap constraints, $C$ and $C'$
  contain a leader. By~(b), $C$ and $C'$ are consensus configurations
  with $O(C) = O(C')$, which is a contradiction.
\end{proof}

\begin{restatable}{proposition}{propThresholdLayered}\label{prop:threshold:layered}
  $\PP_{\text{thr}}$ satisfies \layeredtermination.
\end{restatable}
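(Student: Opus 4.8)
The plan is to exhibit an explicit ordered partition of $T$ witnessing \layeredtermination, organized according to what each transition does to three quantities ordered by decreasing priority: the number of leaders, the total value held by non-leaders, and the opinions. Write $L(C) \defeq \sum_{n,o} C(1,n,o)$ for the number of leaders and $\Phi(C) \defeq \sum_{q = (0,n,o)} C(q)\,|\val{q}|$ for the total absolute value carried by non-leader states. I would take three layers: $T_1$ contains every non-silent transition that either decreases the number of leaders (those of the form $(1,n,o),(1,n',o') \mapsto \cdots$, whose right agent is a leader) or strictly changes a value; $T_2$ contains the value-preserving transitions (i.e.\ $\ell = 0$ and $f(n,n') = n$) whose leader has value $n < c$, which can only set opinions to $1$; and $T_3$ contains the value-preserving transitions whose leader has value $n \geq c$, which can only set opinions to $0$. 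The purely silent self-loops (those with a non-leader on the left) may be placed in any block, as they affect neither condition.

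For condition~(a) I would argue layer by layer. Every transition of $T_1$ strictly decreases the pair $(L(C), \Phi(C))$ in the lexicographic order: leader-reducing transitions drop $L$ by one (so the possible increase of $\Phi$ caused by a demoted leader is harmless), while value-changing transitions with $\ell = 0$ fix $L$ and strictly decrease $\Phi$. The latter is the one computation worth carrying out: setting $g(n,n') = n + n' - f(n,n')$, one checks that $f(n,n') \neq n$ forces $|g(n,n')| < |n'|$ in each of the three clamping cases. Since $(L,\Phi)$ ranges over $\N^2$, well-foundedness of the lexicographic order gives that every execution of $\PP_{\text{thr}}[T_1]$ is silent. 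For $T_2$ and $T_3$ the argument is simpler: a non-silent transition of $T_2$ turns an opinion from $0$ to $1$ and never the reverse, so it strictly decreases the number of opinion-$0$ agents, and symmetrically for $T_3$; hence $\PP_{\text{thr}}[T_2]$ and $\PP_{\text{thr}}[T_3]$ are silent from every configuration.

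The crux is condition~(b), and the structural fact I would isolate first is that a configuration terminal in $\PP_{\text{thr}}[T_1]$ has at most one leader, since any two leaders enable the non-silent leader-reducing transition between them. Condition~(b) for $T_2$ over $T_1$ is then immediate: $T_2$-transitions fix every leader bit and every value, hence preserve both $L \leq 1$ and the absence of value-changing transitions. The only delicate case is $T_3$ over $T_1 \cup T_2$, and here I would use that a configuration with at most one leader has a single leader value $n$ (and if it has no leader, no non-silent transition fires at all): either $n < c$, in which case no $T_3$-transition is enabled so the configuration cannot move, or $n \geq c$, in which case no $T_2$-transition exists, so firing $T_3$ cannot resurrect $T_2$. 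In both cases $C \trans{*}_{T_3} C'$ leaves the configuration terminal in $\PP_{\text{thr}}[T_1 \cup T_2]$. I expect this last point to be the main obstacle: a single undivided ``opinion-propagation'' layer does \emph{not} terminate in isolation, because two leaders with opposite opinions can flip a value-$0$ non-leader back and forth forever; stratifying the opinion updates by their target value, so that each layer is monotone and the fact $L \leq 1$ prevents the higher opinion layer from re-enabling the lower one, is precisely what makes the argument go through.
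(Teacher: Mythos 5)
Your proof is correct, and it reaches the conclusion by a genuinely different route than the paper. The paper uses only \emph{two} layers: for $c > 0$ it places into the second layer exactly those transitions whose preset pairs a leader from $L_0 = \{(1,x,0) : c \leq x \leq \lmax\}$ with the non-leader $(0,0,1)$, keeps everything else in the first layer, and treats $c \leq 0$ by a symmetric partition --- so it needs a case split on the sign of $c$, which your three-layer partition (leader-destroying or value-moving transitions, then opinion-to-$1$ propagation, then opinion-to-$0$ propagation) avoids entirely. The termination arguments also differ in character: for its large first layer the paper argues that several monotone quantities (the total absolute value held by non-leaders, the number of leaders, the occurrences of leaders whose opinion is inconsistent with their value) must eventually stabilize and then inspects which non-silent transitions can remain, whereas you exhibit a single explicit ranking function, the lexicographic pair $(L(C),\Phi(C))$, whose strict decrease reduces to the computation that $f(n,n')\neq n$ forces $|g(n,n')| < |n'|$; each of your opinion layers then carries its own trivial counting measure. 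For condition~(b) the paper derives a contradiction from the presence of two leaders with opposite opinions in a configuration that is dead for the first layer; you instead isolate the invariant ``at most one leader after layer one'' and observe that the unique leader's value relative to $c$ selects which of the two opinion layers can ever fire, so the later layer cannot re-enable the earlier one --- the same underlying phenomenon, packaged as a preserved invariant rather than a contradiction. Both arguments are sound; yours buys uniformity in $c$, a cleaner well-foundedness argument, and a more transparent condition~(b), at the cost of one extra layer.
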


\begin{proof}
  Let $L_0 \defeq \{(1, x, 0) : c \leq x \leq \lmax\}$, $L_1 \defeq
  \{(1, x, 1) : -\lmax \leq x < c\}$, $N_0 \defeq \{(0, 0, 0)\}$ and
  $N_1 \defeq \{(0, 0, 1)\}$. It can be shown that the following
  ordered partitions satisfy layered termination for $c > 0$ and $c
  \leq 0$ respectively:

  \begin{align*}
     T_1 & \defeq \{t \in T : \pre{t} \neq \multiset{q, r} \text{ for all } q
     \in L_0, r \in N_1\}, \\
     T_2  & \defeq T \setminus T_1, \text{ and} \\[5pt]
     S_1 & \defeq \{t \in T : \pre{t} \neq \multiset{q, r} \text{ for all } q
     \in L_1, r \in N_0\}, \\
     S_2 &\defeq T \setminus S_1. \qedhere
  \end{align*}
\end{proof}

\newcommand{\true}{\mathrm{true}}
\newcommand{\false}{\mathrm{false}}

\paragraph{Remainder.} We give a protocol for the remainder predicate
$$\sum_{i=1}^k a_i x_i \equiv c\; (\text{mod } m).$$ The protocol is
$\PP_{\text{rmd}} = (Q, T, \Sigma, I, O)$, where
\begin{align*}
  Q & \defeq [0, m) \cup \{\true, \false\} \\
  \Sigma & \defeq \{x_1, x_2, \ldots, x_k\} \\
  I(x_i) & \defeq a_i \text{ mod } m \text{ for every } i \in [k] \\
  O(q) & \defeq \begin{cases}
                  1  \text{ if } q \in \{c, \true\} \\
                  0  \text{ otherwise }
                \end{cases}
\end{align*}
  and where $T$ contains
  the following transitions for every $n, n' \in [0, m)$ and $b \in
    \{\false, \true\}$:
  \begin{align*}
  (n, n') & \mapsto (n + n' \text{ mod } m, n + n' \text{ mod } m = c) \quad \mbox{ and
      } \\
  (n, b) & \mapsto (n, n = c).
  \end{align*}
  In the appendix we show
  that $\PP_{\text{rmd}}$ belongs to \WSSS by adapting the proof
  for $\PP_{\text{thr}}$.
\label{ssec:basic:protocols}

\paragraph{Negation and conjunction.} Let $\PP_1 = (Q_1, T_1, \Sigma, I_1, O_1)$ and $\PP_2 = (Q_2, T_2, \Sigma, I_2, O_2)$ be \WSSS-protocols computing predicates $\varphi_1$
and $\varphi_2$ respectively.
We may assume that $\PP_1$ and $\PP_2$ are defined over identical $\Sigma$,
for we can always extend the input domain of threshold/remainder predicates
by variables with coefficients of value zero.
The predicate $\neg \varphi_i$ can be
computed by replacing $O_i$ by the new output function $O_i'$ such
that $O_i'(q) \defeq \neg O_i(q)$ for every $q \in Q_i$. To compute
$\varphi_1 \land \varphi_2$, we build an asynchronous product where
steps of $\PP_1$ and $\PP_2$ can be executed independently.

More formally, the \emph{conjunction} of $\PP_1$ and $\PP_2$ is defined as
the population protocol $\PP \defeq (Q, S, I, \Sigma, O)$ such that $Q
\defeq Q_1 \times Q_2$, $S \defeq S_1 \cup S_2$, $I(\sigma) \defeq
(I_1(\sigma), I_2(\sigma))$ and $O(p, q) \defeq O_1(p) \land O_2(q)$
where
\begin{align*}
  S_1 &\defeq \{(p, r), (p', r') \mapsto (q, r), (q', r') :  (p, p', q, q') \in
 T_1, r, r' \in Q_2\}, \\
  S_2 & \defeq \{(r, p), (r', p') \mapsto (r, q), (r', q') : (p, p', q, q') \in
 T_2, r, r' \in Q_1\}.
\end{align*}
In the appendix
we show that $\PP$ is in \WSSS since terminal/consensus configurations,
flow equations, and traps and siphons
constraints are preserved by projections from $\PP$ onto $\PP_1$ and
$\PP_2$.
\label{ssec:boolean:protocols}

\begin{table}[h]
\centering
\makebox[0pt][c]{\scalebox{0.8}{\parbox{1.25\textwidth}{%
    \begin{minipage}[t]{0.24\hsize}\centering
        \begin{tabular}[t]{rrrr}
            \toprule
                \multicolumn{4}{c}{Threshold} \\
            \midrule
                \multicolumn{1}{c}{$\lmax$} & \multicolumn{1}{c}{$|Q|$} & \multicolumn{1}{c}{$|T|$} & \multicolumn{1}{c}{Time} \\
            \midrule
                 3  & 28 &  288 &     8.0 \\
                 4  & 36 &  478 &    26.5 \\
                 5  & 44 &  716 &    97.6 \\
                 6  & 52 & 1002 &   243.4 \\
                 7  & 60 & 1336 &   565.0 \\
                 8  & 68 & 1718 &  1019.7 \\
                 9  & 76 & 2148 &  2375.9 \\
                10  & 84 & 2626 & timeout \\
            \bottomrule
        \end{tabular}
    \end{minipage}%
    \hfill
    \begin{minipage}[t]{0.24\hsize}\centering
        \begin{tabular}[t]{rrrr}
            \toprule
                \multicolumn{4}{c}{Remainder} \\
            \midrule
                \multicolumn{1}{c}{$m$} & \multicolumn{1}{c}{$|Q|$} & \multicolumn{1}{c}{$|T|$} & \multicolumn{1}{c}{Time} \\
            \midrule
                10  & 12 &   65 &     0.4 \\
                20  & 22 &  230 &     2.8 \\
                30  & 32 &  495 &    15.9 \\
                40  & 42 &  860 &    79.3 \\
                50  & 52 & 1325 &   440.3 \\
                60  & 62 & 1890 &  3055.4 \\
                70  & 72 & 2555 &  3176.5 \\
                80  & 82 & 3320 & timeout \\
            \bottomrule
            \addlinespace[0.5em]
            \toprule
                \multicolumn{4}{c}{Majority} \\
            \midrule
                & \multicolumn{1}{c}{$|Q|$} & \multicolumn{1}{c}{$|T|$} & \multicolumn{1}{c}{Time} \\
            \midrule
                & 4 & 4 & 0.1 \\
            \bottomrule
        \end{tabular}
    \end{minipage}%
    \hfill
    \begin{minipage}[t]{0.24\hsize}\centering
        \begin{tabular}[t]{rrrr}
            \toprule
                \multicolumn{4}{c}{Flock of birds~\cite{CMS10}} \\
            \midrule
                \multicolumn{1}{c}{$c$} & \multicolumn{1}{c}{$|Q|$} & \multicolumn{1}{c}{$|T|$} & \multicolumn{1}{c}{Time} \\
            \midrule
                20  & 21 &  210 &     1.5 \\
                25  & 26 &  325 &     3.3 \\
                30  & 31 &  465 &     7.7 \\
                35  & 36 &  630 &    20.8 \\
                40  & 41 &  820 &   106.9 \\
                45  & 46 & 1035 &   295.6 \\
                50  & 51 & 1275 &   181.6 \\
                55  & 56 & 1540 & timeout \\
            \bottomrule
            \addlinespace[0.5em]
            \toprule
                \multicolumn{4}{c}{Broadcast} \\
            \midrule
                & \multicolumn{1}{c}{$|Q|$} & \multicolumn{1}{c}{$|T|$} & \multicolumn{1}{c}{Time} \\
            \midrule
                & 2 & 1 & 0.1 \\
            \bottomrule
        \end{tabular}
    \end{minipage}%
    \hfill
    \begin{minipage}[t]{0.24\hsize}\centering
        \begin{tabular}[t]{rrrr}
            \toprule
                \multicolumn{4}{c}{Flock of birds~\cite{guidelines}} \\
            \midrule
                \multicolumn{1}{c}{$c$} & \multicolumn{1}{c}{$|Q|$} & \multicolumn{1}{c}{$|T|$}
                & \multicolumn{1}{c}{Time} \\
            \midrule
                 50  &  51 &  99 &    11.8 \\
                100  & 101 & 199 &    44.8 \\
                150  & 151 & 299 &   369.1 \\
                200  & 201 & 399 &   778.8 \\
                250  & 251 & 499 &  1554.2 \\
                300  & 301 & 599 &  2782.5 \\
                325  & 326 & 649 &  3470.8 \\
                350  & 351 & 699 & timeout \\
            \bottomrule
        \end{tabular}
    \end{minipage}%
}}}\caption{Results of the experimental evaluation where $|Q|$ denotes
  the number of states, $|T|$ denotes the number of non silent transitions, and
  the time to prove membership for \WSSS is given in seconds.}\label{tab:results}
\end{table}

\section{Experimental results}\label{sec:experimental}

We have developed a tool called \Peregrine
\footnote{\Peregrine and benchmarks 
are available from \url{https://gitlab.lrz.de/i7/peregrine/}.}
to check membership in \WSSS.
Peregrine is implemented on top of the
SMT solver Z3~\cite{DBLP:conf/tacas/MouraB08}. 

\Peregrine reads in a population protocol $\PP = (Q, T, \Sigma, I, O)$ 
and constructs two
sets of constraints. The first set is satisfiable if and only
if \layeredtermination holds, and the second is unsatisfiable if and
only if \snosplitterminal holds.

For \layeredtermination, our tool \Peregrine iteratively constructs
constraints checking the existence of an ordered 
partition of size $1,$$2,$ $\ldots,|T|$ 
and decides if they are satisfiable. To check that the execution
of a layer is silent, the constraints mentioned in the proof of
Proposition~\ref{prop:property1} are
transformed using Farkas' lemma (see \eg~\cite{Sch86}) into a version
that is satisfiable if and only if all the executions of the layer are 
silent. Also, the constraints for condition~(b) of
Definition~\ref{def:layers} are added.

For \snosplitterminal, \Peregrine initially constructs 
the constraints for the flow equation for three configurations 
$C_0, C_1, C_2$ and
vectors $\vec{x}_1$ and $\vec{x}_2$, with additional constraints
to guarantee that $C_0$ is initial, $C_1$ and $C_2$ are terminal, and
$C_1$ and $C_2$ are consensus of different values. If these
constraints are unsatisfiable, the protocol
satisfies \snosplitterminal. Otherwise, \Peregrine searches for a
$U$-trap or $U$-siphon to show that either $C_0 \ptrans{\vec{x}_1}
C_1$ or $C_0 \ptrans{\vec{x}_2} C_2$ does not hold.
If, say, a $U$-siphon
$\siphon$ is found, then \Peregrine adds the constraint $C_0(\siphon) > 0$ to
the set of initial constraints.  This process is repeated until either
the constraints are unsatisfiable and \snosplitterminal is shown, or
all possible $U$-traps and $U$-siphons are added, in which
case \snosplitterminal does not hold.  
We use this refinement-based approach instead of
the \coNP{} approach described in Proposition~\ref{strconscoNP},
as that could require a quadratic number of variables and constraints, and we
generally expect to need a small number of refinement steps.

We evaluated \Peregrine on a set of benchmarks: the threshold and
remainder protocols of~\cite{AADFP06}, the majority protocol
of~\cite{AAE06}, the broadcast protocol of~\cite{guidelines} and two
versions of the flock of birds\footnote{The variant
from~\cite{guidelines} is referred to as \emph{threshold-n} by its
authors.} protocol from~\cite{guidelines, CMS10}. We checked the
parametrized protocols for increasing values of their primary
parameter until we reached a timeout. For the threshold and remainder
protocols, we set the secondary parameter $c$ to $1$ since it has no
incidence on the size of the protocol, and since the variation in
execution time for different values of $c$ was negligible. Moreover,
we assumed that all possible values for $a_i$ were present in the
inputs, which represents the worst case.

All experiments were performed on the same machine equipped with an
Intel Core i7-4810MQ CPU and 16\,GB of RAM\@. The time limit was set
to 1 hour. The results are shown in Table~\ref{tab:results}.  In all
cases where we terminated within the time limit, we were able to show
membership for \WSSS. Generally, showing \snosplitterminal took much
less time than showing \layeredtermination, except for the
flock of birds protocols, where we needed linearly many $U$-traps.

As an extension, we also tried proving correctness
after proving membership in \WSSS.
For this, we constructed constraints
for the existence of an input $X$ and configuration $C$
with $I(X) \ptrans{\vec{x}} C$ and $\varphi(X) \neq O(C)$.
We were able to prove correctness for all the protocols
in our set of benchmarks. The correctness check was faster than the 
well-specification check for broadcast, majority, threshold and 
both flock of birds protocols, and slower for the remainder protocol,
where we reached a timeout for $m=70$.

\section{Conclusion and further work}\label{sec:conclusion}
We have presented \WSSS, the first class of well-specified population
protocols with a membership problem of reasonable complexity (\ie
in \DP) and with the full expressiveness of well-specified
protocols. Previous work had shown that the membership problem for the
general class of well-specified protocols is decidable, but at
least \EXPSPACE-hard with algorithms of non primitive recursive
complexity.

We have shown that \WSSS is a natural class that contains many
standard protocols from the literature, like flock-of-birds, majority,
threshold and remainder protocols. We implemented the membership
procedure for \WSSS on top of the SMT solver Z3, yielding the first
software able to automatically prove well-specification of population
protocols for \emph{all} (of the infinitely many) inputs. Previous
work could only prove partial correctness of protocols with at most 9
states and 28 transitions, by trying exhaustively a \emph{finite}
number of inputs~\cite{PLD08,SLDP09,CMS10,guidelines}. Our algorithm
deals with all inputs and can handle larger protocols with up to 70
states and over 2500 transitions.

Future work will concentrate on three problems: improving the
performance of our tool; automatically deciding if a \WSSS-protocol
computes the predicate described by a given Presburger formula; and
the diagnosis problem: when a protocol does not belong to \WSSS,
delivering an explanation, \eg a non-terminating fair execution. We
think that our constraint-based approach provides an excellent basis
for attacking these questions.


\clearpage

\bibliographystyle{acm}
\bibliography{references}

\clearpage
\appendix

\label{sec:appendix}
\section{Missing proofs of Section~\ref{sec:silent}}

For the proof of Proposition \ref{prop:silenthardness} we need to introduce 
Petri nets. Intuitively, Petri nets are similar to population protocols, but 
their transitions can also create and destroy agents.

A \emph{Petri net} $N = (P, T, F)$ consists of a finite set $P$ of
\emph{places}, a finite set $T$ of \emph{transitions}, and a {\em flow
  function} $F \colon (P \times T) \cup (T\times P) \rightarrow
\mathbb{N}$.  Given a transition \(t \in T\), the multiset \(\pre{t}\)
of \emph{input places} of $t$ is defined by \(\pre{t}(p) = F(p,t)\),
and the multiset \(\post{t}\) of \emph{output places} by \(\post{t}(p)
= F(t,p)\).  A \emph{marking} $M$ of a net $N$ is a multiset of
places. Given a place $p$, we say that $M$ puts $M(p)$ {\em tokens} in
$p$.  A transition $t\in T$ is \emph{enabled at} a marking $M$ if
\(\pre{t} \leq M\).  A transition $t$ enabled at $M$ can \emph{fire},
yielding the marking \(M' = M - \pre{t} + \post{t}\).  We write this
fact as $M\trans{t}M'$.  We extend enabledness and firing to sequences
of transitions as follows.  Let $\sigma=t_1\ldots t_k$ be a finite
sequence of transitions $t_j\in T$.  We write $M\trans{\sigma}M'$ and
call it a \emph{firing sequence} if there exists a sequence
$M_0,\ldots,M_k$ of markings such that
$M=M_0\trans{t_1}M_1\cdots\trans{t_k}M_k=M'$.  In that case, we say
that $M'$ is \emph{reachable from $M$} and denote by \({\it
  Reach}(N,M)\) the set of markings reachable from \(M\).

\propSilentHardness*

\begin{proof}
  The proof is very similar to the one of
  \cite[Theorem~10]{DBLP:conf/fsttcs/EsparzaGLM16}. However, since the
  proof requires small modifications at different places, we give it
  in full for completeness. The proof constructs a sequence of
  reductions from the Petri net reachability problem. Each step in the
  sequence transforms a problem on Petri nets into an equivalent
  problem closer to the model of population protocols. The first step
  uses a well-known result of Hack \cite{Hack76}. The reachability
  problem for Petri nets can be reduced in polynomial time to the
  single-place-zero-reachability problem:

  \begin{quote}
    Given a Petri net $N_0$, a marking $M_0$, and a place $\hat{p}$:
    decide whether some marking $M \in {\it Reach}(N_0, M_0)$
    satisfies $M(\hat{p})=0$.
  \end{quote}

  We introduce a normal form for Petri nets. A Petri net $N = (P, T,
  F)$ is said to be in \emph{normal form} if $F(x, y) \in \{0, 1\}$
  for every $x, y \in (P \times T) \cup (T \times P)$, and every
  transition $t$ satisfies $1 \leq |\pre{t}| \leq 2 $ and $1 \leq
  |\post{t}| \leq 2$. For every Petri net $N = (P, T, F)$ and markings
  $M_1, M_2$, one can construct a normal form Petri net
  $N'=(P',T',F')$ with $P \subseteq P'$ such that $M_2$ is reachable
  from $M_1$ in $N$ if and only if $M_2'$ is reachable from $M_1'$ in
  $N'$, and $M_i' = M_i + \vec{\ell}$, where $\ell$ is a special
  \emph{lock place}.
  Intuitively, each transition $t$ of $N$ with more than two input
  and/or output places is simulated in $N'$ by a \emph{widget}. The
  widget starts and finishes its execution by acquiring the lock and
  releasing it, respectively. This guarantees no two widgets are
  executing concurrently.  When simulating transition $t$, its
  widget first consumes, one by one, the tokens consumed by $t$ (as
  given by $\pre{t}$), and then produces, one by one, the tokens
  produced by $t$ (as given by $\post{t}$). Figure
  \ref{fig:widget} shows a transition and its widget.  Observe that
  all transitions of the widget are in normal form.

  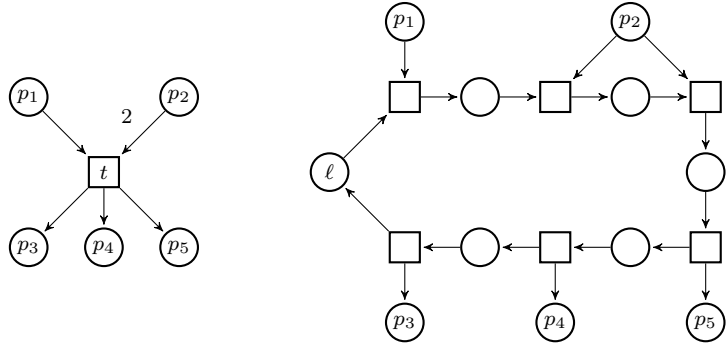
\begin{figure}[ht]
    \centering
    \scalebox{1}{\begin{tikzpicture}[>=stealth',bend angle=45,auto]
	\tikzstyle{every node}=[font=\scriptsize]
	\tikzstyle{place}=[circle,thick,draw=black,fill=white,minimum size=5mm,inner sep=0mm]
	\tikzstyle{transition}=[rectangle,thick,draw=black,fill=white,minimum size=4mm,inner sep=0mm]
	\tikzstyle{every label}=[black]
	
	\node [place] at (4,1) (p1){$p_1$};
        \node [place] at (6,1) (p2){$p_2$};
        \node [place] at (4,-1) (p4){$p_3$};
        \node [place] at (5,-1) (p5){$p_4$};
        \node [place] at (6,-1) (p6){$p_5$};

        \node [transition] at (5,0) (t1) {$t$}
  	edge [pre] (p1)  edge [pre] node {2} (p2) 
  	edge [post] (p4) edge [post] (p5) edge [post] (p6);

        \node [place] at (9,2) (np1){$p_1$};
        \node [place] at (12,2)(np2){$p_2$};
        \node [place] at (8,0) (r){$\ell$};
        \node [place] at (10,1)(p12){};
        \node [place] at (12,1)(p23){};
        \node [place] at (13,0)(p36){};
        \node [place] at (12,-1)(p65){};
        \node [place] at (10,-1) (p54){};
        \node [place] at (9,-2) (np4){$p_3$};
        \node [place] at (11,-2) (np5){$p_4$};
        \node [place] at (13,-2)(np6){$p_5$};

        \node [transition] at (9,1) (t1) {}
 	edge [pre] (r)   edge [pre] (np1)
        edge [post](p12);

        \node [transition] at (11,1) (t2) {}
 	edge [pre] (p12)   edge [pre] (np2)
        edge [post](p23);

        \node [transition] at (13,1) (t3) {}
 	edge [pre] (p23)   edge [pre] (np2)
        edge [post](p36);

        \node [transition] at (13,-1) (t6) {}
 	edge [pre] (p36) 
        edge [post](p65) edge [post](np6) ;

        \node [transition] at (11,-1) (t5) {}
 	edge [pre] (p65)   edge [post] (np5)
        edge [post](p54);

        \node [transition] at (9,-1) (t4) {}
 	edge [pre] (p54)   edge [post] (np4)
        edge [post](r);

\end{tikzpicture}}
    \caption{A Petri net transition (left) and its associated widget (right).}\label{fig:widget}
  \end{figure}
  
  Let $N_1$ be the result of normalizing $N_0$. Let $P_{aux}$ be the
  set of places of $N_1$ that are not places of $N_0$, and are
  different from the lock place $\ell$. The
  single-place-zero-reachability problem reduces to
  \begin{itemize}
  \item[(P1)] Does some marking $M \in {\it Reach}(N_1, M_0 +
    \vec{\ell})$ satisfy $M(\hat{p}) = 0$ and $M(P_{aux}) = 0$?
    \\ (Observe that $M(P_{aux}) = 0$ guarantees that no widget is in
    the middle of its execution.)
  \end{itemize}
  Now we add to $N_1$ a new place $p_0$ and a new widget simulating a
  transition $t_0$ with $ \pre{t_0} = \vec{p_0} $ and $\post{t_0} =
  M_0 + \vec{\ell}$.  Let the resulting net be $N_2$. Then (P1)
  reduces to:
  \begin{itemize}
  \item[(P2)] Does some marking $M \in {\it Reach}(N_2, \vec{p_0})$
    satisfy $M(\hat{p}) = M(p_0) = M(P_{aux}) = 0$?
  \end{itemize}

  For our next step, we ``reverse'' $N_2$: define $N_3$ as the result
  of reversing all arcs of ${N}_2$, \ie, $P_3 = P_2$, $T_3 = T_2$ but
  $F_3(x,y) = F_2(y,x)$ for every two nodes $x, y$.  Clearly, $N_3$ is
  in normal form when $N_2$ is. The problem~(P2) reduces to:

  \begin{itemize}
  \item[(P3)] Is $\vec{p_0} \in {\it Reach}(N_3, M)$ for some marking
    $M$ of $N_3$ satisfying $M(\hat{p}) = M(p_0) = M(P_{aux}) = 0$?
  \end{itemize}

  In the last step we reduce~(P3) to the membership problem for \WSS.
  Let $N_3 = (P_3, T_3, F_3)$. We construct a population protocol
  $\PP= (Q, T, \Sigma, I, O)$, defined as follows:
  \begin{itemize}
  \item $Q = P_3 \cup \{{\it Fresh}, {\it Used}, {\it
    Collect}\}$. That is, $\PP$ contains a state for each place of
    $N_3$, plus three auxiliary places.
  \item $\Sigma = Q \setminus (\{\hat{p}, p_0\} \cup P_{aux})$, and
    $I$ is the identity mapping.
  \item $O(p_0) = 1$ and $O(q) = 0$ for every $q \neq p_0$.
  \item $T = T_3' \cup T_P \cup T_s$.\\
  \end{itemize} 
These sets are formally described below. Intuitively, the transitions
of $T_3'$ simulate the Petri net transitions of $T_3$, the transitions
of $T_P$ guarantee that  a terminal consensus is reachable from every 
configuration that does not represent $\vec{p_0}$, 
and the $T_s$ are additional silent actions to make the protocol 
well-formed.

The transitions of $T_3'$ simulate the behaviour of $N_3$. For this,
$T_3'$ contains a transition $t'$ for every net transition $t \in
T_3$. If $t \in T_3$ has two input places $p_1, p_2$ and two output
places $p_1', p_2'$, then $t' = (p_1, p_2) \trans{} (p_1', p_2')$, The
other cases are: if $t$ has one input place $p_1$ and two output
places $p_1', p_2'$, then $t' = (p_1, {\it Fresh}) \trans{} (p_1',
p_2')$; if $t$ has two input places $p_1, p_2$ and one output place
$p_1'$, then $t' = (p_1, p_2) \trans{} (p_1', {\it Used})$; if $t$ has
one input place $p_1$ and one output place $p_1'$, then $t' = (p_1,
{\it Fresh}) \trans{} (p_1', {\it Used})$.

The transitions of $T_P$ test the presence of tokens anywhere, apart
from one single token in $p_0$. For every pair 
$(q, q') \in \left(\left(P_3 \setminus \set{p_0}\right) \times Q \right)  \cup \set{(p_0, p_0)}$, 
the set $T_P$ contains a transition
$(q, q') \mapsto ({\it Collect}, {\it Collect})$. Further, for every
place $q \in Q$, the set $T_P$ contains a transition $(q, {\it
  Collect}) \mapsto ({\it Collect}, {\it Collect})$.  Intuitively,
these transitions guarantee that as long as the current marking of
$N_3$ is different from $\mathbf{p}_0$, the protocol $\PP$ can reach a
terminal configuration with all agents in state ${\it Collect}$.

The set $T_s$ contains a silent transition $(q, q') \mapsto (q, q')$
for every pair $(q, q')$ of states.

Assume that $\vec{p_0} \in {\it Reach}(N_3, M)$ for some marking $M$
such that $M(\hat{p}) = M(p_0) = M(P_{aux}) = 0$. Let $\sigma$ be a
firing sequence such that $M \trans{\sigma} \vec{p_0}$. Observe that
$\sigma$ is nonempty, and must end with a firing of transition
$t_0$. Let $K$ be the number of times that transitions with only one
input place occur in $\sigma$. We claim that the initial configuration
$C$ given by $C({\it Fresh})= K$, and $C(p)=M(p)$ for every $p \in
P_3$ has a fair execution that does not reach a consensus. Indeed, the
finite execution of $\PP$ that simulates $\sigma$ by executing the
corresponding transitions of $T_3'$ (and which, abusing language, we
also denote $\sigma$), reaches a configuration $C'$ with $C'(p_0)=1$,
$C'({\it Fresh})=0$, $C'({\it Used}) > 0$ (because every transition that 
moves an agent to $p_0$ also moves an agent to ${\it Used}$), 
and $C'(p) = 0$ for any other place $p$. Since $O(p_0)= 1$ and $O({\it Used})=0$,
the configuration $C'$ is not a consensus configuration.
Since no transition of
$T_3' \cup T_P$ is enabled at $C'$, all transitions enabled at $C'$
are silent, and therefore from $C'$ it is not possible to reach a
consensus.

Assume now that $\vec{p_0} \notin {\it Reach}(N_3, M)$ for any marking
$M$ such that $M(\hat{p}) = M(p_0) = M(P_{aux}) = 0$. Then every
configuration reachable from any initial configuration enables some
transition of $T_P$. By fairness, every fair execution from any
initial configuration contains at least one transition of $T_P$, and
so some configuration reached along the execution populates state {\it
  Collect}. But then, again by fairness, the execution gets eventually
trapped in a terminal configuration $C$ of the form $C({\it Collect})>
0$ and $C(q)=0$ for every $q \notin {\it Collect}$. So every fair
execution is silent and stabilizes to $0$, and therefore the protocol
belongs to \WSS.

\end{proof}

\section{Missing proofs of Section~\ref{sec:strongly:silent}}

\begin{restatable}{proposition}{propWSS}\label{propWSS}
  A protocol belongs to \WSS if and only if it satisfies \termination
  and \nosplitterminal.
\end{restatable}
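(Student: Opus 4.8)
The plan is to prove the two implications separately, relying throughout on one structural fact: since every transition consumes two agents and produces two agents, the total number of agents is invariant, so the set of configurations reachable from a fixed $C$ is \emph{finite}. I would first record two lemmas about fair executions on such a finite reachability graph. (i) \emph{Existence}: from every configuration there is at least one fair execution; one builds it by following a finite path from $C$ into a bottom (terminal) strongly connected component $B$ of the reachability graph and then scheduling the steps inside $B$ in round-robin fashion, so that every configuration of $B$ and every step between them occurs infinitely often while configurations outside $B$ occur only finitely often. (ii) \emph{Propagation}: in any fair execution, if a configuration $D$ occurs infinitely often and $D \trans{} E$ is a step, then $E$ occurs infinitely often --- this is immediate from the fairness condition --- and hence, by induction on path length, every configuration reachable from $D$ occurs infinitely often.

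For the direction \WSS${}\Rightarrow{}$\termination${}+{}$\nosplitterminal, I would use silence. Given any configuration $C$, take a fair execution from $C$ (Lemma (i)); since $\PP$ is silent it is eventually constant at some $D$ with $C \trans{*} D$. If $D$ enabled a non-silent transition $D \trans{} D' \neq D$, then fairness (Lemma (ii)) would force the execution to leave $D$ infinitely often, contradicting constancy; hence $D$ is terminal, giving \termination. For \nosplitterminal, fix an initial $C$ and let $b$ be the value to which all fair executions from $C$ stabilize (well-specification). For any terminal $C'$ with $C \trans{*} C'$, the execution that follows this path and then stays at $C'$ forever is fair (only silent steps are available at $C'$); since it stabilizes to $b$ and is eventually constant at $C'$, the configuration $C'$ must be a consensus configuration with $O(C') = b$. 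Thus the same $b$ witnesses \nosplitterminal.

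For the converse, \termination${}+{}$\nosplitterminal${}\Rightarrow{}$\WSS, I would first establish silence. In any fair execution, finiteness forces some configuration $D$ to occur infinitely often; by \termination there is a terminal $C'$ with $D \trans{*} C'$, and by Lemma (ii) $C'$ occurs infinitely often. But once the execution is at a terminal $C'$ every step returns to $C'$, so from the first occurrence of $C'$ the execution is constant --- it is silent. Well-specification then follows: given initial $C$, let $b$ be the value provided by \nosplitterminal; every fair execution from $C$ is silent, hence eventually constant at some terminal $C'$ with $C \trans{*} C'$, which by \nosplitterminal is a consensus configuration with output $b$; therefore the execution stabilizes to $b$, and the same $b$ works for all fair executions from $C$.

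The main obstacle is making the fairness/reachability interplay rigorous rather than the implications themselves, which are short once it is in place: specifically, justifying the existence of a fair execution on the finite reachability graph (Lemma (i)) and the propagation lemma (Lemma (ii)) that lets reachability facts be upgraded to ``occurs infinitely often''. The agent-conservation observation guaranteeing finiteness is what makes both lemmas available, so I would state it explicitly at the outset.
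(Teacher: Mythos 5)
Your proof is correct and follows essentially the same route as the paper's: both reduce the statement to the equivalences ``silent $\Leftrightarrow$ \termination'' and ``for silent protocols, well-specified $\Leftrightarrow$ \nosplitterminal'', both use the fact that appending $(C')^\omega$ to a finite path ending in a terminal $C'$ yields a fair execution, and your propagation lemma (ii) is the same fairness-induction the paper phrases via the distance function $d(C_i)$ to the set of reachable terminal configurations. The only difference is that you make explicit the existence of fair executions from an arbitrary configuration (your Lemma (i)), which the paper leaves implicit.
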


\begin{proof}
  We prove a stronger result:
  \begin{itemize}
  \item[(a)] A protocol is silent if and only if it satisfies
    \termination.
  \item[(b)] A silent protocol is well-specified if and only if it
    satisfies \nosplitterminal.
  \end{itemize}

  \noindent ((a) $\Rightarrow$): Follows immediately from the
  definitions. \medskip

  \noindent ((a) $\Leftarrow$): Let $C_0$ be an arbitrary configuration,
  and let $\gamma = C_0 C_1 C_2 \cdots$ be a fair execution of the
  protocol. Let $\mathcal{C}_\bot$ be the set of terminal
  configurations reachable from $C_0$. Since \termination holds for
  every configuration, and so in particular for all of $C_1, C_2,
  \ldots$, all configurations of $\gamma$ can reach some cofiguration
  of $\mathcal{C}_\bot$.

  For every $C_i$, let $d(C_i)$ be the length of a shortest path from
  $C_i$ to some configuration of $\mathcal{C}_\bot$. We claim that for
  every $n \geq 0$, there are infinitely many indices $i$ such that
  $d(C_i) \leq n$. Since there are only finitely many configurations
  reachable from $C_0$, say $K$, we have $d(C_i) \leq K$ for every
  index $i \geq 0$. So it suffices to show that if there are
  infinitely many indices $i$ such that $d(C_i) \leq n$, then there
  are infinitely many indices $j$ such that $d(C_j) \leq n-1$.

  Let $i_1 \leq i_2 \leq i_3 \cdots$ be an infinite collection of
  indices such that $d(C_{i_j}) \leq n$ for every $j \geq 1$. By
  definition of $d$, for every configuration $C_{i_j}$ there is a
  step $C_{i_j}\trans{}C_{i_j}'$ such that $d(C_{i_j}')=n-1$.  By
  fairness, we have $C_{i_j}' = C_{i_j +1}$ for infinitely many $j
  \geq 1$, and the claim is proved. By this claim, there are
  infinitely many indices $i$ such that $d(C_i) \leq 0$, \ie, $C_i \in
  \mathcal{C}_\bot$. Let $i_0$ be one of them. Since
  $\mathcal{C}_\bot$ only contains terminal configurations, we have
  $C_i = C_{i_0}$ for every $i \geq i_0$, and so $\gamma$ converges to
  $C_\bot$.

  \medskip \noindent ((b) $\Rightarrow$) Let $\PP$ be a silent and
  well-specified protocol. Let $C_0$ be an initial configuration of
  $\PP$, and let $C_0 C_1 \cdots C_n$ be a finite prefix of an
  execution such that $C_n$ is terminal. The execution $C_0 C_1 \cdots
  (C_n)^\omega$ is fair. Since the protocol is well-specified, $C_n$
  is a consensus configuration.

  \medskip \noindent ((b) $\Leftarrow$) Let $\PP$ be a silent protocol
  satisfying \nosplitterminal. By convergence, every fair execution
  starting at an initial configuration $C$ eventually reaches a
  terminal configuration. Since $\PP$ satisfies \nosplitterminal, all
  these configurations are consensus configurations, and moreover they
  all agree to the same boolean value.
\end{proof}

\begin{restatable}{proposition}{layterimpter}\label{layterimpter}
  \layeredtermination implies \termination.
\end{restatable}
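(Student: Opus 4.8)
The plan is to prove \termination by induction on the layers of the ordered partition $(T_1, \ldots, T_n)$ witnessing \layeredtermination. Writing $U_i \defeq T_1 \cup \cdots \cup T_i$ (with $U_0 \defeq \emptyset$), the goal is to start from an arbitrary configuration $C$ and successively reach configurations $C = C^{(0)}, C^{(1)}, \ldots, C^{(n)}$ such that each $C^{(i)}$ is terminal in $\PP[U_i]$ and $C^{(i-1)} \trans{*} C^{(i)}$ using only transitions of $T_i$. Since $U_n = T$, the final configuration $C^{(n)}$ is then terminal in $\PP = \PP[T]$ and reachable from $C$, which is exactly what \termination requires.

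The key step extracts from condition~(a) the following claim: from every configuration $D$, the protocol $\PP[T_i]$ can reach a configuration terminal in $\PP[T_i]$. I would prove this by contradiction. If no configuration terminal in $\PP[T_i]$ were reachable from $D$ via $\trans{*}_{T_i}$, then every configuration reachable from $D$ in $\PP[T_i]$ would enable a non silent transition of $T_i$. Repeatedly firing such a transition yields an infinite execution of $\PP[T_i]$ with infinitely many non silent steps; since each non silent step changes the configuration, this execution never stabilizes and so is not silent, contradicting condition~(a).

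With this claim, the induction goes through smoothly. The base case is immediate, since $U_0 = \emptyset$ means that $C^{(0)} = C$ enables only silent transitions in $\PP[U_0]$ and is trivially terminal there. For the inductive step, assume $C^{(i-1)}$ is terminal in $\PP[U_{i-1}]$. Applying the claim to $D = C^{(i-1)}$ produces $C^{(i)}$ with $C^{(i-1)} \trans{*}_{T_i} C^{(i)}$ that is terminal in $\PP[T_i]$. By condition~(b), since $C^{(i-1)}$ is terminal in $\PP[U_{i-1}]$, so is $C^{(i)}$. A configuration is terminal in $\PP[U_i] = \PP[U_{i-1} \cup T_i]$ precisely when every enabled transition of $U_{i-1} \cup T_i$ is silent, so being terminal in both $\PP[U_{i-1}]$ and $\PP[T_i]$ makes $C^{(i)}$ terminal in $\PP[U_i]$, completing the step. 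Finally, each move $C^{(i-1)} \trans{*}_{T_i} C^{(i)}$ uses only transitions of $T_i \subseteq T$ (the self loops added in $\PP[T_i]$ are silent and can be dropped), so concatenating the paths gives $C \trans{*} C^{(n)}$ in $\PP$.

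The main obstacle is exactly the key step, namely deriving reachability of a $\PP[T_i]$-terminal configuration from condition~(a): that condition only asserts that all executions are silent, and one must argue carefully that the impossibility of an infinite non silent execution actually guarantees that a terminal configuration is reachable, rather than merely that every execution eventually stabilizes at some (possibly non terminal) configuration. The remaining bookkeeping — that terminality in $\PP[U_{i-1}]$ and in $\PP[T_i]$ combine to terminality in $\PP[U_i]$, and that the $T_i$-moves lift back to $\PP$ — is routine.
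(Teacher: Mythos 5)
Your proof is correct and follows essentially the same approach as the paper's: iterate over the layers, reach a configuration terminal in $\PP[T_i]$ at each stage, and use condition~(b) to preserve terminality with respect to the earlier layers. The only difference is that you spell out the key step the paper leaves implicit --- that silence of all executions of $\PP[T_i]$ (condition~(a)) guarantees a $\PP[T_i]$-terminal configuration is reachable, via the observation that otherwise one could fire non silent transitions forever --- which is a welcome addition rather than a deviation.
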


\begin{proof}
  Let $\PP = (Q, T, \Sigma, I, O)$ be a population protocol satisfying
  \layeredtermination, and let $C$ be an arbitrary configuration of
  $\PP$. Let $(T_1, T_2, \ldots, T_n)$ be the ordered partition of $T$
  for \layeredtermination. There exists a sequence $w_1 \in T_1^*$
  such that $C \trans{w_1} C_1$, and $C_1$ is a terminal configuration
  of $\PP[T_1]$. By the same reasoning, there exists a sequence $w_2
  \in T_2^*$ such that $C_1 \trans{w_2} C_2$, and $C_2$ is a terminal
  configuration of $\PP[T_2]$; further, by condition~(b) of Definition~\ref{def:layers},
  $C_2$ is also a terminal configuration of $T_1$. Iterating this process we
  find $C_1 \trans{w_1 \ldots w_n} C_n$ such that $C_n$ is a terminal
  configuration of all of $\PP[T_1], \PP[T_2], \ldots,
  \PP[T_n]$. Therefore, $C_n$ is a terminal configuration of $\PP$.
\end{proof}

We prove the claim made in the proof of
Proposition~\ref{prop:property1}:

\begin{restatable}{lemma}{tinv}\label{lem:tinv}
  Let $\PP = (Q, T, \Sigma, I, O)$ be a population protocol and let
  $U$ be the set of non silent transitions of $T$. $\PP$ has a
  non silent execution if and only if there exists $\vec{x} : U \to
  \Q$ such that $\sum_{t \in U} \vec{x}(t) \cdot (\post{t}(q) -
  \pre{t}(q)) = 0$ and $\vec{x}(q) \geq 0$ for every $q \in Q$, and
  $\vec{x}(q) > 0$ for some $q \in Q$.
\end{restatable}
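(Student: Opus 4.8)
The plan is to prove both directions of this characterization of the existence of a non-silent execution. This is a classical Petri-net-style result connecting the existence of an infinite non-silent run to the solvability of a rational system of equations capturing a ``cycle'' in the reachability graph. Let me sketch how I would attack it.

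\medskip

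\noindent\textbf{($\Rightarrow$) From a non-silent execution to a rational solution.} Suppose $\PP$ has a non-silent execution $C_0 C_1 C_2 \cdots$. Since the protocol has finitely many states and transitions, but configurations range over an \emph{infinite} set, I cannot immediately appeal to finiteness of the reachable set. The key observation is that along the execution, infinitely many non-silent transitions occur (otherwise the execution would eventually stay constant). The plan is to find two indices $i < j$ such that $C_i \leq C_j$ componentwise. This requires Dickson's lemma (well-quasi-ordering of $\N^Q$): among the infinitely many configurations $C_0, C_1, \ldots$, there must be indices $i < j$ with $C_i \leq C_j$, and moreover we can choose them so that at least one non-silent transition occurs strictly between $i$ and $j$. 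Let $w$ be the sequence of transitions taken from $C_i$ to $C_j$, and define $\vec{x}(t) \defeq |w|_t$ for every $t \in U$. The flow equation applied to $w$ gives $C_j(q) - C_i(q) = \sum_{t \in U} \vec{x}(t)\cdot(\post{t}(q) - \pre{t}(q))$ for every $q$. Since $C_i \leq C_j$, each left-hand side is $\geq 0$, which is not quite the homogeneous equation $=0$ we want. The fix is to realize that a \emph{non-negative} $T$-invariant-like vector is not enough; instead, I would iterate the pumping argument, or argue that a non-silent \emph{infinite} run forces a genuine cycle $C_i \trans{w} C_i$ with $C_i = C_j$. The cleaner route is: since the run is infinite and non-silent, some configuration repeats (this needs the reachable set to be finite, which is \emph{not} automatic). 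So the honest approach is to show one can extract an infinite run staying within a bounded region, or pump the increasing segment to obtain a rational vector in the limit. Concretely, from $C_i \leq C_j$ with a non-silent transition in between, the vector $\vec{x}(t)=|w|_t$ satisfies $\sum_t \vec{x}(t)(\post{t}(q)-\pre{t}(q)) \geq 0$ for all $q$, with $\vec{x} \geq 0$ and $\vec{x} \neq 0$ on $U$. To upgrade ``$\geq 0$'' to ``$=0$'', I would invoke the standard fact that if a configuration is increasable by a firing sequence, one can iterate to make it strictly unbounded, and then a Farkas-type / projection argument yields a nonzero rational vector in the kernel of the displacement matrix supported on $U$ with the desired positivity.

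\medskip

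\noindent\textbf{($\Leftarrow$) From a rational solution to a non-silent execution.} Conversely, suppose $\vec{x} : U \to \Q$ satisfies the homogeneous system with $\vec{x} \geq 0$ and $\vec{x}(q) > 0$ for some $q$. By clearing denominators I may assume $\vec{x} : U \to \N$. The vector $\vec{x}$ is a \emph{$T$-invariant} supported on non-silent transitions: firing each transition $t$ exactly $\vec{x}(t)$ times (in any feasible order) returns to the starting configuration, because the net displacement is zero by the equation $\sum_t \vec{x}(t)(\post{t}(q)-\pre{t}(q)) = 0$. The plan is to take a configuration $C$ large enough to enable a firing sequence $w$ with $|w|_t = \vec{x}(t)$ (such $C$ exists: put enough agents in every state, using the self-loop transitions $T'$ of $\PP[\cdot]$ to reshuffle if needed, or simply start from a configuration dominating all the $\pre{t}$). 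Since $\vec{x}$ is nonzero on $U$, the sequence $w$ contains at least one non-silent transition, and since $C \trans{w} C$ returns to $C$, the infinite repetition $w^\omega$ is a genuine execution $C \trans{w} C \trans{w} C \cdots$ that is \emph{not} silent. This gives the non-silent execution.

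\medskip

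\noindent\textbf{Main obstacle.} The delicate direction is ($\Rightarrow$), specifically upgrading the inequality $C_i \leq C_j$ (which only yields a vector with non-negative displacement) to the \emph{equality} $=0$ required by the homogeneous system. The clean way around this is the observation that any vector $\vec{y} \geq 0$ with $\sum_t \vec{y}(t)(\post{t}(q)-\pre{t}(q)) \geq 0$ for all $q$ can be combined with the finiteness of enabled behavior to produce a strictly positive-displacement direction, which contradicts boundedness unless the kernel component is itself nonzero; alternatively, one shows that a non-silent fair-free infinite run must, by König's lemma together with Dickson, contain a \emph{properly repeating} configuration $C_i = C_j$ (not merely $C_i \leq C_j$) by working in a suitably bounded sub-execution, at which point the displacement is exactly zero and $\vec{x}(t)=|w|_t$ is the required rational (indeed integer) solution. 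I expect the bookkeeping to guarantee the repeated configuration encloses a non-silent transition to be the step requiring the most care.
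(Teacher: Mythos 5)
Your backward direction is correct and essentially identical to the paper's: clear denominators to get $\vec{x} : U \to \N$, pick $C$ with $C(q)$ large enough in every state to enable some linearization $w$ of $\vec{x}$, observe that the flow equation forces $C \trans{w} C$, and iterate $w^\omega$; since $\vec{x}$ is supported on non-silent transitions the resulting execution is not silent.

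The forward direction, however, has a genuine gap, and the gap stems from overlooking the one structural fact that makes this lemma easy: every transition of a population protocol satisfies $|\pre{t}| = |\post{t}| = 2$, so every step preserves the total number of agents. Consequently every configuration reachable from $C_0$ has size $|C_0|$, and there are only finitely many multisets of a fixed size over the finite set $Q$ --- the reachable set is \emph{always} finite. You explicitly assert the opposite (``this needs the reachable set to be finite, which is not automatic'') and therefore reroute through Dickson's lemma, obtaining only $C_i \leq C_j$ and a vector with displacement $\geq 0$ rather than $= 0$. You then acknowledge that upgrading the inequality to an equality is unresolved and gesture at a pumping/Farkas argument without carrying it out, so the direction is not actually proved. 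Note that conservation of agents also closes your detour instantly: $C_i \leq C_j$ together with $|C_i| = |C_j|$ already forces $C_i = C_j$. The paper's argument is simply: drop the silent transitions from the execution (infinitely many non-silent ones remain, since the execution is non-silent), use finiteness of the reachable set to find $i < j$ with $C_i = C_j$, let $w$ be the non-empty intervening sequence of non-silent transitions, and set $\vec{x}(t) \defeq |w|_t$; the flow equation then gives the homogeneous system exactly.
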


\begin{proof}
  $\Rightarrow$) Let $C_0 C_1 \cdots$ be a non silent execution. Since
  taking a silent transition does not change the current
  configuration, we can assume that the transitions occurring in the
  execution are non silent. Since the total number of agents of a
  configuration is left unchanged by transitions, there exist indices
  $i < j$ such that $C_i = C_j$. Let $w$ be the non empty sequence of
  transitions leading from $C_i$ to itself. By the flow equation with
  $C \defeq C_i$ and $C' \defeq C_i$ we have $\sum_{t \in T} |w|_t
  \cdot (\post{t}(q) - \pre{t}(q)) = 0$ for every state $q$. Define
  $\vec{x}(t) \defeq |w|_t$.\medskip

  $\Leftarrow$) Without loss of generality, we can assume $\vec{x}(q)
  \in \N$. If this is not the case, we multiply $\vec{x}$ by a
  suitable coefficient which yields another vector satisfying the
  conditions. Let $w \in U^*$ be any sequence of transitions such that
  $|w|_t = \vec{x}(t)$ for every $t \in U$, and $|w|_t = 0$
  otherwise. Choose a configuration $C$ such that $C \trans{w} C'$ for
  some configuration $C'$. Observe that $C$ exists, for example it
  suffices to take $C(q) > 2 \cdot |w|$ for every state $q$. By the
  flow equation, we have $C' = C$. So $w^\omega$ is a non silent
  execution of $\PP$ from $C$.
\end{proof}

We prove the claim made in the proof of
Proposition~\ref{prop:property2}:

\begin{restatable}{lemma}{udead}\label{lem:udead}
  Let $\PP= (Q, T, \Sigma, I, O)$ be a protocol, and let $U \subseteq
  T$. It can be decided in polynomial time if $\PP$ is $U$-dead.
\end{restatable}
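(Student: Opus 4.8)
The plan is to turn the (a priori unbounded) quantification over all $U$-dead configurations into a finite, polynomially checkable test, thereby recovering the characterization quoted in the proof of Proposition~\ref{prop:property2}. First I would reduce non-$U$-deadness to a \emph{single step}. Suppose $\PP$ is not $U$-dead: there is a $U$-dead configuration $C_0$ and a sequence $C_0 \trans{*} C$ with $C$ not $U$-dead. Let $D \trans{t} D'$ be the first step along this sequence for which $D$ is $U$-dead but $D'$ is not. Since $D$ is $U$-dead, the transition $t$ can be neither silent nor a member of $U$ --- in both cases $D' = D$ would again be $U$-dead --- so $t$ is a non-silent transition of $T$; and $D'$ failing to be $U$-dead means some non-silent $u \in U$ is enabled at $D'$, i.e. $D' \geq \pre{u}$. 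The converse is immediate, so $\PP$ is not $U$-dead iff there exist a non-silent $t \in T$, a non-silent $u \in U$, and a configuration $C$ with $C \geq \pre{t}$, $C$ $U$-dead, and $C \mminus \pre{t} + \post{t} \geq \pre{u}$.

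The crucial step is to remove the existential over $C$. For fixed $t$ and $u$, the constraints $C \geq \pre{t}$ and $C \mminus \pre{t} + \post{t} \geq \pre{u}$ read, componentwise, $C(q) \geq \max(\pre{t}(q),\, \pre{t}(q) + \pre{u}(q) - \post{t}(q))$, whose pointwise minimum is $C_\text{min} \defeq \pre{t} + (\pre{u} \mminus \post{t})$; note $|C_\text{min}| \geq |\pre{t}| = 2$, so $C_\text{min} \in \pop{Q}$. Being $U$-dead --- enabling no non-silent transition of $U$ --- is downward closed under $\leq$, so if any admissible $C$ (necessarily $C \geq C_\text{min}$) is $U$-dead, then so is $C_\text{min}$. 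Hence a witness for $(t,u)$ exists iff $C_\text{min}$ is itself $U$-dead, that is, iff $\pre{u'} \nleq \pre{t} + (\pre{u} \mminus \post{t})$ for every non-silent $u' \in U$ --- exactly the condition asserted in Proposition~\ref{prop:property2}.

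To finish, the decision procedure enumerates every non-silent $t \in T$ and non-silent $u \in U$, and for each pair tests the displayed inequality against every non-silent $u' \in U$ at cost $O(|Q|)$ per comparison, for a total of $O(|T| \cdot |U|^2 \cdot |Q|)$ time, which is polynomial. I expect the monotonicity/minimality argument of the second paragraph to be the delicate part: the ``largeness'' constraints from enabling $t$ and then $u$ force $C$ upward, while $U$-deadness forces it downward, and the soundness of collapsing the whole search to the single candidate $C_\text{min}$ rests precisely on the fact that $U$-deadness is preserved when passing to smaller configurations.
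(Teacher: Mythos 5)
Your proof is correct and follows essentially the same route as the paper's: both isolate the first step $D \trans{t} D'$ leaving the set of $U$-dead configurations and reduce the existential over configurations to testing $U$-deadness of the pointwise-minimal witness $\pre{t} + (\pre{u} \mminus \post{t})$, the paper phrasing the downward-closure step as the inequality chain $\pre{u'} \leq \pre{s_n} + (\pre{u} \mminus \post{s_n}) \leq C_{n-1}$. The only cosmetic difference is that you let $t$ range over all non-silent transitions of $T$ rather than $T \setminus U$, which is harmless since a non-silent $t \in U$ enabled at $C_{\min}$ already makes $C_{\min}$ non-$U$-dead, so such pairs never pass the test.
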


\begin{proof}
  Let $S = T \setminus U$. We claim that $\PP$ is \emph{not} $U$-dead
  if and only if there exist $s \in S$ and non silent $u \in U$ such that
  \begin{align}
    \pre{u'} \nleq \pre{s} + (\pre{u} \mminus \post{s}) \text{ for
      every } u' \in U \text{ such that } \post{u'} \neq
    \pre{u'}.\label{eq:t:dead}
  \end{align}
  The polynomial time upper bound follows by simply testing this
  condition on every pair $(s, u) \in S \times U$. Let us now prove
  the claim.

  $\Leftarrow$) Suppose there exist $s \in S$ and non silent $u \in U$ such
  that~(\ref{eq:t:dead}) holds. Let $C_0 \in \pop{Q}$ be the
  configuration such that $C_0 \defeq \pre{s} +
  (\pre{u} \mminus \post{s})$. By~(\ref{eq:t:dead}), $C_0$ 
  is $U$-dead. Moreover, $C_0 \trans{s} C$ where $C =
  (\pre{u} \mminus \post{s}) + \post{s}$. Therefore, $u$ can be taken
  from $C$ since $\pre{u} \leq C$. We conclude that $\PP$ is not
  $U$-dead from $C_0$.

  $\Rightarrow$) If $\PP$ is not $U$-dead, then there exists $C_0
  \trans{s_1} C_1 \trans{s_2} \cdots \trans{s_n} C_n$ such that $C_0,
  \ldots, C_{n-1}$ are $U$-dead, $s_1, s_2, \dots, s_n \in S$, and
  $C_n$ is not $U$-dead. Let $u \in U$ be some non silent transition that can be
  taken from $C_n$, \ie such that $\pre{u} \leq C_n$. Suppose there
  exists some $u' \in U$ such that $\post{u'} \neq \pre{u'}$ and
  $\pre{u'} \leq \pre{s_n} + (\pre{u} \mminus \post{s_n})$. We obtain
  \begin{align*}
    \pre{u'}
    &\leq \pre{s_n} + (\pre{u} \mminus \post{s_n}) \\[2pt]
    &\leq \pre{s_n} + (C_n \mminus \post{s_n}) && \text{(by $\pre{u} \leq C_n$)} \\[2pt]
    &\leq C_n \mminus \post{s_n} + \pre{s_n} \\
    &= C_{n-1} && \text{(by $C_{n-1} \trans{s_n} C_n$)\ .}
  \end{align*}
  Therefore, $C_{n-1} \trans{u'} C$ for some $C \in
  \pop{Q}$. Moreover, $C \neq C_{n-1}$ since $\post{u'} \neq
  \pre{u'}$. This contradicts the fact that $C_{n-1}$ is $U$-dead,
  hence~(\ref{eq:t:dead}) holds.
\end{proof}

\section{Missing proofs of Section~\ref{sec:strong:protocols}}

\subsection{Threshold protocol}

We first prove the two claims made in the proof of
Prop.~\ref{prop:threshold:strong:consensus} and then give a full proof
of Prop.~\ref{prop:threshold:strong:consensus}.

\begin{proposition}\label{prop:threshold:val}
  For every $C, C' \in \pop{Q}$ and $\vec{x} : T \to \N$, if $(C, C',
  \vec{x})$ is a solution to the flow equations, then $\val{C} =
  \val{C'}$.
\end{proposition}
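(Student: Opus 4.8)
The plan is to recognize $\val$ as a \emph{linear invariant} of the protocol: a weighting of the states that is orthogonal to the displacement of every transition. For a transition $t$ write $\Delta_t(q) \defeq \post{t}(q) - \pre{t}(q)$ for its effect on state $q$. The heart of the argument is the claim that every transition conserves value, namely
$$\sum_{q \in Q} \Delta_t(q) \cdot \val{q} = 0 \quad \text{for every } t \in T.$$

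To establish this I would check it transition by transition. Every non-silent transition has the form $(1, n, o), (\ell, n', o') \mapsto (1, f(n,n'), b(n,n')), (0, g(n,n'), b(n,n'))$. Since $\val{(\ell, m, o)} = m$ depends only on the middle (numeric) component and ignores the leader and opinion bits, the total value consumed is $n + n'$ while the total value produced is $f(n,n') + g(n,n')$. By the very definition $g(m,n) \defeq (m + n) - f(m,n)$, we get $f(n,n') + g(n,n') = n + n'$, so consumed and produced value coincide and the displacement sum vanishes. For every remaining transition of $T$ (the silent ones added to make $\PP_{\text{thr}}$ well-formed) we have $\pre{t} = \post{t}$, hence $\Delta_t = \vec{0}$ and the identity holds trivially.

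With this in hand the proposition is a one-line computation. Multiplying the flow equation~(\ref{eq:flow}) by $\val{q}$ and summing over $q \in Q$, then exchanging the order of summation, yields
$$\val{C'} = \sum_{q \in Q} C'(q)\,\val{q} = \val{C} + \sum_{t \in T} \vec{x}(t) \sum_{q \in Q} \Delta_t(q)\,\val{q} = \val{C},$$
since each inner sum is zero by the conservation identity.

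I do not expect a genuine obstacle here: the only mathematical content is the algebraic identity $f(n,n') + g(n,n') = n + n'$, which is immediate from the definition of $g$, and everything else is a routine rearrangement of the flow equation. The single point deserving a word of care is that $\val$ depends only on the numeric component of a state, so that the leader and opinion bits — which $f$ and $g$ leave untracked — play no role in the balance.
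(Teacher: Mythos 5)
Your proof is correct and follows essentially the same route as the paper's: both arguments reduce the claim to the conservation identity $\val{\pre{t}} = \val{\post{t}}$ (your $\sum_{q} (\post{t}(q)-\pre{t}(q))\cdot\val{q} = 0$), establish it from $f(m,n)+g(m,n)=m+n$, and then conclude by multiplying the flow equations by $\val{q}$, summing over $q$, and exchanging the order of summation. No substantive difference.
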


\begin{proof}
  Assume $(C, C', \vec{x})$ is a solution to the flow equations. For
  every $m, n \in [-\lmax, \lmax]$, we have $g(m, n) + f(m, n) = m +
  n$. Therefore, $\val{\pre{t}} = \val{\post{t}}$ for every $t \in
  T$. This implies:
  \begin{align*}
      \val{C'} &= \sum_{q \in Q} (C(q) + \sum_{t \in T} \vec{x}(t) \cdot (\post{t}(q) -
        \pre{t}(q))) \cdot \val{q} \\
    &= \val{C} + \sum_{q \in Q} \sum_{t \in T} \vec{x}(t) \cdot (\post{t}(q) -
    \pre{t}(q)) \cdot \val{q} \\
    &= \val{C} + \sum_{t \in T} \vec{x}(t) \cdot \left[\sum_{q \in Q}
      \post{t}(q) \cdot \val{q} - \sum_{q \in Q} \pre{t}(q) \cdot
      \val{q}\right] \\
    &= \val{C} + \sum_{t \in T} \vec{x}(t) \cdot \val{\post{t}} -
    \val{\pre{t}} \\
    &= \val{C}. \qedhere
  \end{align*}
\end{proof}

\begin{proposition}\label{prop:threshold:term}
  Let $C, C' \in \pop{Q}$ be terminal configuration that contain a
  leader. Both $C$ and $C'$ are consensus configurations. Moreover, if
  $\val{C} = \val{C'}$, then $O(C) = O(C')$.
\end{proposition}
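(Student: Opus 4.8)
The plan is to pin down the exact shape of a terminal configuration containing a leader, read off its output from that shape, and then show that this output is completely determined by $\val{\cdot}$, so that equal values force equal outputs.

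First I would observe that a terminal configuration can contain \emph{at most} one leader (an agent whose first component is $1$). Every non-silent transition has a leader in its preset, and if the leader's partner is itself a leader, then the post has only one leader whereas the pre has two; such a transition is therefore never silent. Hence, if $C$ is terminal and contains a leader, it contains exactly one, say in state $(1, n^*, o^*)$, and all remaining agents are non-leaders. The only candidate non-silent transitions left are interactions of this leader with a non-leader $(0, n', o')$. Writing out the multiset equality $\pre{t} = \post{t}$ and matching the two agents by their leader bits, such a transition is silent if and only if $f(n^*, n') = n^*$ (equivalently $g(n^*, n') = n'$) and $o^* = o' = b(n^*, n') = (n^* < c)$. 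Since $C$ is terminal, this holds for every non-leader in $\supp{C}$; in particular every agent of $C$ carries opinion $(n^* < c)$, so $C$ is a consensus configuration with $O(C) = (n^* < c)$. The same reasoning applied to $C'$ yields $O(C') = (n^*_{C'} < c)$ for its leader value $n^*_{C'}$, which already proves the first assertion.

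The core of the argument is the \emph{moreover} part, for which I would prove the stronger identity $O(C) = (\val{C} < c)$, after which $\val{C} = \val{C'}$ immediately gives $O(C) = O(C')$. This rests on a case analysis of the constraint $f(n^*, n') = n^*$ that every non-leader must satisfy. If $-\lmax < n^* < \lmax$, saturation of $f$ cannot occur, so $f(n^*, n') = n^*$ forces $n' = 0$ for every non-leader; then $\val{C} = n^*$ and $O(C) = (n^* < c) = (\val{C} < c)$. If $n^* = \lmax$, the constraint forces $n' \geq 0$ for every non-leader, hence $\val{C} \geq \lmax$; using $\lmax \geq |c| + 1 > c$ I get $\val{C} > c$, so $O(C) = (\lmax < c) = 0 = (\val{C} < c)$. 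The case $n^* = -\lmax$ is symmetric: all non-leaders have $n' \leq 0$, so $\val{C} \leq -\lmax < c$ (again from $\lmax \geq |c| + 1$), and $O(C) = (-\lmax < c) = 1 = (\val{C} < c)$. In every case $O(C) = (\val{C} < c)$, and the conclusion follows.

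The main obstacle I anticipate is the bookkeeping in this last case analysis: getting the saturation behaviour of $f$ exactly right at the boundary values $\pm\lmax$, and verifying the strict inequalities $\lmax > c$ and $-\lmax < c$ from $\lmax \geq |c| + 1$ (splitting on the sign of $c$ if needed). By contrast, the structural steps—the uniqueness of the leader and the characterization of silent leader/non-leader interactions—are a careful but routine reading of the multiset transition rule.
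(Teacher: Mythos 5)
Your proof is correct. It rests on the same two structural facts as the paper's own argument: (i) in a terminal configuration the unique leader's opinion must agree with the predicate applied to its own value, because any interaction involving mismatched opinions is non-silent, and (ii) terminality prevents the leader from absorbing any further value, which constrains the values of the non-leaders. Where you differ is in how the ``moreover'' part is packaged. The paper argues by contradiction: assuming $O(C)=1$ and $O(C')=0$, it derives the one-sided inequalities $\val{p_C}\ge\val{C}$ and $\val{p_{C'}}\le\val{C'}$ (if some non-leader carried value of the right sign, the unsaturated leader could still absorb part of it, contradicting terminality), and then $\val{C}<c\le\val{C'}$ contradicts $\val{C}=\val{C'}$. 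You instead prove the sharper identity $O(C)=(\val{C}<c)$ by a three-way case split on whether the leader's value lies strictly between $-\lmax$ and $\lmax$ or sits at a saturation boundary; equality of values then gives equality of outputs immediately. Your route needs slightly more bookkeeping at $\pm\lmax$ (all of which you carry out correctly, including the inequalities $-\lmax<c<\lmax$ from $\lmax\ge|c|+1$), but it yields an explicit formula for the output of any terminal configuration with a leader --- strictly more than the proposition asks for, and directly reusable for showing that $\PP_{\text{thr}}$ actually computes the threshold predicate. Your preliminary observation that a terminal configuration contains at most one leader is also a clean way to justify the step that the paper leaves implicit when it asserts $\val{p_C}<c$ from $O(C)=1$.
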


\begin{proof}
  We prove the first claim for $C$. The argument is identical for
  $C'$. Suppose that $C$ is not a consensus configuration. Let $(1, m,
  o) \in \supp{C}$ be a leader of $C$. Since $C$ is not a consensus
  configuration, there exists $(\ell, n, \neg o) \in
  \supp{C}$. Therefore, the following transition $t$ is enabled at
  $C$:
  \begin{align*}
    (1, m, o), (\ell, n, \neg o) \mapsto (1, f(m, n), b(m, n)), (0,
    g(m, n), b(m, n)).
  \end{align*}
  Moreover, $t$ is non silent which contradicts the fact that $C$ is
  terminal. Thus, $C$ is a consensus configuration.

  Assume that $\val{C} = \val{C'}$. Suppose that $O(C) \neq O(C')$ for
  the sake of contradiction. Without loss of generality, we may assume
  that $O(C) = 1$ and $O(C') = 0$. Let $p_C, p_{C'} \in Q$ be
  respectively leaders of $C$ and $C'$. We have $\val{p_C} < c <
  \lmax$ and $\val{p_{C'}} \geq c > -\lmax$. We claim that
  \begin{align}
    \val{p_C} \geq \val{C} \text{ and } \val{p_{C'}} \leq
    \val{C'}.\label{eq:val:pc}
  \end{align}
  To see that the claim holds, suppose that $\val{p_C} <
  \val{C}$. There exists some $q_C \in \supp{C}$ such that $\val{q_C}
  > 0$. Since $\val{p_C} < \lmax$, some part of the value of $q_C$ can
  be transferred to $p_C$, \ie there exists a non silent transition $t
  \in T$ with $\pre{t} = \multiset{p_C, q_C}$, which contradicts that
  $C$ is terminal. Thus, $\val{p_C} \geq \val{C}$ holds. The case
  $\val{p_{C'}} \leq \val{C'}$ follows by a similar argument.

  Now, by~(\ref{eq:val:pc}) we have $ \val{C} \leq \val{p_C} < c$ and
  $\val{C'} \geq \val{p_{C'}} \geq c$ which is a contradiction since
  $\val{C} = \val{C'}$. Therefore, $O(C) = O(C')$.\qedhere
\end{proof}

\propThresholdConsensus*

\begin{proof}
  Let $\val{q} \defeq n$ for every state $q = (\ell, n, o) \in Q$, and
  let $\val{C} \defeq \sum_{q \in Q} C(q) \cdot \val{q}$ for every
  configuration $C \in \pop{Q}$. Suppose for the sake of contradiction
  that $\PP_{\text{thr}}$ does not satisfy \snosplitterminal. There
  are two cases to consider.
  \begin{itemize}
    \item There exist $C, C' \in \pop{Q}$ such that $C \ptrans{} C'$,
      $C$ is initial, $C'$ is terminal and $C'$ is not a consensus
      configuration. Since $C$ is initial, it contains a leader. It is
      readily seen that the set of leaders forms a $U$-trap for every
      $U \subseteq T$, which implies that $C'$ contains a leader as
      $(C, C', \vec{x})$ satisfying $U$-trap constraints for all $U$.
      By Prop.~\ref{prop:threshold:term}, $C'$ is a consensus
      configuration, which is a contradiction.\medskip

    \item There exist $C_0, C, C' \in \pop{Q}$ and $\vec{x}, \vec{x}'
      : T \to \N$ such that $C_0 \ptrans{\vec{x}} C$, $C_0
      \ptrans{\vec{x}'} C'$, $C_0$ is initial, $C$ and $C'$ are
      terminal consensus configurations, and $O(C) \neq O(C')$. Note
      that $(C_0, C, \vec{x})$ and $(C_0, C', \vec{x}')$ both satisfy
      the flow equations. Therefore, by
      Prop.~\ref{prop:threshold:val}, $\val{C} = \val{C_0} =
      \val{C'}$. Again, since $C_0$ is initial, it contains a leader,
      which implies that both $C$ and $C'$ contain a leader. Since
      $\val{C} = \val{C'}$, Prop.~\ref{prop:threshold:term} yields
      $O(C) = O(C')$ which is a contradiction.\qedhere
  \end{itemize}
\end{proof}

We now give a full proof of Proposition~\ref{prop:threshold:layered}.

\propThresholdLayered*

\renewcommand{\leaders}[1]{\mathrm{leaders}(#1)}
\newcommand{\passive}[1]{\mathrm{nonleaders}(#1)}
\newcommand{\numleaders}[1]{\mathrm{num}\text{-}\mathrm{leaders}(#1)}
\newcommand{\numpassive}[1]{\mathrm{num}\text{-}\mathrm{nonleaders}(#1)}

\begin{proof}
  Assume $c > 0$. The case where $c \leq 0$ follows by a symmetric
  argument. Let $L_0 \defeq \{(1, x, 0) : c \leq x \leq \lmax\}$ and
  $N_1 \defeq \{(0, 0, 1)\}$. We claim that the following ordered
  partition satisfies layered termination:
  \begin{align*}
    T_1 &\defeq \{t \in T : \pre{t} \neq \multiset{q, r} \text{ for all } q \in L_0, r \in N_1\}, \\ T_2
    &\defeq T \setminus T_1.
  \end{align*}
  We first show that every execution of $\PP_{\text{thr}}[T_1]$ is
  fair. For the sake of contradiction, assume this is not the
  case. There exists a non silent execution $C_0 \trans{t_1} C_1 \trans{t_2}
  \cdots$ where $C_0, C_1, \ldots \in \pop{Q}$ and $t_1, t_2, \ldots
  \in T_1$. For every $i \in \N$, let
  \begin{align*}
    \leaders{C_i} &\defeq \{q \in \supp{C_i} : q \text{ is a leader}\},\\
    \passive{C_i} &\defeq \{q \in \supp{C_i} : q \text{ is not a
      leader}\}, \\
    \numleaders{C_i} &\defeq \sum_{q \in \leaders{C_i}} C_i(q), \\
    a_i &\defeq \sum_{\passive{C_i}} C_i(q) \cdot |\val{q}|.
  \end{align*}
  As observed in~\cite{AADFP04}, $a_0 \geq a_1 \geq
  \cdots$. Therefore, there exists $n_1 \in \N$ such that $a_i =
  a_{i-1}$ for every $i > n_1$. Similarly, no transition of
  $\PP_{\text{thr}}$ increases the number of leaders. Thus, there
  exists $n_2 \in \N$ such that $\numleaders{C_i} =
  \numleaders{C_{i-1}}$ for every $i > n_2$. Let $L_\text{err} \defeq
  \{(1, x, b) : -\lmax \leq x \leq \lmax, b \not= (x < c)\}$ be the
  set of leaders whose opinion is inconsistent with their value. Since
  no transition of $\PP_{\text{thr}}$ produces states from
  $L_\text{err}$, transitions involving a state from $L_\text{err}$ can
  only be taken in finitely many steps. More formally, there exists
  $n_3 \in \N$ such that $\supp{\pre{t_i}} \cap L_\text{err} =
  \emptyset$ for every $i > n_3$. Let $n \defeq \max(n_1, n_2,
  n_3)$. Any non silent transition $t_i$ such that $i > n$ must be of
  the form:
  \begin{align*}
    (1, x, 1), (0, 0, 0) &\mapsto (1, x, 1), (0, 0, 1)
  \end{align*}
  for some $x < c$, as otherwise one of the above observations would
  be violated. But such transitions set the opinion of non leaders to
  $0$, which can only occur for finitely many steps. Therefore, there
  exists $n' \geq n$ such that every transition enabled in $C_{n'}$ is
  silent. This is a contradiction.

  It is readily seen that any execution of $\PP_{\text{thr}}[T_2]$ is
  silent since each transition of $T_2$ is of the form:
  \begin{align*}
    (1, x, 0), (0, 0, 1) \mapsto (1, x, 0), (0, 0, 0)
  \end{align*}
  for some $c \leq x \leq \lmax$. Therefore, it remains to prove that
  $\PP_{\text{thr}}[T_2]$ is $T_1$-dead. Let $C \in \pop{Q}$ be a
  $T_1$-dead configuration. For the sake of contradiction, suppose
  there exists $w \in T_2^+$ and $C' \in \pop{Q}$ such that $C
  \trans{w} C'$ and $C'$ enables some non silent transition $t \in
  T_1$. Since $C$ is $T_1$-dead, transition $t$ must be of the form
  \begin{align*}
    (1, y, 1), (0, 0, 0) \mapsto (1, y, 1), (0, 0, 1)
  \end{align*}
  for some $y < c$. Moreover, $(1, y, 1)$ already appeared in
  $C$. This means that $C$ contains one leader of opinion $0$, and one
  leader of opinion $1$. Therefore, $C$ is not $T_1$-dead, which is a
  contradiction.
\end{proof}

\subsection{Remainder protocol}

Let $\val{C} \defeq \left( \sum_{n \in [0, m)} C(n) \cdot n \right) \text{ mod } m$ for every $C \in
  \pop{Q}$.

\begin{proposition}
  $\PP_{\text{rmd}}$ satisfies \snosplitterminal.
\end{proposition}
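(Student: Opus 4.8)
The plan is to mirror the proof of Proposition~\ref{prop:threshold:strong:consensus}, replacing the integer value by the residue $\val{C} \defeq \left(\sum_{n \in [0,m)} C(n)\cdot n\right) \bmod m$ and the role of a \emph{leader} by that of a \emph{number agent}, meaning an agent in a state of $[0,m)$. I would isolate two facts, analogous to Propositions~\ref{prop:threshold:val} and~\ref{prop:threshold:term}: (a) every solution $(C,C',\vec{x})$ of the flow equations satisfies $\val{C}=\val{C'}$; and (b) every terminal configuration containing a number agent consists of a single number agent $n$ together with boolean agents all equal to $(n=c)$, hence is a consensus configuration with output $(n=c)$ and residue $n$. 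Assuming for contradiction that $\PP_{\text{rmd}}$ violates \snosplitterminal yields an initial $C_0$ and terminal configurations $C,C'$ with $C_0\ptrans{}C$, $C_0\ptrans{}C'$ and $O(C)\neq O(C')$; I would use a trap constraint to force $C$ and $C'$ to still contain a number agent, and then (a) and (b) to conclude $\val{C}=\val{C_0}=\val{C'}$ and hence $O(C)=O(C')$, a contradiction.

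For~(a), I would verify that $\val{\post{t}}\equiv\val{\pre{t}}\pmod{m}$ for every non silent transition: the transition $(n,n')\mapsto(n+n'\bmod m,\,\cdot\,)$ replaces the number contribution $n+n'$ by $(n+n')\bmod m$, while $(n,b)\mapsto(n,\,\cdot\,)$ leaves it unchanged, the boolean states $\true,\false$ contributing $0$ to $\val{\cdot}$. Summing the flow equations over $q\in[0,m)$ weighted by $q$ and reducing modulo $m$ then gives $\val{C'}=\val{C}$, exactly as in Proposition~\ref{prop:threshold:val}.

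The heart of the argument is~(b) together with the trap constraint. First I would note that $(n,n')\mapsto(n+n'\bmod m,\,\cdot\,)$ is never silent, as its postset contains a boolean state whereas its preset consists of two number states; consequently any terminal configuration holds at most one number agent, and if it holds exactly one number agent $n$ then silence of every instance of $(n,b)\mapsto(n,\,n=c)$ forces all boolean agents to carry the value $(n=c)$, yielding a consensus with output $(n=c)$ and residue $n$. The remaining, dangerous case is a terminal configuration with no number agent, which may be a non-consensus mixture of $\true$ and $\false$; this is exactly what potential reachability must exclude. Here I would take $P\defeq[0,m)$, which is a $U$-trap for every $U$ because every transition consuming a number agent also produces one. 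Since every non silent transition of $\PP_{\text{rmd}}$ produces a number agent, $\preset{P}$ contains all non silent transitions, so if such a $C'$ with $C'(P)=0$ satisfied $C_0\ptrans{\vec{x}}C'$, Observation~\ref{lem:basicst} would give $\preset{P}\cap\supp{\vec{x}}=\emptyset$; thus $\vec{x}$ is supported on silent transitions only and $C'=C_0$. But $C_0$ is initial and therefore populated solely by number agents, so $C_0(P)>0$, contradicting $C'(P)=0$.

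I expect this last step to be the only delicate point: the flow equations alone do not rule out the all-boolean terminal configurations --- when $\val{C_0}=0$ they even respect the residue invariant --- so the whole argument hinges on the trap $P=[0,m)$ forcing a surviving number agent along any potentially reachable execution. Everything else is a routine transcription of the threshold proof with the arithmetic carried out modulo $m$ instead of over $\Z$. I would also reduce $c$ modulo $m$ at the outset, so that the output state $c\in[0,m)$ is well defined and $(\val{C_0}=c)$ is precisely the value of the remainder predicate on the input mapped to $C_0$.
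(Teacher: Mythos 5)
Your proposal is correct and follows essentially the same route as the paper: the residue invariant $\val{C}$ preserved by the flow equations, the characterization of terminal configurations containing a number agent as consensus configurations determined by that residue, and the $U$-trap $[0,m)$ to exclude all-boolean terminal configurations. In fact you spell out the details (non-silence of the mixing transitions, hence at most one number agent in a terminal configuration, and the explicit trap contradiction) that the paper only sketches by analogy with the threshold protocol.
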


\begin{proof}
  For every $C', C' \in \pop{Q}$,
  \begin{itemize}
  \item[(a)] if $(C, C', \vec{x})$ is a solution to the flow equations
    for some $\vec{x} : T \to \N$, then $\val{C} = \val{C'}$.
    
  \item[(b)] if $C, C' \in \pop{Q}$ are terminal configuration that contain a
    numerical value, then both $C$ and $C'$ are consensus
    configurations, and if $\val{C} = \val{C'}$, then $O(C) = O(C')$.
  \end{itemize}
  The proof of these two claims follows from the definition of
  $\PP_{\text{rmd}}$ as in the case of the threshold protocol.

  Suppose for the sake of contradiction that $\PP_{\text{rmd}}$ does
  not satisfy \snosplitterminal. There are two cases to consider.

  \begin{itemize}
    \item There exist $C, C' \in \pop{Q}$ such that $C \ptrans{} C'$,
      $C$ is initial, $C'$ is terminal and $C'$ is not a consensus
      configuration. Since $C_0$ is initial, it only contains
      numerical values. Since numerical values form a $U$-trap for
      every $U \subseteq T$, $C$ contains a numerical value. By~(b), $C$ is
      a consensus configuration, which is a contradiction.\medskip

    \item There exist $C_0, C, C' \in \pop{Q}$ and $\vec{x}, \vec{x}'
      : T \to \N$ such that $C_0 \ptrans{\vec{x}} C$, $C_0
      \ptrans{\vec{x}'} C'$, $C_0$ is initial, $C$ and $C'$ are
      terminal consensus configurations, and $O(C) \neq O(C')$. Note
      that $(C_0, C, \vec{x})$ and $(C_0, C', \vec{x}')$ both satisfy
      the flow equations. Therefore, by~(a), $\val{C} =
      \val{C_0} = \val{C'}$. Again, since $C_0$ is initial, it
      contains a numerical value, which implies that both $C$ and $C'$
      contain a numerical value. Since $\val{C} = \val{C'}$,
      (b)~yields $O(C) = O(C')$ which is a contradiction.\qedhere
  \end{itemize}
\end{proof}

\begin{proposition}
  $\PP_{\text{rmd}}$ satisfies \layeredtermination.
\end{proposition}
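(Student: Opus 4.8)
The plan is to exhibit an ordered partition $(T_1, T_2)$ that isolates the single \emph{flag-raising} transition, mirroring the treatment of $\PP_{\text{thr}}$. Assume without loss of generality that $c \in [0, m)$, so that $c$ is a state of the protocol. I would set
$$T_2 \defeq \{t \in T : \pre{t} = \multiset{c, \false}\}, \qquad T_1 \defeq T \setminus T_2,$$
so that $T_2$ consists solely of the transition $(c, \false) \mapsto (c, \true)$, while $T_1$ collects all value-combining transitions $(n, n') \mapsto (n + n' \bmod m, \cdot)$, every opinion transition $(n, \true) \mapsto (n, \false)$ with $n \neq c$ that lowers a wrong flag, and all silent transitions. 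The intuition is that agents first merge their numerical values and lower incorrect $\true$-flags, and only afterwards may a flag be raised to $\true$; raising a flag can never re-enable the merging or lowering transitions.

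Verifying condition~(a) for $T_1$ is the crux, and the main obstacle. A naive single-layer argument fails because a flag can oscillate between $\true$ and $\false$ when two numerical agents whose residues differ on the test $n = c$ coexist. The point is that $T_1$ forbids exactly the flag-raising transition, which breaks every such oscillation. Concretely, let $\mu(C)$ be the number of agents in numerical states $[0, m)$ and $\nu(C)$ the number of agents in state $\true$. Every non-silent transition of $T_1$ is either a value-combining transition, which decreases $\mu$ by one, or a transition $(n, \true) \mapsto (n, \false)$ with $n \neq c$, which leaves $\mu$ unchanged and decreases $\nu$ by one. Since $\mu \geq 0$, only finitely many value-combining steps occur in any execution of $\PP[T_1]$; after the last such step $\nu$ can only decrease (nothing in $T_1$ raises it once combining stops), so only finitely many flag-lowering steps follow. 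Hence every execution of $\PP[T_1]$ is eventually constant, i.e.\ silent. For $T_2$, condition~(a) is immediate: its unique non-silent transition strictly decreases the number of agents in state $\false$, so $\PP[T_2]$ is silent as well.

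It remains to check condition~(b). For $i = 1$ it holds vacuously, since $\PP[\emptyset]$ has only silent transitions and every configuration is terminal in it. For $i = 2$, take $U \defeq T_1$ and suppose $C$ is terminal in $\PP[T_1]$ and $C \trans{*}_{T_2} C'$. Being $T_1$-terminal, $C$ enables no value-combining transition, so $C$ has at most one numerical agent; moreover any firing of $T_2$ consumes only a $\false$-agent and leaves the (unique) $c$-agent in place, so $C'$ again has at most one numerical agent, necessarily equal to $c$ whenever present. Consequently no value-combining transition is enabled at $C'$, and the flag-lowering transitions $(n, \true) \mapsto (n, \false)$ stay disabled because they require a numerical agent with $n \neq c$, which $C'$ does not contain. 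Thus $C'$ is terminal in $\PP[T_1]$, establishing condition~(b) and hence \layeredtermination.
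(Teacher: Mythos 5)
Your proof is correct, but it rests on a different ordered partition from the paper's. The paper takes $T_1$ to be the transitions whose preset lies in $[0,m)\times([0,m)\cup\{\false\})$ and $T_2$ those with preset in $[0,m)\times\{\true\}$; that is, it puts the flag-raising transition $(c,\false)\mapsto(c,\true)$ in the \emph{first} layer together with the value-combining transitions, and the flag-lowering transitions $(n,\true)\mapsto(n,\false)$, $n\neq c$, in the second. You do the mirror image: flag-lowering joins value-combining in layer one, and layer two is the singleton $\{(c,\false)\mapsto(c,\true)\}$. Both orderings work, and for the same underlying reason: the only obstruction to a single layer is the raise/lower oscillation, so the two boolean-updating families must be separated, and either order satisfies condition~(b) because a configuration that is dead for the first layer has at most one numerical agent, which determines which of the two boolean updates can still fire. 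The counting arguments for condition~(a) are symmetric as well (the paper's layer-one tail decreases the number of $\false$-agents, yours decreases the number of $\true$-agents). Your choice buys a trivially silent second layer; the paper's choice more closely parallels its partition for $\PP_{\text{thr}}$, where the opinion-correcting transitions likewise form the second layer. One small imprecision: your claim that the numerical agent of $C'$ is ``necessarily equal to $c$ whenever present'' fails when the unique numerical agent of $C$ is some $n\neq c$; but then no non-silent transition of $T_2$ is enabled, so $C'=C$ and $C'(\true)=C(\true)=0$, and the flag-lowering transitions remain disabled for that reason instead. The conclusion stands.
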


\newcommand{\numerical}[1]{\mathrm{numerical}(#1)}

\begin{proof}
  We claim that the following ordered partition satisfies layered
  termination:
  \begin{align*}
    T_1 &\defeq \{t \in T : \pre{t} = \multiset{q, r} \text{ for some } q \in [0, m), r \in ([0, m) \cup
        \{\false\})\} \\
    T_2 &\defeq \{t \in T : \pre{t} = \multiset{q, \true} \text{ for some } q \in [0, m)\}
  \end{align*}
  We first show that every execution of $\PP_{\text{rmd}}[T_1]$ is
  silent. For the sake of contradiction, assume it is not the
  case. There exists a non silent execution $C_0 \trans{t_1} C_1 \trans{t_2}
  \cdots$ where $C_0, C_1, \ldots \in \pop{Q}$ and $t_1, t_2, \ldots
  \in T_1$. For every $i \in \N$, let $\numerical{C_i} \defeq \sum_{n
    \in [0, m)} C_i(n)$. It is readily seen that $\numerical{C_0} \geq
    \numerical{C_1} \geq \cdots$. Therefore, there exists $\ell \in
    \N$ such that $\numerical{C_i} = \numerical{C_{i-1}}$ for every $i
    > \ell$. This implies that, for every $i > \ell$, if $t_i$ is non
    silent, then it is of the form $(n, \false) \mapsto (n, \true)$
    for some $n \in [0, m)$. But, these non silent transitions can
      only occur for a finite amount of steps, which is a
      contradiction.

  It is readily seen that every execution of $\PP_{\text{rmd}}[T_2]$
  is silent since non silent transitions of $T_2$ are all of the form
  $(n, \true) \mapsto (n, \false)$ for some $n \in [0, m)$. Therefore,
    it remains to prove that $\PP_{\text{rmd}}[T_2]$ is
    $T_1$-dead. Let $C \in \pop{Q}$ be a $T_1$-dead configuration. For
    the sake of contradiction, suppose there exists $w \in T_2^+$ and
    $C' \in \pop{Q}$ such that $C \trans{w} C'$ and $C'$ enables some
    non silent transition $t \in T_1$. We have $C(\true) > 0$ and
    $C(n) > 0$ for some $n \in [0, m)$ such that $O(n) =
      \false$. Moreover, since $C$ is $T_1$-dead, $\numerical{C} =
      1$. Therefore $t$ must be of the form $(n, \false) \mapsto (n,
      \false)$. We obtain a contradiction since $t$ is non silent.
\end{proof}

\subsection{Conjunction protocol}

For the rest of this subsection, let us fix some population protocols
$\PP_1 = (Q_1, T_1, \Sigma, I_1, O_1)$ and $\PP_2 = (Q_2, T_2, \Sigma,
I_2, O_2)$. For every transition transition 
$t = (q, r) \mapsto (q', r')$, and every pair of states $(p, s)$, 
let $t \ziptimes (p,s)$ denote the transition lifted to $(p, s)$:
\begin{align*}
  ((q, p), (r, s)) \mapsto ((q', p), (r', s))
\end{align*}
Similarly let 
$(p, s) \ziptimes t$ denote the lifted transition 
\begin{align*}
((p, q), (s, r)) \mapsto ((p, q'), (s, r')).
\end{align*}

\begin{definition}
  The \emph{conjunction} of $\PP_1$ and $\PP_2$ is defined as the
  population protocol $\PP \defeq (Q, S, I, \Sigma, O)$ such that $Q
  \defeq Q_1 \times Q_2$, $S \defeq S_1 \cup S_2$, $I(\sigma) \defeq
  (I_1(\sigma), I_2(\sigma))$ and $O(p, q) \defeq O_1(p) \land O_2(q)$
  where
  \begin{align*} S_1 &\defeq \{ t \ziptimes (q, r): t \in T_1, (q, r) \in Q_2\times Q_2\},
    \\ S_2 &\defeq \{(q, r) \ziptimes t : t \in T_2, (q, r) \in Q_1 \times Q_1\}.
  \end{align*}
\end{definition}

The \emph{projection} of $q \in Q$ on $Q_i$ is the state $\pi_i(q)
\defeq q_i$ where $q = (q_1, q_2)$. The \emph{projection} of $t \in
S_i$ on $T_i$ is the transition $\pi_i(t) \defeq (\pi_i(p), \pi_i(q),
\pi_i(p'), \pi_i(q'))$ where $t = (p, q, p', q')$. We lift projections
to $\pop{Q}$ and $S \to \N$ as follows. For every $C \in \pop{Q}$ and
$\vec{x} : S \to \N$, the
\emph{projections} $\pi_i(C) \in \pop{Q_i}$ and $\pi_i(\vec{x})
: T_i \to \N$ are respectively the configuration and mapping such that
\begin{align*}
  \pi_i(C)(q) \defeq \sum_{\substack{r \in Q \\ \pi_i(r) = q}}
  C(r) \text{ for every } q \in Q_i\quad \text{ and }\quad
  \pi_i(\vec{x})(t) &\defeq \sum_{\substack{s \in S_i \\ \pi_i(s) =
      t}} \vec{x}(s) \text{ for every } t \in T_i.
\end{align*}

Let $\incid{\PP} \in \N^{Q \times T}$ be the matrix such that
$\incid{\PP}(q, t) \defeq \post{t}(q) - \pre{t}(q)$ for every $q \in
Q$ and $t \in T$. It is readily seen that $(C, C', \vec{x})$ satisfies the flow
equations if and only if $C' = C + \incid{\PP} \cdot \vec{x}$. The
same holds for the matrices $\incid{\PP_1}$ and $\incid{\PP_1}$
defined similarly for $\PP_1$ and $\PP_2$. The following holds:

\begin{proposition}\label{prop:conj:distrib}
  For every $i \in \{1, 2\}$, $C, C' \in \pop{Q}$ and $\vec{x} \in S
  \to \N$ we have:
  \begin{itemize}
   \item[(a)] $\pi_i(C + C') = \pi_i(C) + \pi_i(C')$, and
   \item[(b)] $\pi_i\left(\incid{\PP} \cdot \vec{x} \right) =
     \incid{\PP_i} \cdot \pi_i(\vec{x})$.
  \end{itemize}
\end{proposition}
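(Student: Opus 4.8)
The plan is to establish both identities by direct computation from the fiber-sum definitions of the projections, with essentially all of the content residing in part~(b); part~(a) is pure linearity.

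For part~(a), I would fix $i \in \{1, 2\}$ and $q \in Q_i$ and simply unfold the definition of $\pi_i$, using that multiset addition is componentwise:
\begin{align*}
  \pi_i(C + C')(q)
    &= \sum_{\substack{r \in Q \\ \pi_i(r) = q}} (C + C')(r)
     = \sum_{\substack{r \in Q \\ \pi_i(r) = q}} \bigl(C(r) + C'(r)\bigr) \\
    &= \sum_{\substack{r \in Q \\ \pi_i(r) = q}} C(r)
      + \sum_{\substack{r \in Q \\ \pi_i(r) = q}} C'(r)
     = \pi_i(C)(q) + \pi_i(C')(q).
\end{align*}
Since $q$ is arbitrary, this gives~(a).

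For part~(b) it suffices by symmetry to treat $i = 1$, the case $i = 2$ being identical with the two coordinates swapped. Both sides are $\N$-linear in $\vec{x}$ (matrix multiplication and the fiber-sum $\pi_1$ are linear), so I would reduce the claim to a per-transition statement obtained by evaluating on the indicator of a single transition $s \in S$: for every $q \in Q_1$,
\begin{align*}
  \sum_{\substack{r \in Q \\ \pi_1(r) = q}} \incid{\PP}(r, s)
    = \begin{cases}
        \incid{\PP_1}\bigl(q, \pi_1(s)\bigr) & \text{if } s \in S_1, \\
        0                                    & \text{if } s \in S_2.
      \end{cases}
\end{align*}
The left-hand fiber sum is exactly the net number of agents that $s$ moves into states whose first coordinate is $q$. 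The structural fact driving everything is that a transition $t \ziptimes (p, p') \in S_1$ leaves the second coordinate of each involved agent untouched, so on first coordinates it acts precisely as $t \in T_1$, giving fiber sum $\incid{\PP_1}(q, \pi_1(s))$; dually a transition $(p, p') \ziptimes t \in S_2$ leaves every first coordinate fixed, hence adds and removes equally many agents in each fiber and contributes $0$.

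Granting this per-transition identity, I would finish by summing against $\vec{x}$ and regrouping: the $S_2$ terms drop out, and grouping the surviving $S_1$ terms by their projection $s' = \pi_1(s) \in T_1$ turns $\sum_{s \in S_1} \vec{x}(s)\,\incid{\PP_1}\bigl(q, \pi_1(s)\bigr)$ into $\sum_{s' \in T_1} \incid{\PP_1}(q, s')\,\pi_1(\vec{x})(s')$, which is precisely $\bigl(\incid{\PP_1} \cdot \pi_1(\vec{x})\bigr)(q)$; the left-hand side is $\pi_1\bigl(\incid{\PP} \cdot \vec{x}\bigr)(q)$ by definition. I expect no genuine obstacle here beyond bookkeeping: the only care needed is to choose notation that makes the ``inert coordinate'' observation visibly responsible both for the cancellation of the $S_2$ contributions and for the collapse of the $S_1$ contributions onto $\incid{\PP_1}$.
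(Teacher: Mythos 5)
Your proposal is correct and follows essentially the same route as the paper: part~(a) by componentwise unfolding of $\pi_i$, and part~(b) via exactly the per-transition fiber-sum identity the paper states as its two displayed equations (zero for $s \in S \setminus S_i$, $\incid{\PP_i}(q,\pi_i(s))$ for $s \in S_i$), followed by summing against $\vec{x}$ and regrouping by $\pi_i(s)$. The only difference is cosmetic: you make the ``inert coordinate'' justification of the per-transition identity explicit, where the paper simply asserts it ``by definition of $S$.''
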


\begin{proof}
  For every $q \in Q$, we have
  \begin{align*}
    \pi_i(C + C')(q) &= \sum_{\substack{r \in Q \\ \pi_i(r) = q}}
    (C + C')(r) && \text{(by def. of $\pi_i$)} \\
    &= \sum_{\substack{r \in Q \\ \pi_i(r) = q}} C(r) + C'(r) \\
    &= \sum_{\substack{r \in Q \\ \pi_i(r) = q}} C(r) +
    \sum_{\substack{r \in Q \\ \pi_i(r) = q}} C'(r) \\
    &= \pi_i(C) + \pi_i(C') && \text{(by def. of $\pi_i$)}.
  \end{align*}
  This proves~(a). Let us now prove~(b). Let $i \in \{1, 2\}$ and $q
  \in Q_i$. By definition of $S$, we have
  \begin{align}
    \sum_{\substack{r \in Q \\ \pi_i(r) = q}} \incid{\PP}(r, t) &= 0
    &&\text{ for every } t \in S \setminus S_i,\label{eq:incid:zero} \\
    \sum_{\substack{r \in Q \\ \pi_i(r) = q}} \incid{\PP}(r, t) &=
    \incid{\PP_i}(q, \pi_i(t))
    &&\text{ for every } t \in S_i.\label{eq:incid:sub}
  \end{align}
  Therefore,
  \begin{align*}
    \pi_i(\incid{\PP} \cdot \vec{x})(q) &= \sum_{\substack{r \in Q
        \\ \pi_i(r) = q}} (\incid{\PP} \cdot \vec{x})(r) && \text{(by
      def. of $\pi_i$)} \\
    &= \sum_{\substack{r \in Q \\ \pi_i(r) = q}} \sum_{s \in S}
    \incid{\PP}(r, s) \cdot \vec{x}(s) \\
    &= \sum_{s \in S} \vec{x}(s) \cdot \sum_{\substack{r \in Q
        \\ \pi_i(s) = q}} \incid{\PP}(r, s) \\
    &= \sum_{s \in S_i} \vec{x}(s) \cdot \sum_{\substack{r \in Q
        \\ \pi_i(s) = q}} \incid{\PP}(r, s) &&
    \text{(by~(\ref{eq:incid:zero}))} \\
    &= \sum_{s \in S_i} \vec{x}(s) \cdot \incid{\PP_i}(q, \pi_i(s)) &&
    \text{(by~(\ref{eq:incid:sub}))} \\
    &= \sum_{t \in T_i} \incid{\PP_i}(q, t) \cdot \sum_{\substack{s
        \in S_i \\ \pi_i(s) = t}} \vec{x}(s) \\
    &= \sum_{t \in T_i} \incid{\PP_i}(q, t) \cdot \pi_i(\vec{x})(t) &&
    \text{(by def. of $\pi_i$)} \\
    &= (\incid{\PP_i} \cdot \pi_i(\vec{x}))(q). \tag*{\qedhere}
  \end{align*}
\end{proof}

\begin{restatable}{proposition}{propConjPtrans}\label{prop:conj:ptrans}
  For every $C, C' \in \pop{Q}$, $\vec{x} : S \to \N$ and $i \in \{1,
  2\}$, if $C \ptrans{\vec{x}} C'$, then
  $\pi_i(C) \ptrans{\pi_i(\vec{x})} \pi_i(C')$.
\end{restatable}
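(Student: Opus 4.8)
The plan is to verify the three defining conditions of potential reachability (Definition~\ref{def:potreach}) for the projected triple $(\pi_i(C), \pi_i(C'), \pi_i(\vec{x}))$ in $\PP_i$, writing $U \defeq \supp{\vec{x}}$ and $V \defeq \supp{\pi_i(\vec{x})}$. The flow equation, condition~(a), is immediate: since $C \ptrans{\vec{x}} C'$ gives $C' = C + \incid{\PP} \cdot \vec{x}$, applying $\pi_i$ and Proposition~\ref{prop:conj:distrib} yields $\pi_i(C') = \pi_i(C) + \incid{\PP_i} \cdot \pi_i(\vec{x})$, which is exactly the flow equation for $\PP_i$.

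For the trap and siphon conditions~(b) and~(c), the idea is to lift each witness in $\PP_i$ to a witness in $\PP$, where the hypothesis $C \ptrans{\vec{x}} C'$ can be applied. Given a set $P_i \subseteq Q_i$, I would set $P \defeq \pi_i^{-1}(P_i) = \{q \in Q : \pi_i(q) \in P_i\}$ and first record three easy correspondences. (i) $\pi_i(D)(P_i) = D(P)$ for $D \in \{C, C'\}$, directly from the definition of $\pi_i$ on configurations; in particular $\pi_i(C')(P_i) = 0$ iff $C'(P) = 0$, and likewise for $C$. (ii) $V = \pi_i(U \cap S_i)$, since $\pi_i(\vec{x})(t) > 0$ iff some $s \in S_i$ with $\pi_i(s) = t$ has $\vec{x}(s) > 0$. (iii) For a transition $s \in S_i$ with $t \defeq \pi_i(s)$, one has $s \in \preset{P}$ iff $t \in \preset{P_i}$, and $s \in \postset{P}$ iff $t \in \postset{P_i}$, because the lifting leaves the $i$-th coordinate of the pre- and post-states untouched.

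The crux of the argument is a fourth observation: every transition $s \in S_{3-i}$ is \emph{neutral} for $P$, meaning $s \in \preset{P}$ iff $s \in \postset{P}$. Indeed, such a transition never changes the $i$-th coordinate of any agent, so it moves an agent into $P = \pi_i^{-1}(P_i)$ precisely when it moves an agent out of $P$. With this in hand, I would show that if $P_i$ is a $V$-trap then $P$ is a $U$-trap: taking $s \in \postset{P} \cap U$, either $s \in S_{3-i}$ and neutrality gives $s \in \preset{P}$ at once, or $s \in S_i$, in which case $\pi_i(s) \in \postset{P_i} \cap V$ by~(ii)--(iii), so the $V$-trap property of $P_i$ puts $\pi_i(s) \in \preset{P_i}$ and hence $s \in \preset{P}$. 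The symmetric statement, that a $V$-siphon $P_i$ lifts to a $U$-siphon $P$, follows the same pattern.

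Finally I would close the loop. For condition~(b) in $\PP_i$, let $P_i$ be a $V$-trap with $\pi_i(C')(P_i) = 0$. Then $P$ is a $U$-trap and $C'(P) = 0$ by~(i), so condition~(b) of $C \ptrans{\vec{x}} C'$ gives $\preset{P} \cap U = \emptyset$; combining~(ii) and~(iii), any $t \in \preset{P_i} \cap V$ would lift to some $s \in \preset{P} \cap U$, so $\preset{P_i} \cap V = \emptyset$, as required. Condition~(c) is proved symmetrically using siphons. The only delicate point is the neutrality observation together with the bookkeeping of how $\preset{}$, $\postset{}$, $U$, and $V$ transport across $\pi_i$; everything else is routine once those correspondences are established.
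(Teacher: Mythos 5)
Your proof is correct and follows essentially the same route as the paper's: the flow equation transports via Proposition~\ref{prop:conj:distrib}, and each trap or siphon $P_i \subseteq Q_i$ is lifted to $\pi_i^{-1}(P_i) = P_i \times Q_{3-i}$ (resp.\ $Q_{3-i}\times P_i$) in the product, with the key case split between transitions of $S_i$ (which project onto witnesses in $\PP_i$) and transitions of the other component (which, as you observe, are neutral for the lifted set). The paper phrases the trap/siphon part contrapositively, assuming a violation in the projection and deriving one in the product, but this is the same argument as your direct version.
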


\begin{proof}
  \noindent\emph{Flow equations:} We have $C' = C + \incid{\PP} \cdot
  \vec{x}_j$. Therefore, for every $i \in \{1, 2\}$,
  \begin{align*}
    \pi_i(C')
    &= \pi_i(C + \incid{\PP} \cdot \vec{x}) \\
    &= \pi_i(C) + \pi_i(\incid{\PP} \cdot \vec{x}) && \text{
      (by Proposition~\ref{prop:conj:distrib}(a))} \\
    &= \pi_i(C) + \incid{\PP_i} \cdot \pi_i(\vec{x}) && \text{ (by
        Proposition~\ref{prop:conj:distrib}(b))}.
  \end{align*}

  \noindent\emph{Trap constraints:} For the sake of contradiction,
  suppose there exists $i \in \{1, 2\}$ such that a $U$-trap constraint is
  violated by $(\pi_i(C), \pi_i(C'), \pi_i(\vec{x}))$ for some $P
  \subseteq Q_i$. As both cases are symmetric, we may assume without
  loss of generality that $i = 1$. We have
  \begin{align}
    \preset{P} \cap \supp{\pi_1(\vec{x})} \neq \emptyset,\ \postset{P} \cap
    \supp{\pi_1(\vec{x})} \subseteq \preset{P}\ \text{ and
    }\ C'(P) = 0\label{eq:trap:constraint}
  \end{align}
  Let $R \defeq P \times Q_2$. By definition of projections, we have
  \begin{align}
    \pi_1(C')(P) = 0 &\iff C'(R) = 0.
  \end{align}
  We claim that
  \begin{align}
    \preset{R} \cap \supp{\vec{x}} &\neq \emptyset, \label{eq:claim:trap_nonempty} \\
    \postset{R} \cap \supp{\vec{x}} &\subseteq
    \preset{R}.\label{eq:claim:trap}
  \end{align}
  If~(\ref{eq:claim:trap_nonempty}) and~(\ref{eq:claim:trap}) hold,
  then we are done since the $\supp{\vec{x}}$-trap constraint for $R$
  is violated. So, it remains to prove these claims. To
  prove~(\ref{eq:claim:trap_nonempty}), let $t \in \preset{P} \cap
  \supp{\pi_1(\vec{x})}$. By assumption, such a $t$ must exist.  Since
  $t \in \preset{P}$, we have that $t \colon (p, p') \mapsto (r, r')$
  with $r \in P$ or $r' \in P$.  Moreover, since $t \in
  \supp{\pi_1(\vec{x})}$, by definition of projections there must
  exist some $t' \in \supp{\vec{x}}$ given by
  \begin{align*}
        \begin{pmatrix}
          p \\ q
        \end{pmatrix}, 
        \begin{pmatrix}
          p' \\ q'
        \end{pmatrix} \mapsto
        \begin{pmatrix}
          r \\ q
        \end{pmatrix}, 
        \begin{pmatrix}
          r' \\ q'
        \end{pmatrix}
  \end{align*}
  for some $q, q' \in Q_2$. It remains to show that $t' \in \preset{R}$ in order to
  prove that ~(\ref{eq:claim:trap_nonempty}) holds. Indeed, since $r \in P$ or $r' \in P$, 
  we have that $(r, q) \in R$ or $(r', q') \in R$, and thus $t' \in \preset{R}$.

  Now we show that ~(\ref{eq:claim:trap}) holds. To this end, let
  $t \in \postset{R} \cap \supp{\vec{x}} \subseteq
  \preset{P \times Q_2}$. There exist $p \in P$ and $q \in Q_2$ such
  that $(p, q) \in \preset{t}$. Moreover, $\vec{x}(t) > 0$. We must
  prove $t \in \preset{R}$. We consider two cases
  \begin{itemize}
    \item Assume $t \in S_2$. By definition of $S_2$, $t$ is of the
      form
      \begin{align*}
        \begin{pmatrix}
          p \\ q
        \end{pmatrix}, 
        \begin{pmatrix}
          p' \\ q'
        \end{pmatrix} \mapsto
        \begin{pmatrix}
          p \\ r
        \end{pmatrix}, 
        \begin{pmatrix}
          p' \\ r'
        \end{pmatrix}
      \end{align*}
      for some $p' \in Q_1$ and $q', r, r' \in Q_2$. In particular, we
      have $(p, r) \in \postset{t}$ which implies $t \in \preset{(P
        \times Q_2)} = \preset{R}$.

    \item Assume $t \in S_1$. Let $s \defeq \pi_1(t)$. By definition
      of $S_2$, $t$ is of the form
      \begin{align*}
        \begin{pmatrix}
          p \\ q
        \end{pmatrix}, 
        \begin{pmatrix}
          p' \\ q'
        \end{pmatrix} \mapsto
        \begin{pmatrix}
          r \\ q
        \end{pmatrix}, 
        \begin{pmatrix}
          r' \\ q'
        \end{pmatrix}
      \end{align*}
      where $\pre{s} = \multiset{p, p'}$, $\post{s} = \multiset{r,
        r'}$ and $q' \in Q_2$. This implies that $s \in \postset{p}
      \subseteq \postset{P}$. Moreover, since $t \in
      \supp{\vec{x}}$, we have $s \in
      \supp{\pi_1(\vec{x})}$. Therefore,
      by~(\ref{eq:trap:constraint}), we have $s \in \preset{P}$. This
      implies that either $r \in P$ or $r' \in P$, which in turn
      implies that $t \in \preset{R}$.
  \end{itemize}

  \noindent\emph{$U$-Siphon constraints:} Symmetric to $U$-trap constraints.
\end{proof}

\begin{proposition}\label{prop:conj:enabled}
  For every $i \in \{1, 2\}$, $C \in \pop{Q}$ and $t \in T_i$, $t$ is
  enabled in $\pi_i(C)$ if and only if there exists $s \in S_i$ such
  that $\pi_i(s) = t$ and $s$ is enabled in $C$.
\end{proposition}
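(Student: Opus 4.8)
The plan is to prove the two implications separately, after recording two elementary facts about the projection $\pi_i$. The first fact is that $\pi_i$ is monotone: if $C \leq C'$ then $\pi_i(C) \leq \pi_i(C')$, which is immediate from the defining formula $\pi_i(C)(q) = \sum_{r\,:\,\pi_i(r) = q} C(r)$. The second fact is that projection commutes with taking presets on $S_i$, namely $\pi_i(\pre{s}) = \pre{\pi_i(s)}$ for every $s \in S_i$. Taking $i = 1$ and writing $s = t \ziptimes (q_1, q_2)$ for a transition $t \in T_1$ with $\pre{t} = \multiset{p, p'}$, the definition of the lifting gives $\pre{s} = \multiset{(p, q_1), (p', q_2)}$, whose projection onto $Q_1$ is exactly $\multiset{p, p'} = \pre{t} = \pre{\pi_1(s)}$; the case $i = 2$ is symmetric.

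For the $(\Leftarrow)$ direction these two facts already suffice: if some $s \in S_i$ with $\pi_i(s) = t$ is enabled in $C$, i.e.\ $\pre{s} \leq C$, then monotonicity yields $\pre{t} = \pi_i(\pre{s}) \leq \pi_i(C)$, so $t$ is enabled in $\pi_i(C)$. I would carry out this direction first, as it is the short one.

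The forward $(\Rightarrow)$ direction is where the real work lies, and I expect the multiplicity bookkeeping to be the main obstacle. Here $t$ enabled in $\pi_i(C)$ means $\pre{t} \leq \pi_i(C)$, and I must manufacture a concrete lift $s = t \ziptimes (q_1, q_2)$ whose preset $\multiset{(p, q_1), (p', q_2)}$ already lies below $C$. Taking $i = 1$ and using $\pi_1(C)(p) = \sum_{q \in Q_2} C((p, q))$, the plan is to select two tokens of $C$ that project to $\multiset{p, p'}$, distinguishing the cases $p \neq p'$ and $p = p'$. If $p \neq p'$, then $\pre{t} \leq \pi_1(C)$ forces $\pi_1(C)(p) \geq 1$ and $\pi_1(C)(p') \geq 1$, hence there are states $(p, q_1), (p', q_2) \in \supp{C}$; since these are distinct states of $Q$, we get $\multiset{(p, q_1), (p', q_2)} \leq C$ and $s \defeq t \ziptimes (q_1, q_2)$ is enabled. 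The delicate subcase is $p = p'$, where $\pi_1(C)(p) = \sum_{q} C((p, q)) \geq 2$: this mass may be concentrated in a single state or spread over two, so I would split on whether some $q_1$ has $C((p, q_1)) \geq 2$ (then set $q_2 \defeq q_1$) or there exist distinct $q_1 \neq q_2$ with $(p, q_1), (p, q_2) \in \supp{C}$; in both situations $\multiset{(p, q_1), (p, q_2)} \leq C$, so $s \defeq t \ziptimes (q_1, q_2)$ is again the desired enabled lift, with $\pi_1(s) = t$. The argument for $i = 2$ is symmetric, which completes the proof.
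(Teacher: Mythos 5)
Your proof is correct and follows essentially the same route as the paper's: both directions hinge on the observation that a lift $s = t \ziptimes (q_1,q_2)$ has preset projecting exactly to $\pre{t}$, and the forward direction constructs such a lift from the inequality $\pre{t} \leq \pi_i(C)$. The only difference is one of detail: the paper states the equivalence $\pi_1(C) \geq \multiset{p_1, q_1} \iff \exists p_2, q_2.\ C \geq \multiset{(p_1,p_2),(q_1,q_2)}$ without argument, whereas you carefully justify it via the case split on $p \neq p'$ versus $p = p'$ (and the sub-split on whether the multiplicity $2$ is concentrated in one state), which is precisely the bookkeeping the paper leaves implicit.
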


\begin{proof}
  We only prove the claim for $i = 1$, as the case $i = 2$ is
  symmetric. Let $p, q \in Q_1$ be such that $\pre{t} = \multiset{p_1,
  q_1}$. By definition of $\pi_1$, we have
  \begin{align*}
    \pi_i(C)(p_1) &\defeq \sum_{p_2 \in Q_2} C(p_1, p_2), \text{ and} \\
    \pi_i(C)(q_1) &\defeq \sum_{q_2 \in Q_2} C(q_1, q_2).
  \end{align*}
  This implies that
  \begin{align}
    \pi_1(C) \geq \multiset{p_1, q_1} \iff \exists p_2, q_2 \in Q_2
    \text{ s.t. } C \geq \multiset{(p_1, p_2), (q_1,
      q_2)}.\label{eq:pre:proj}
  \end{align}

  \noindent$\Rightarrow$) Assume $t$ is enabled in $\pi_1(C)$. By
  (\ref{eq:pre:proj}), $C \geq \multiset{(p_1, p_2), (q_1, q_2)}$ for
  some $p_2, q_2 \in Q_2$. Let
  \begin{align*}
    s &\defeq t \ziptimes (p_2, q_2)
  \end{align*}
  We have $s \in S_1$. Moreover, $s$ is enabled at $C$.

  \noindent$\Leftarrow$) Assume there exists $s \in S_1$ such that
  $\pi_1(s) = t$ and $s$ is enabled at $C$. By definition of $S_1$,
  \begin{align*}
    s &= t \ziptimes (p_2, q_2)
  \end{align*}
  for some $p_2, q_2 \in Q_2$. Since $s$ is enabled at $C$, we have $C
  \geq \multiset{(p_1, p_2), (q_1, q_2)}$. By~(\ref{eq:pre:proj}),
  this implies $\pi_1(C) \geq \multiset{p_1, q_1}$, which in turn
  implies that $t$ is enabled at $\pi_1(C)$.
\end{proof}

\begin{corollary}\label{cor:term:proj}
  For every $C \in \pop{Q}$, if $C$ is terminal in $\PP$, then
  $\pi_1(C)$ and $\pi_2(C)$ are respectively terminal in $\PP_1$ and
  $\PP_2$.
\end{corollary}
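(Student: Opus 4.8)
The plan is to argue by contraposition, using the standard characterization of terminality (a configuration is terminal if and only if every enabled transition is silent) together with Proposition~\ref{prop:conj:enabled}. Fix $i \in \{1, 2\}$ and suppose $\pi_i(C)$ is \emph{not} terminal in $\PP_i$; I will derive that $C$ is not terminal in $\PP$, which gives the contrapositive of the claim for each of the two projections.

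If $\pi_i(C)$ is not terminal, then some non-silent transition $t \in T_i$ is enabled at $\pi_i(C)$. By Proposition~\ref{prop:conj:enabled}, there exists $s \in S_i$ with $\pi_i(s) = t$ that is enabled at $C$. The one remaining step is to check that $s$ is itself non-silent: once this is established, $C$ enables a non-silent transition and hence is not terminal, contradicting the hypothesis.

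To verify that $s$ is non-silent, recall that $s = t \ziptimes (p_2, q_2)$ for some $(p_2, q_2) \in Q_2 \times Q_2$ (taking $i = 1$; the case $i = 2$ is symmetric). Writing $t = (q, r) \mapsto (q', r')$, we have $\pre{s} = \multiset{(q, p_2), (r, q_2)}$ and $\post{s} = \multiset{(q', p_2), (r', q_2)}$. A short case analysis shows that $\pre{s} = \post{s}$ can hold only if $\pre{t} = \post{t}$: matching the two pairs coordinatewise forces $q = q'$ and $r = r'$, while the only alternative matching forces $p_2 = q_2$ together with $(q', r') = (r, q)$, which again makes $t$ silent since then $\post{t} = \multiset{r, q} = \pre{t}$. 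Hence $t$ non-silent implies $s$ non-silent, which yields the desired contradiction and completes the proof.

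I expect the main (and only) obstacle to be precisely this last verification, namely that projecting a non-silent lifted transition back to its factor cannot collapse it into a silent one; the subtle point is the swap case $(q', r') = (r, q)$ handled above, which is why one must reason about multiset equality rather than merely checking the first coordinate.
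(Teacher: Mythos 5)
Your proof is correct and follows essentially the same route as the paper's: argue by contradiction, use Proposition~\ref{prop:conj:enabled} to pull the enabled non-silent transition $t$ back to a lifted transition $s$ enabled at $C$, and observe that $s$ inherits non-silence from $t$. The only difference is that you spell out the multiset case analysis (including the swap case) behind the claim that $t$ non-silent implies $s$ non-silent, which the paper asserts without proof.
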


\begin{proof}
  Let $C \in \pop{Q}$ be such that $C$ is terminal in $\PP$. For the
  sake of contradiction, suppose there exists $i \in \{1, 2\}$ such
  that $\pi_i(C)$ is not terminal in $\PP_i$. There exists $t \in T_i$
  such that $t$ is non silent and enabled in $\pi_i(C)$. By
  Proposition~\ref{prop:conj:enabled}, there exists $s \in S_i$ such
  that $\pi_i(s) = t$ and $s$ is enabled at $C$. We have $s = t \ziptimes q$
  for some $q \in Q_2 \times Q_2$. This implies that $s$ is non
  silent, since $t$ is non silent. We conclude that $C$ is non
  terminal which is contradiction.
\end{proof}

\begin{restatable}{lemma}{lemmaConjSplit}\label{lem:conj:split}
  If $\PP_1$ and $\PP_2$ satisfy \snosplitterminal, then $\PP$
  satisfies \snosplitterminal.
\end{restatable}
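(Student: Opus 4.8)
The plan is to prove the contrapositive: assuming $\PP$ violates \snosplitterminal, I would construct a violation in one of the factors $\PP_1$ or $\PP_2$, contradicting the hypothesis. So suppose there exist $C_0, C, C' \in \pop{Q}$, $\vec{x}, \vec{x}' : S \to \N$ and $b$ witnessing the failure: $C_0$ is initial, $C$ and $C'$ are terminal consensus configurations with $O(C) \neq O(C')$, and $C_0 \ptrans{\vec{x}} C$, $C_0 \ptrans{\vec{x}'} C'$. The idea is that every ingredient of this witness projects cleanly down to $\PP_i$.

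First I would project the potential-reachability data. By Proposition~\ref{prop:conj:ptrans}, $\pi_i(C_0) \ptrans{\pi_i(\vec{x})} \pi_i(C)$ and $\pi_i(C_0) \ptrans{\pi_i(\vec{x}')} \pi_i(C')$ for each $i \in \{1,2\}$. I would then verify that the projected endpoints retain the structural properties needed to form a witness downstairs. Initiality is immediate from $I(\sigma) = (I_1(\sigma), I_2(\sigma))$, so $\pi_i(C_0)$ is initial in $\PP_i$. Terminality transfers by Corollary~\ref{cor:term:proj}: since $C$ and $C'$ are terminal in $\PP$, each $\pi_i(C)$ and $\pi_i(C')$ is terminal in $\PP_i$.

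The main obstacle, and the crux of the argument, is handling the \emph{output}. The output function is $O(p,q) = O_1(p) \land O_2(q)$, so $O(C) \neq O(C')$ means one of these conjunctions is $1$ and the other is $0$; say $O(C) = 1$ and $O(C') = 0$. Then both factors of $C$ output $1$ on all of $\supp{C}$, i.e. $O_1(\pi_1(C)) = O_2(\pi_2(C)) = 1$, whereas for $C'$ at least one factor must carry output $0$ somewhere in its support. The key point is that since $\PP_1$ and $\PP_2$ satisfy \snosplitterminal, each $\pi_i(C')$ is forced to be a consensus configuration (as a terminal configuration potentially reachable from the initial $\pi_i(C_0)$), so $O_i(\pi_i(C'))$ is a well-defined single value. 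I would argue there is an index $i$ for which $O_i(\pi_i(C)) = 1$ but $O_i(\pi_i(C')) = 0$: indeed $O(C') = O_1(\pi_1(C')) \land O_2(\pi_2(C')) = 0$ forces some factor to output $0$, while that same factor on $C$ outputs $1$ since $O(C) = 1$ forces both factors to output $1$.

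For that index $i$, the configurations $\pi_i(C_0)$, $\pi_i(C)$, $\pi_i(C')$, together with $\pi_i(\vec{x})$ and $\pi_i(\vec{x}')$, constitute a violation of \snosplitterminal in $\PP_i$: $\pi_i(C_0)$ is initial, $\pi_i(C)$ and $\pi_i(C')$ are terminal consensus configurations potentially reachable from it, yet $O_i(\pi_i(C)) \neq O_i(\pi_i(C'))$. This contradicts the assumption that $\PP_i$ satisfies \snosplitterminal, completing the proof. The only subtlety to watch is that \snosplitterminal requires the \emph{common} output value $b$ over all reachable terminal configurations, so I must phrase the downstairs contradiction as two terminal configurations reachable from one initial configuration disagreeing on output — which is exactly what the projected data delivers.
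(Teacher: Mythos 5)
There is a genuine gap: your case analysis of the negation of \snosplitterminal is incomplete. A protocol fails \snosplitterminal in one of two ways: either (i) some terminal configuration potentially reachable from an initial configuration is \emph{not} a consensus configuration, or (ii) two terminal consensus configurations with different outputs are potentially reachable from the same initial configuration. You immediately assume the failure has form (ii) (``$C$ and $C'$ are terminal consensus configurations with $O(C)\neq O(C')$''), so form (i) is never treated, and it is not subsumed by your argument: a single terminal non-consensus $C'$ with $C_0 \ptrans{} C'$ is already a violation, and nothing forces a second consensus configuration with a different output to exist. The paper handles this explicitly as its case (a): if $C'$ is terminal and non-consensus, pick $(p,q),(p',q')\in\supp{C'}$ with $O_1(p)\land O_2(q) \neq O_1(p')\land O_2(q')$; then $O_1(p)\neq O_1(p')$ or $O_2(q)\neq O_2(q')$, say the former, so $\pi_1(C')$ is a terminal (Corollary~\ref{cor:term:proj}) non-consensus configuration potentially reachable from the initial $\pi_1(C_0)$ (Proposition~\ref{prop:conj:ptrans}), contradicting \snosplitterminal for $\PP_1$. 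Adding this case would complete your proof, since all the projection machinery you already cite applies verbatim.

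For the case you do treat, your argument is correct and at one point slightly cleaner than the paper's: where the paper picks states $p\in\supp{\pi_1(C)}$ and $p'\in\supp{\pi_1(C')}$ with $O_1(p)\neq O_1(p')$ and concludes, somewhat loosely, that the projections are ``terminal configurations with different consensus'' (they need not both be consensus configurations, though either outcome still violates \snosplitterminal), you first invoke the hypothesis on $\PP_i$ to force $\pi_i(C')$ to be a consensus configuration, which makes the identity $O(C')=O_1(\pi_1(C'))\land O_2(\pi_2(C'))$ and the ensuing choice of the index $i$ airtight. That refinement is welcome, but it does not compensate for the missing case.
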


\begin{proof}
  We prove the contrapositive: if $\PP$ does not satisfy
  \snosplitterminal, then at least one of $\PP_1$ and $\PP_2$ does not
  satisfy \snosplitterminal. Assume $\PP$ does not satisfy
  \snosplitterminal. There are two cases to consider.
  \begin{itemize}
  \item[(a)] There exist $C, C' \in \pop{Q}$ such that $C \ptrans{}
    C'$, $C$ is initial, $C'$ is a terminal non consensus
    configuration. Since $C'$ is a non consensus configuration, there
    exist $(p, q), (p', q') \in \supp{C'}$ such that $O_1(p) \land
    O_1(q) = O(p, q) \not= O(p', q') = O_2(p') \land O_2(q')$. Without
    loss of generality, we can assume that $O_1(p) \not= O_1(p')$. By
    Corollary~\ref{cor:term:proj}, $\pi_1(C')$ is terminal in
    $\PP_1$. Moreover, since $p, p' \in \pi_1(C')$, $\pi_1(C')$ is a
    non consensus configuration. Therefore, $\pi_1(C')$ is a terminal
    non consensus configuration of $\PP_1$. Moreover, by
    Prop.~\ref{prop:conj:ptrans} $\pi_1(C) \ptrans{\pi_1(\vec{x})}
    \pi_1(C')$ which implies that $\PP_1$ does not satisfy
    \snosplitterminal.

  \item[(b)] There exist $C_0, C, C' \in \pop{Q}$ and $\vec{x},
    \vec{x}' : T \to \N$ such that $C_0 \ptrans{\vec{x}} C$, $C_0
    \ptrans{\vec{x}'} C'$, $C_0$ is initial, $C$ and $C'$ are terminal
    consensus configurations, and $O(C) \neq O(C')$. Since $C$ and
    $C'$ have different opinions, there exist $(p, q) \in \supp{C}$
    and $(p', q') \in \supp{C'}$ such that $O(p, q) \not= O(p',
    q')$. Without loss of generality, we can assume that $O_1(p) \not=
    O_1(p')$. By Corollary~(\ref{cor:term:proj}), $\pi_1(C)$ and
    $\pi_1(C')$ are terminal in $\PP_1$. Moreover, since $p
    \in\pi_1(C)$ and $p' \in \pi_1(C')$, $\pi_1(C)$ and $\pi_1(C')$
    are terminal configuration with different consensus. Moreover, by
    Prop.~\ref{prop:conj:ptrans}, $\pi_1(C) \ptrans{\pi_1(\vec{x})}
    \pi_1(C')$ which implies that $\PP_1$ does not satisfy
    \snosplitterminal. \qedhere
  \end{itemize}
\end{proof}

\begin{restatable}{proposition}{propConjLayered}\label{prop:conj:layered}
  If $\PP_1$ and $\PP_2$ satisfy \layeredtermination, then $\PP$
  satisfies \layeredtermination.
\end{restatable}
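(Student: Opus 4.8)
The plan is to lift the layered partitions of $\PP_1$ and $\PP_2$ to $\PP$ and concatenate them. Let $(A_1,\dots,A_m)$ and $(B_1,\dots,B_k)$ be ordered partitions of $T_1$ and $T_2$ witnessing \layeredtermination for $\PP_1$ and $\PP_2$. For each layer I would form the lifted layers $\widehat{A_j} \defeq \{t \ziptimes (q,r) : t \in A_j,\ (q,r)\in Q_2\times Q_2\} \subseteq S_1$ and $\widehat{B_j} \defeq \{(q,r)\ziptimes t : t\in B_j,\ (q,r)\in Q_1\times Q_1\}\subseteq S_2$, and take the ordered partition $(\widehat{A_1},\dots,\widehat{A_m},\widehat{B_1},\dots,\widehat{B_k})$ of $S$. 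Since each $A_j,B_j$ is nonempty and $Q_1,Q_2$ are nonempty, each lifted layer is nonempty, and their union is $S_1\cup S_2=S$; note that lifting commutes with union, so $\widehat{A_1}\cup\cdots\cup\widehat{A_m}=S_1$ and, writing $A_{<j}\defeq A_1\cup\cdots\cup A_{j-1}$, the lift of $A_{<j}$ equals $\widehat{A_1}\cup\cdots\cup\widehat{A_{j-1}}$. The only bookkeeping subtlety is that a silent transition of $T_1$ and one of $T_2$ may lift to the same silent transition of $S$; such overlaps can be broken by assigning each shared silent transition to a single layer arbitrarily, which affects neither condition~(a) nor~(b) since silent transitions never change a configuration nor the enabledness of any other transition.

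The two facts I would establish first are a step-projection property and a terminality equivalence. For step projection, since $\pi_i$ is linear on multisets (Proposition~\ref{prop:conj:distrib}(a)) and sends $\pre{s},\post{s}$ of a lifted transition to those of its $\PP_i$-preimage, a step $C\trans{s}C'$ with $s\in S_1$ yields $\pi_1(C)\trans{\pi_1(s)}\pi_1(C')$ while $\pi_2(C)=\pi_2(C')$, and symmetrically for $S_2$; in particular $C\trans{*}_{\widehat{A_j}}C'$ implies $\pi_1(C)\trans{*}_{A_j}\pi_1(C')$, whereas $C\trans{*}_{\widehat{B_j}}C'$ implies both $\pi_2(C)\trans{*}_{B_j}\pi_2(C')$ and $\pi_1(C)=\pi_1(C')$. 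For terminality, a direct consequence of Proposition~\ref{prop:conj:enabled} (a non-silent $t\in U\subseteq T_i$ is enabled at $\pi_i(C)$ iff some non-silent lift of $t$ is enabled at $C$) is the equivalence: for any $U\subseteq T_i$ with lift $\widehat U$, $C$ is terminal in $\PP[\widehat U]$ iff $\pi_i(C)$ is terminal in $\PP_i[U]$. With these in hand, condition~(a) is immediate: a non-silent execution of $\PP[\widehat{A_j}]$ would project, via step projection, to a non-silent execution of $\PP_1[A_j]$, contradicting condition~(a) for $\PP_1$; the case of $\widehat{B_j}$ is symmetric.

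For condition~(b) I would distinguish the two kinds of layers. If the layer is $\widehat{A_j}$, the strictly lower layers form exactly the lift of $A_{<j}$. Given $C\trans{*}_{\widehat{A_j}}C'$ with $C$ terminal in $\PP[\widehat{A_{<j}}]$, the terminality equivalence gives that $\pi_1(C)$ is terminal in $\PP_1[A_{<j}]$; step projection gives $\pi_1(C)\trans{*}_{A_j}\pi_1(C')$; condition~(b) for $\PP_1$ then yields that $\pi_1(C')$ is terminal in $\PP_1[A_{<j}]$; and the equivalence, read backwards, gives that $C'$ is terminal in $\PP[\widehat{A_{<j}}]$.

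The delicate case is a layer $\widehat{B_j}$, where the strictly lower layers are all of $S_1$ together with the lift of $B_{<j}\defeq B_1\cup\cdots\cup B_{j-1}$. Terminality in $\PP[S_1\cup\widehat{B_{<j}}]$ decomposes into two requirements: no non-silent $S_1$-transition is enabled (equivalently, by the equivalence with $U=T_1$, $\pi_1(C)$ is terminal in $\PP_1$), and no non-silent $\widehat{B_{<j}}$-transition is enabled (equivalently $\pi_2(C)$ is terminal in $\PP_2[B_{<j}]$). The point of ordering all $\PP_1$-layers before the $\PP_2$-layers is that a $\widehat{B_j}$-run leaves the first coordinate untouched, so $\pi_1(C')=\pi_1(C)$ stays terminal in $\PP_1$ and the $S_1$ part of terminality is preserved automatically; for the $\widehat{B_{<j}}$ part one argues exactly as in the $\widehat{A_j}$ case, now via $\pi_2$ and condition~(b) for $\PP_2$. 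I expect this last case to be the main obstacle: one must verify that running the ``second-coordinate'' layer cannot re-enable any transition of the entire first protocol $S_1$ lying below it, and the clean justification is precisely the coordinate-independence of the asynchronous product, packaged through Proposition~\ref{prop:conj:enabled} and the invariance of $\pi_1$ under $S_2$.
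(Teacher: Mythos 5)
Your proof is correct, and it rests on the same two pillars as the paper's (the step/enabledness correspondence of Proposition~\ref{prop:conj:enabled} and the fact that an $S_i$-step projects to a $T_i$-step while leaving the other coordinate's projection unchanged), but it uses a genuinely different decomposition. You \emph{concatenate} the lifted layers, producing $m+k$ layers with all of $\PP_1$'s layers below all of $\PP_2$'s; the paper instead \emph{zips} them, padding the shorter partition with empty sets and taking $Z_i$ to be the union of the lift of the $i$-th layer of $\PP_1$ and the lift of the $i$-th layer of $\PP_2$, for $\max(m,k)$ layers in total. The trade-off shows up in condition~(b): in your ordering, a $\widehat{B_j}$-run must be shown not to re-enable \emph{any} transition of $S_1$, which forces you to invoke the exact invariance $\pi_1(C')=\pi_1(C)$ under $S_2$-runs --- you correctly identify this as the delicate point and discharge it; in the paper's ordering, a mixed $Z_i$-run is handled uniformly by projecting onto whichever coordinate the offending lower-layer transition lives in and applying condition~(b) of that component, with the other coordinate's transitions projecting to the identity. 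Your version yields more layers but a conceptually cleaner split into ``first-protocol phase'' and ``second-protocol phase''; the paper's yields fewer layers and a single symmetric case analysis. Two small bookkeeping points in your write-up are worth keeping: the explicit treatment of shared silent lifts (needed for disjointness of the ordered partition, and glossed over in the paper), and the remark that the terminality equivalence must be read in both directions in the condition~(b) argument.
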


\begin{proof}
  Let $X_1, X_2, \ldots, X_m$ and $Y_1, Y_2, \ldots, Y_n$ be ordered
  partitions respectively for \layeredtermination in $\PP_1$ and
  $\PP_2$. We may assume without loss of generality that $m \geq
  n$. For every $n < i \leq m$, we define $Y_i \defeq \emptyset$.

  For every $i \in [m]$, we let
  \begin{alignat*}{2}
    Z_i\ & \defeq\ && \{t \ziptimes r : t \in X_i, r \in Q_2 \times Q_2\}
    \cup {} \\
    &&& \{r \ziptimes t : t
    \in Y_i, r \in Q_1 \times Q_1\}
  \end{alignat*}
  We claim that $Z_1, Z_2, \ldots, Z_m$ is an ordered partition for
  \layeredtermination in $\PP$. Let $i \in [m]$. Let us show that
  every execution of $\PP[Z_i]$ is silent. Suppose for the sake of
  contradiction that there exist $C_0, C_1, \ldots \in \pop{Q}$ and
  $t_0, t_1, \ldots \in Z_i$ such that $C_0 \trans{t_0} C_1
  \trans{t_1} \cdots$ is non silent. There exists $j \in \{1, 2\}$
  such that infinitely many non silent transitions $t_i$ belong to
  $S_j$. Let $i_0 < i_1 < \cdots$ be all indices such that $t_{i_k}
  \in S_j$. We have
  $$\pi_j(C_{i_0}) \trans{\pi_j(t_{i_0})} \pi_j(C_{i_1})
  \trans{\pi_j(t_{i_1})} \cdots$$ which is an infinite non silent
  execution of $\PP_1[X_i]$ or $\PP_2[Y_i]$ depending on $j$. This is
  a contradiction.

  Let $W \defeq (Z_1 \cup \cdots \cup Z_{i-1})$. Let us now prove that
  $\PP[Z_i]$ is $W$-dead. For the sake of contradiction, assume it is
  not. There exist $C, C' \in \pop{Q}$, $w \in Z_i^+$ and $t \in W$
  such that $C$ is $W$-dead, $C \trans{w} C'$ and $t$ is enabled at
  $C'$. We have $t \in S_j$ for some $j \in \{1, 2\}$. We may assume
  without loss of generality that $j = 1$. Since $C$ is $W$-dead,
  $\pi_j(C)$ is $(X_1 \cup \cdots \cup X_{i-1})$-dead. But then,
  $\pi_1(C) \trans{*} \pi_1(C')$ and $t \in X_1 \cup \cdots \cup
  X_{i-1}$ is enabled at $C'$ which is a contradiction.
\end{proof}

\begin{restatable}{corollary}{corConjWSSS}\label{cor:conj:wsss}
  If $\PP_1$ and $\PP_2$ belong to \WSSS, then $\PP$ belongs to \WSSS
  and is correct.
\end{restatable}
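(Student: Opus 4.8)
The plan is to separate the claim into its two halves---membership in \WSSS and correctness---and dispose of the first immediately. Since $\PP_1,\PP_2\in\WSSS$, both satisfy \layeredtermination and \snosplitterminal, so Proposition~\ref{prop:conj:layered} gives \layeredtermination for $\PP$ and Lemma~\ref{lem:conj:split} gives \snosplitterminal for $\PP$; hence $\PP\in\WSSS$. As $\WSSS\subseteq\WS$, this already shows $\PP$ is well-specified and therefore computes some predicate $\varphi$. What remains is to identify $\varphi$ with $\varphi_1\land\varphi_2$, i.e.\ to prove correctness.

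For the correctness part I would fix an input $X$ and analyse the terminal consensus configuration to which $\PP$ stabilizes. First I would record the elementary fact that projection commutes with the input map, $\pi_i(I(X))=I_i(X)$ for $i\in\{1,2\}$; this is a direct computation from $I(\sigma)=(I_1(\sigma),I_2(\sigma))$ and the definition of $\pi_i$. Next, because $\PP$ is silent and well-specified, any fair execution from $I(X)$ reaches a terminal consensus configuration $C$ with $O(C)=\varphi(X)$, so it suffices to show $O(C)=\varphi_1(X)\land\varphi_2(X)$.

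The key technical step is a projection lemma for \emph{genuine} reachability: if $I(X)\trans{*}C$ in $\PP$, then $I_i(X)\trans{*}\pi_i(C)$ in $\PP_i$. I would prove this by induction along a firing sequence witnessing $I(X)\trans{*}C$, discarding the transitions of $S_{3-i}$ (which leave the $i$-th coordinate of every agent, and hence $\pi_i$, unchanged) and projecting each $s\in S_i$ to $\pi_i(s)\in T_i$. Enabledness is preserved by Proposition~\ref{prop:conj:enabled}, and each single step lands in the correct target configuration by Proposition~\ref{prop:conj:distrib}. Combined with Corollary~\ref{cor:term:proj}, which gives that $\pi_1(C)$ and $\pi_2(C)$ are terminal in $\PP_1$ and $\PP_2$, this exhibits $\pi_i(C)$ as a terminal configuration reachable from $I_i(X)$.

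Finally I would invoke the correctness of $\PP_1$ and $\PP_2$. Each $\PP_i\in\WSS$ satisfies \nosplitterminal and computes $\varphi_i$, so every terminal configuration reachable from $I_i(X)$ is a consensus configuration with output $\varphi_i(X)$; in particular $\pi_i(C)$ is such, whence $O_i(q)=\varphi_i(X)$ for every $q\in\supp{\pi_i(C)}$. Choosing any $(p,q)\in\supp{C}$, we have $p\in\supp{\pi_1(C)}$ and $q\in\supp{\pi_2(C)}$, so $O(C)=O(p,q)=O_1(p)\land O_2(q)=\varphi_1(X)\land\varphi_2(X)$, as required. The main obstacle is precisely the genuine-reachability projection lemma: although the supporting facts (Propositions~\ref{prop:conj:enabled} and~\ref{prop:conj:distrib}) are in place, one must check carefully that discarding the $S_{3-i}$ transitions yields a valid, step-by-step enabled firing sequence in $\PP_i$---actual reachability, not merely the flow-equation-level statement of Proposition~\ref{prop:conj:ptrans}, is what is needed to apply \nosplitterminal.
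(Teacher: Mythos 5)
Your argument is correct, and the first half (membership in \WSSS{} via Lemma~\ref{lem:conj:split} and Proposition~\ref{prop:conj:layered}) coincides with the paper's. For correctness, however, you take a genuinely different route. The paper avoids any new reachability lemma: it observes that $C \trans{*} C'$ implies $C \ptrans{} C'$, applies the already-established Proposition~\ref{prop:conj:ptrans} to get $\pi_j(I(w)) \ptrans{} \pi_j(C')$, notes $\pi_j(I(w)) = I_j(w)$, and then invokes \snosplitterminal{} of $\PP_j$ (together with Corollary~\ref{cor:term:proj}) to conclude that the terminal configuration $\pi_j(C')$ is a consensus configuration with the same output as the one $\PP_j$ actually stabilizes to. You instead introduce a projection lemma for \emph{genuine} reachability --- if $I(X) \trans{*} C$ in $\PP$ then $I_i(X) \trans{*} \pi_i(C)$ in $\PP_i$ --- and then only need the weaker property \nosplitterminal{} of the components. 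Your lemma is true and in fact easier to prove than Proposition~\ref{prop:conj:ptrans}: a step by $s = t \ziptimes (q,r) \in S_1$ satisfies $\pi_1(C) \trans{t} \pi_1(C')$ directly (enabledness and the target configuration follow from the definition of $\pi_1$, without needing Propositions~\ref{prop:conj:enabled} or~\ref{prop:conj:distrib}), while a step by $s \in S_2$ leaves $\pi_1$ unchanged; no trap or siphon bookkeeping is required. What each approach buys: the paper's version reuses machinery it had to build anyway for Lemma~\ref{lem:conj:split}, so the corollary's proof is three lines; yours requires stating and proving one extra (elementary) lemma but rests on a weaker hypothesis about the components and on the more intuitive notion of actual reachability. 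Both are sound; just make sure that if you write yours up you actually prove the genuine-reachability projection lemma rather than citing Proposition~\ref{prop:conj:ptrans} for it, since the latter only speaks about potential reachability.
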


\begin{proof}
  By Lemma~\ref{lem:conj:split} and Prop.~\ref{prop:conj:layered},
  $\PP$ belongs to \WSSS. Let $w \in \pop{\Sigma}$, $C \defeq I(w)$,
  $C_1 \defeq I_1(w)$ and $C_2 \defeq I_2(w)$. Note that all three
  protocols are well-specified since they belong to \WSSS. Therefore,
  there exist terminal consensus configurations $C' \in \pop{Q}$,
  $C_1' \in \pop{Q_1}$ and $C_2' \in \pop{Q_2}$ such that $C \trans{*}
  C'$, $C_1 \trans{*} C_1'$ and $C_2 \trans{*} C_2'$.

  We must prove that $O(C') = O_1(C_1') \land O_2(C_2')$. Let $j \in
  \{1, 2\}$. Since $C \trans{*} C'$, we have $C \ptrans{} C'$. By
  Prop.~\ref{prop:conj:ptrans}, $\pi_j(C) \ptrans{} \pi_j(C')$. By
  definition of $I$, we have $C_j = \pi_j(C)$. Therefore, $C_j
  \ptrans{} \pi_j(C')$. Moreover, by Corollary~\ref{cor:term:proj},
  $\pi_j(C')$ is terminal in $\PP_j$. Since $\PP_j$ belongs to \WSSS,
  $\pi_j(C')$ is a consensus configuration such that $O_j(\pi_j(C')) =
  O_j(C_j')$. Altogether, we obtain
  \begin{align*}
    O(C') &= O_1(\pi_1(C')) \land O_2(\pi_2(C')) && \text{(by def. of $O$)} \\
          &= O_1(C_1') \land O_2(C_2'). &&\qedhere
  \end{align*}
\end{proof}

\section{Full set of constraints for Section~\ref{sec:experimental}}

We detail the constraints tested with the SMT solver in our
implementation.
Given a population protocol $\PP = (Q, T, \Sigma, I, O)$,
let $U \defeq \{ t \in T \colon \post{t} \neq \pre{t} \}$ be
the set of non silent transitions.

\subsection{Constraints for \layeredtermination}

For a given number of layers $k \in [1,|T|]$, the constraints for
\layeredtermination use variables $\vec{y}_i \colon Q \rightarrow \N$
for each $i \in [1,k]$ and $\vec{b} \colon T \rightarrow \N$. Vector
$\vec{b}$ assigns transitions to layers.

For a pair of transitions $t,u$, define
$U'(t, u) \defeq \{ u' \in U \colon \pre{u'} \le \pre{t} + (\pre{u} \mminus \post{t})\}$.
These sets are precomputed. The full set of constraints is:
\begin{align*}
    \bigwedge_{i \in [1,k]} \bigwedge_{t \in U} & \vec{b}(t) = i \rightarrow
        \sum_{q \in Q} \vec{y}_i(q) \cdot (\post{t}(q) - \pre{t}(q)) < 0 \\
    \bigwedge_{t \in T} & 1 \le \vec{b}(t) \le k \quad\land\quad
    \bigwedge_{t \in T} \bigwedge_{u \in U} \vec{b}(u) < \vec{b}(t) \rightarrow \bigvee_{u' \in U'(t,u)} \vec{b}(u) = \vec{b}(u')
\end{align*}

If there is a solution, then each layer $i \in [1,k]$ is given by
$T_i \defeq \{ t \in T \mid \vec{b}(t) = i \}$, and
each vector $\vec{y}_i$ assigns
a valuation $\vec{y}_i(C) \defeq \sum_{q \in Q} \vec{y}_i(q) \cdot C(q)$ to a given configuration $C$.
For any configuration $C$, we have $\vec{y}_i(C) \ge 0$ and
for any $C \trans{t} C'$ with $t \in T_i$, we have $\vec{y}_i(C) \ge \vec{y}_i(C')$ and
additionally $\vec{y}_i(C) > \vec{y}_i(C')$ if $t$ is non silent.
This is a proof that every execution of $\PP[T_i]$ is silent.

\subsection{Constraints for \snosplitterminal}

For a vector $\vec{c} : Q \rightarrow \N$, we define the following
constraints:
\begin{align*}
    \text{Initial}(\vec{c})  &\defeq \sum_{q \in I(\Sigma)} \vec{c}(q) \ge 2 \land \sum_{q \in Q \setminus I(\Sigma)} \vec{c}(q) = 0 &
    \text{True}(\vec{c})     &\defeq \sum_{q \in O^{-1}(1)} \vec{c}(q) > 0 \\
    \text{Terminal}(\vec{c}) &\defeq \bigwedge_{t \in U} \bigvee_{q \in \preset{t}} \vec{c}(q) < \pre{t}(q) &
    \text{False}(\vec{c})    &\defeq \sum_{q \in O^{-1}(0)} \vec{c}(q) > 0
\end{align*}

For sets of states $R,S \subseteq Q$ and vectors $\vec{c}, \vec{c'} :
Q \rightarrow \N$, and $\vec{x} : T \rightarrow \N$, we define the
following constraints:
\begin{align*}
    \text{FlowEquation}(\vec{c}, \vec{c'}, \vec{x})  &\defeq
        \bigwedge_{q \in Q} \vec{c'}(q) = \vec{c}(q) + \sum_{t \in T} \vec{x}(t) \cdot (\post{t}(q) - \pre{t}(q)) \\
    \text{UTrap}(R, \vec{c}, \vec{c'}, \vec{x})  &\defeq
        \sum_{t \in \preset{R}} \vec{x}(t) > 0 \land \sum_{t \in \postset{R} \setminus \preset{R}} \vec{x}(t) = 0 \rightarrow
        \sum_{q \in R} \vec{c'}(q) > 0 \\
    \text{USiphon}(S, \vec{c}, \vec{c'}, \vec{x})  &\defeq
        \sum_{t \in \postset{S}} \vec{x}(t) > 0 \land \sum_{t \in \preset{S} \setminus \postset{S}} \vec{x}(t) = 0 \rightarrow
        \sum_{q \in S} \vec{c}(q) > 0
\end{align*}

The constraints for \snosplitterminal use the variables $\vec{c}_0, \vec{c}_1,\vec{c}_2 \colon Q \rightarrow \N$
and $\vec{x}_1, \vec{x}_2 \colon T \rightarrow \N$.
With given sets $\mathcal{R}$ of $U$-traps and $\mathcal{S}$ of $U$-siphons, the constraints are as follows:

\begin{align*}
    & \text{FlowEquation}(\vec{c}_0, \vec{c}_1, \vec{x}_1) \quad\land\quad \text{FlowEquation}(\vec{c}_0, \vec{c}_2, \vec{x}_2) \\
    & \text{Initial}(\vec{c}_0) \quad\land\quad \text{Terminal}(\vec{c}_1) \quad\land\quad \text{Terminal}(\vec{c}_2) \quad\land\quad \text{True}(\vec{c}_1) \quad\land\quad \text{False}(\vec{c}_2) \\
    & \bigwedge_{R \in \mathcal{R}}
      \text{UTrap}(R, \vec{c}_0, \vec{c}_1, \vec{x}_1) \quad\land\quad \text{UTrap}(R, \vec{c}_0, \vec{c}_2, \vec{x}_2) \\
    & \bigwedge_{S \in \mathcal{S}}
      \text{USiphon}(S, \vec{c}_0, \vec{c}_1, \vec{x}_1) \quad\land\quad \text{USiphon}(S, \vec{c}_0, \vec{c}_2, \vec{x}_2)
\end{align*}

\end{document}